\def\@abssec#1{\vspace{.05in}\footnotesize \parindent .2in 
{\bf #1. }\ignorespaces} 
\newtheorem{theorem}{Theorem}[section]
\newtheorem{lemma}[theorem]{Lemma}
\newtheorem{remark}[theorem]{Remark}
\def \Rm {\mathbb R}
\def \Nm {\mathbb N}
\def \Cm {\mathbb C}
\newcommand{\eps}{\varepsilon}
\newcommand{\ds}{\displaystyle}
\newcommand{\calC}{\mathcal C}
\newcommand{\calG}{\mathcal G}
\newcommand{\calL}{\mathcal L}
\newcommand{\calI}{\mathcal I}
\newcommand{\calE}{\mathcal E}
\newcommand{\Tr}{\textnormal{Tr}}
\newcommand{\calR}{\mathcal R}
\newcommand{\calJ}{\mathcal J}
\newcommand{\calA}{\mathcal A}
\newcommand{\calM}{\mathcal M}
\newcommand{\calT}{\mathcal T}
\def\fref#1{{\rm (\ref{#1})}}
\newcommand{\cout}[1]{}
\newcommand{\be}{\begin{equation}}
\newcommand{\ee}{\end{equation}}
\newcommand{\bea}{\begin{eqnarray}}
\newcommand{\eea}{\end{eqnarray}}
\newcommand{\bee}{\begin{eqnarray*}}
\newcommand{\eee}{\end{eqnarray*}}
\newcommand{\bal}{\begin{align*}}
                    \newcommand{\eal}{\end{align*}}
\def\Hp{{H}^1_{\textrm{per}}}
\def\H2p{{H}^2_{\textrm{per}}}
\def\Hmp{{H}^{-1}_{\textrm{per}}}
\begin{document}
{\title{On the minimization of quantum entropies under local constraints}}

 \author{Romain  Duboscq \footnote{Romain.Duboscq@math.univ-tlse.fr}}
 \affil{Institut de Math\'ematiques de Toulouse ; UMR5219\\Universit\'e de Toulouse ; CNRS\\INSA, F-31077 Toulouse, France}
 \author{Olivier Pinaud \footnote{pinaud@math.colostate.edu}}
 \affil{Department of Mathematics, Colorado State University\\ Fort Collins CO, 80523}

\maketitle

\begin{abstract}
This work is concerned with the minimization of quantum entropies under local constraints of density, current, and energy. The problem arises in the work of Degond and Ringhofer about the derivation of quantum hydrodynamical models from first principles, and is an adaptation to the quantum setting of the moment closure strategy by entropy minimization encountered in kinetic equations. The main mathematical difficulty is the lack of compactness needed to recover the energy constraint. We circumvent this issue by a monotonicity argument involving energy, temperature and entropy, that is inspired by some thermodynamical considerations.
\end{abstract}

\section{Introduction}
This work is motivated by a series of papers by Degond and Ringhofer about the derivation of quantum hydrodynamical models from first principles. In \cite{DR}, their main idea is to transpose to the quantum setting the entropy closure strategy that Levermore used for kinetic equations \cite{levermore}. Degond and Ringhofer starting point is the quantum Liouville-BGK equation of the form
  \be \label{liouville}
i \hbar \partial_t \varrho=[H, \varrho]+i \hbar Q(\varrho),
\ee
where $\varrho$ is the density operator (a self-adjoint nonnegative trace class operator), $H$ is a given Hamiltonian, $[\cdot,\cdot]$ denotes the commutator between two operators, and $Q$ is BGK-type collision operator that reads, for some relaxation time $\tau$,
$$
Q(\varrho)=\frac{1}{\tau} \left( \varrho_{\rm{eq}}[\varrho]-\varrho \right).
$$
The corner stone of the model is the definition of the quantum statistical equilibrium $\varrho_{\rm{eq}}[\varrho]$, that is obtained by minimizing a quantum entropy under constraints. This is the quantum equivalent
of Levermore's approach of minimization of a classical entropy under constraints of density, velocity, and energy. The well-posedness of the classical minimization problem was addressed in \cite{junk2}, and our main motivation here is to investigate that of the quantum minimization problem. We will focus on the von Neumann entropy,  defined by
$$
S(\varrho)=\Tr \big(s(\varrho)\big), \qquad s(x) = x \log  x -  x,
$$
for a density operator $\varrho$. In \cite{DR}, $\varrho_{\rm{eq}}[\varrho]$ is formally introduced as the unique minimizer of $S$ under \textit{local} constraints of density, current, and energy. The latter are defined as follows (we give further an equivalent definition better suited for the mathematical analysis): to any density operator $\varrho$, we can associate a Wigner function $W[\varrho](x,p)$, see e.g. \cite{LP}, so that the particle density $n[\varrho]$, the current density $n[\varrho] u[\varrho]$, and the energy density $w[\varrho]$ of $\varrho$ are given by similar formulas as in the classical picture,  
$$
\left\{
 \begin{array}{l}
\ds n[\varrho](x)=\int W[\varrho](x,p) dp\\
\ds n[\varrho](x) u[\varrho](x)=\int p W[\varrho](x,p) dp\\
\ds w[\varrho](x)=\frac{1}{2}\int |p|^2 W[\varrho](x,p) dp.
\end{array}
\right.
$$
In the statistical physics terminology, fixing the density, current, and energy amounts to consider equilibria in the microcanonical ensemble. 

Degond and Ringhofer theory raises many interesting mathematical questions, and, to the best of our knowledge, only a few have been answered so far. In \cite{mehats2011problem}, it was proved that the free energy $F_T(\varrho)$, defined (formally) as, for some $T>0$,
$$
F_T(\varrho)= \Tr \big( H \varrho\big) + T \Tr \big( s(\varrho)\big),
$$
where $s$ is the Boltzmann entropy $s(x) = x \log  x -  x$, admits a unique minimizer under the first two local constraints defined on $\Rm^d$, that is $n[\varrho]=n_0$, $u[\varrho]=u_0$, for given functions $n_0(x)$ and $u_0(x)$ with $x \in \Rm^d$ and appropriate regularity assumptions. 

In \cite{MP-JSP}, the minimizer of $F_T$ with the density constraint $n[\varrho]=n_0$ only, was characterized in a one dimensional periodic domain. In the case of the Boltzmann entropy, the minimizer has the form $\varrho=\exp(- (H+A)/T)$ (a ``Quantum Maxwellian''), where $A(x)$ is a function, the so-called chemical potential, i.e. the Lagrange parameter associated with the local density constraint. Under minimal assumptions on $n_0$, it is shown in \cite{MP-JSP} that the potential $A$ belongs to $H^{-1}$ (the Sobolev space), which is sufficient to define the Hamiltonian $H+A$ in the sense of quadratic forms in 1D, but poses problems in higher dimensions. The characterization of the minimizer of $F_T$ as a quantum Maxwellian in dimensions greater than one is then still an open problem. 

In \cite{mehats2017quantum} and still in a one dimensional periodic setting, it was shown that the Liouville-BGK equation \fref{liouville}, with equilibrium defined as the unique minimizer of $F_T$ under the local density constraint only, admits a solution that converges for long time to the unique minimizer of $F_T$ under a \textit{global} density constraint, i.e. the $L^1$ norm of $n[\varrho]$ and not $n[\varrho]$ is prescribed.

The question of existence and uniqueness of a minimizer of $S$ under the constraints $n[\varrho]=n_0$, $u[\varrho]=u_0$, and $w[\varrho]=w_0$, for $n_0$, $u_0$, $w_0$ given functions, has been open since the formal derivations of Degond and Ringhofer. We provide in this paper a first result in a one dimensional setting, and show that $S$ admits a unique minimizer satisfying the constraints. We are limited to 1D since it is not possible at the present time to characterize the minimizer as a quantum Maxwellian in dimensions greater than one. We nevertheless believe our method is sufficiently general to be used in higher dimensions when a proper construction of the quantum Maxwellian is available. Note that the minimization problem can be recast as an inverse problem: given appropriate $n_0$, $u_0$, $w_0$, what is the most likely equilibrium density operator (most likely in the sense that it minimizes a given entropy)  that yields such local density, current, and energy?

The main mathematical difficulty in the minimization of $S$ is to handle the energy constraint. It is indeed direct to show that minimizing sequences $(\varrho_n)_{n \in \Nm}$ satisfy in the limit $n \to \infty$ the density and current constraints, see \cite{mehats2011problem}, but there is not sufficient compactness to recover the energy constraint. More explicitly, we know that $w[\varrho_n]$ converges weakly in $L^1$ some $w[\varrho]$, but we cannot conclude that $w[\varrho]=w_0$. A standard technique such as the concentration-compactness principle \cite{Lions1984} does not seem to apply here since it typically requires some uniform estimates on the sequence $(w[\varrho_n])_{n \in \Nm}$ in addition to the one imposed by the constraint, see e.g. the example provided in \cite[Chapter 4]{evansweak}, while we only know in our problem that the sequence is uniformly bounded in $L^1$. Hence, the recovery of compactness has to come entirely from the properties of the entropy $S$ since no better uniform estimates are available.

Our method goes schematically as follows, and is inspired by the physical principle that, in the canonical ensemble, the (mathematical) entropy decreases with the temperature. Since there is no explicitly defined temperature at this stage in the minimization problem with constraints $n_0$, $u_0$ and $w_0$ (there is a kinetic energy though, so we can expect some implictly defined temperature via the energy), we will artificially introduce it by considering the free energy $F_T$, that we will minimize under the local constraints $n_0$ and $u_0$ for all $T>0$. Having in mind the elementary monotonic relation $E=3/2 k_B T$ between the kinetic energy $E$ and the temperature $T$ of a perfect gas in 3D ($k_B$ is the Boltzmann constant), we will show, and this is the core argument of the proof, that the kinetic energy of the constrained minimizer $\varrho_{T,n_0,u_0}$ of $F_T$ is a strictly increasing function of the temperature. At least from a mathematical point of view, this is by no means a straightforward result since $\varrho_{T,n_0,u_0}$ is only defined via an intricate implicit relation and not explicitly. From the strict increase of the kinetic energy, a standard calculus of variations argument then shows that the entropy of $\varrho_{T,n_0,u_0}$ is strictly decreasing as $T$ increases. This monotonic relation between entropy and temperature subsequently allows us to minimize $S$ under the \textit{local} constraints $n_0$, $u_0$, and a \textit{global} energy constraint. The local energy constraint is then recovered by an argument of the type weak convergence plus convergence of the norm in the space of trace class operators implies strong convergence, together with strict decrease of the entropy.

The article is structured as follows: the functional setting is introduced in Section \ref{prelim}. We work in a one-dimensional periodic domain, and our proofs carries over directly to a bounded 1D domain with Neumann boundary conditions. Our main result and an outline of its proof are given in Section \ref{main}. Section \ref{secproofs} is devoted to the proof the main theorem, and some technical lemmas are stated in an Appendix. \\

\noindent \textbf{Acknowledgment.} OP's work is funded by NSF CAREER grant DMS-1452349.

\section{Preliminaries} \label{prelim} 

\paragraph{Notation.} Our domain $\Omega$ is the 1-torus $[0,1]$. We will denote by $L^r(\Omega)$ , $r\in [1,\infty]$, the usual Lebesgue spaces of complex-valued functions, and by $W^{k,r}(\Omega)$ the standard Sobolev spaces. We introduce as well $H^k(\Omega)=W^{k,2}(\Omega)$, and $(\cdot,\cdot)$ for the Hermitian product on $L^2(\Omega)$ with the convention $(f,g)=\int_\Omega \overline{f} g dx$. We will use the notations $\nabla=d/dx$ and $\Delta=d^2/dx^2$ for brevity. The free Hamiltonian $-\frac{1}{2}\Delta$ is denoted by $H_0$, with domain
$$
\H2p=\left\{u\in H^2(\Omega):\,u(0)=u(1),\,\nabla u(0)=\nabla u(1)\right\}.
$$
We denote by $\Hp$ the space of $H^1(\Omega)$ functions $u$ that satisfy $u(0)=u(1)$. Its dual space is $\Hmp$. Moreover, $\calL(L^2(\Omega))$ is the space of bounded operators on $L^2(\Omega)$, $\calJ_1 \equiv \calJ_1(L^2(\Omega))$ is the space of trace class operators on $L^2(\Omega)$,  and $\calJ_2$ the space of Hilbert-Schmidt operators on $L^2(\Omega)$. With $\Tr(\cdot)$ the operator trace, we will extensively use the facts that, for the duality product $(A,B) \to \Tr (A^* B)$, $\calJ_1$ is the dual of the space of compact operators on $L^2(\Omega)$, and that $\calL(L^2(\Omega))$ is the dual of $\calJ_1$, see \cite{RS-80-I}.
In the sequel, we will refer to a density operator as a nonnegative, trace class, self-adjoint operator on $L^2(\Omega)$. For $|\varrho|=\sqrt{\varrho^* \varrho}$, we introduce the following space:
$$\calE=\left\{\varrho\in \calJ_1:\, \overline{\sqrt{H_0}|\varrho|\sqrt{H_0}}\in \calJ_1\right\},$$
where $\overline{\sqrt{H_0}|\varrho|\sqrt{H_0}}$ denotes the extension of the operator $\sqrt{H_0}|\varrho|\sqrt{H_0}$ to $L^2(\Omega)$. We will drop the extension sign in the sequel to ease notation. The space $\calE$ is a Banach space when endowed with the norm
$$\|\varrho\|_{\calE}=\Tr \big(|\varrho| \big)+\Tr\big(\sqrt{H_0}|\varrho|\sqrt{H_0}\big).$$
Finally, the energy space is the following closed convex subspace of $\calE$:
$$\calE^+=\left\{\varrho\in \calE:\, \varrho\geq 0\right\}.$$

\paragraph{The constraints.} The first three moments of a density operator $\varrho$ are defined as follows: for any smooth function $\varphi$ on $\Omega$, and identifying a function with its associated multiplication operator, the (local) density $n[\varrho]$, current $n[\varrho] u[\varrho]$ and energy $w[\varrho]$ of $\varrho$ are uniquely defined by duality by 

\bee
\int_\Omega n[\varrho] \varphi dx &=& \Tr \big( \varrho \varphi \big)\label{const1}\\
\int_\Omega n[\varrho]u[\varrho] \varphi dx&=& -i \Tr \left(\varrho \left(\varphi \nabla+\frac{1}{2} \nabla \varphi \right) \right) \label{const2}\\
\int_\Omega w[\varrho] \varphi dx&=&  -\frac{1}{2}\Tr \left(\varrho \left( \nabla  \varphi \nabla +\frac{1}{4} \Delta \varphi \right) \right) \label{const3}. 
\eee
We will recast the energy constraint as follows. Denote by $(\rho_p,\phi_p)_{p \in \Nm}$ the spectral elements of a density operator $\varrho$ (eigenvalues counted with multiplicity), and let
$$
k[\varrho]=\frac{1}{2}\sum_{p \in \Nm} \rho_p |\nabla \phi_p|^2.
$$
A short calculation shows that
$$
w[\varrho]=k[\varrho]-\frac{1}{8} \Delta n[\varrho].
$$
Hence, since $n[\varrho]$ is prescribed, we can equivalently set a constraint on $w[\varrho]$ or on $k[\varrho]$, and we choose $k[\varrho]$ since it is nonnegative. Note also the classical (formal) relations
\be \label{defcurrent}
n[\varrho ]= \sum_{p \in \Nm} \rho_p |\phi_p|^2, \qquad  n[\varrho ] u[\varrho ]=\Im \left( \sum_{p \in \Nm} \rho_p \phi_p^* \nabla \phi_p \right).
\ee
For $\varrho \in \calE^+$, we denote the kinetic energy of $\varrho$ by
\begin{equation*}
E(\varrho) = \Tr \big(\sqrt{H_0} \varrho \sqrt{H_0}\big).
\end{equation*}

We introduce below various admissible sets for the minimization problem. The first three impose local constraints on the density, current and energy, while the last one imposes a global constraint on the energy:
\begin{align*}
\calA(n_0) &= \left\{ \varrho\in \calE^+:\; n[\varrho] = n_0 \right\}\\
\calA(n_0,u_0) &= \left\{ \varrho\in \calE^+:\; n[\varrho] = n_0, \; u[\varrho] = u_0\right\}\\
\calA(n_0,u_0,k_0) &= \left\{ \varrho\in\calE^+:\;  n[\varrho] = n_0, \; u[\varrho] = u_0, \; k[\varrho] = k_0\right\}\\
\calA_{g}(e_0) &= \left\{ \varrho\in\calE^+: \; E(\varrho)= e_0\right\}.
\end{align*}

\begin{remark} \label{rem1}We will use the following equivalence: let $\varrho$ be a density operator, then $E(\varrho)<\infty$ (and therefore $\varrho \in \calE^+$) if and only if $k[\varrho] \in L^1$ (in the sense that the series defining $k[\varrho]$ is absolutely convergent in $L^1$). This follows for instance from the observation that $E(\varrho)<\infty$ if and only if $\sqrt{H_0} \sqrt{\varrho} \in \calJ_2$ and from \cite[Theorem 6.22, item (g)]{RS-80-I}. We then have, for all $\varrho \in \calE^+$,
\be \label{relEk}
E(\varrho)=\|k[\varrho]\|_{L^1}.
\ee
\end{remark}

\begin{remark} \label{rem2} A simple calculation shows that, for any $\varrho \in \calE^+$, we have $2 k[\varrho]=-n[\nabla \varrho \nabla]$.
\end{remark}

Throughout the paper, $C$ will denote a generic constant that might differ from line to line.

\section{Main result} \label{main}

We state in this section our main result and give an outline of the proof. We introduce first the set of admissible constraints $\calM$ as
$$
\calM= \left\{ (n_0,u_0,k_0) \in \left(L^1(\Omega) \right)^3: \; n_0=n[\varrho],\, u_0=u[\varrho],\, k_0=k[\varrho],\, \textrm{for some } \varrho \in \calE^+ \right\}.
$$
The set $\calM$ is just the set of constraints $(n_0,u_0,k_0)$ that are the moments of at least one density operator in $\calE^+$. Its characterization in the kinetic case was adressed in \cite{junk} and is not completely straightforward. The question is still open in the quantum case and we expect the problem to be significantly harder than in the classical case. It is nevertheless direct to construct many $(n_0,u_0,k_0)$ in $\calM$: for any $V \in L^\infty(\Omega)$ real-valued, consider for instance $\sigma=f(H_0+V)$ for $f$ smooth, positive, and decreasing sufficiently fast at the infinity (e.g. $\int_0^\infty x^2 f(x) dx <\infty$), and equip $H_0+V$ with the domain $\H2p$; then, for $g(x)=\int_0^x u_0(y)dy, $ the moments of $\varrho=e^{ig }\sigma e^{-ig}$ are in $\calM$. Moreover, the density $n[\varrho]$ is in $\H2p$, $u[\varrho] \in L^2(\Omega)$, $k[\varrho] \in L^1(\Omega)$, $k[\varrho] \geq 0$, and the Krein-Rutman theorem shows that the ground state is strictly positive and as a consequence that $n[\varrho]$ is bounded below. 

Recalling that $S(\varrho)=\Tr(s(\varrho))$ with $s(x) = x \log  x -  x$, our main result is the 
\begin{theorem} \label{thmain} Suppose that $(n_0,u_0,k_0) \in \calM$, where $n_0 \in \H2p$ with $n_0>0$, $u_0 \in L^2(\Omega)$, and $k_0 \in L^1(\Omega)$. Then, the constrained minimization problem
  $$
  \min_{\calA(n_0,u_0,k_0)} S(\varrho)
  $$
  admits a unique solution.
\end{theorem}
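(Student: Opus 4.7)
The plan is to follow the outline given in Section \ref{main}, working via a one-parameter family of auxiliary minimizations. For each $T > 0$, the results of \cite{mehats2011problem} supply a unique minimizer $\varrho_T$ of the free energy $F_T(\varrho) := E(\varrho) + T S(\varrho)$ on $\calA(n_0,u_0)$; set $E(T) := E(\varrho_T)$ and $S(T) := S(\varrho_T)$. The sharp variational inequality
\begin{equation*}
S(\varrho) - S(T) + \tfrac{1}{T}\bigl(E(\varrho) - E(T)\bigr) \geq 0, \qquad \forall\,\varrho \in \calA(n_0,u_0),
\end{equation*}
with equality iff $\varrho = \varrho_T$, will be the workhorse of the argument.

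The cornerstone is to establish that $T \mapsto E(T)$ is continuous and strictly increasing, so that $T \mapsto S(T)$ is strictly decreasing. Writing the sharp inequality at $T_1 < T_2$ for the competitor $\varrho_{T_2}$, and vice versa, and adding yields the non-strict versions $(T_2-T_1)(S(T_1)-S(T_2)) \geq 0$ and thus $E(T_1) \leq E(T_2)$. Strictness, which is the main technical difficulty since $\varrho_T$ is only implicitly characterized, I would establish by contradiction: assuming $\varrho_{T_1}=\varrho_{T_2}$ and exploiting strict convexity of $s$ together with the uniqueness of the free-energy minimizer. Continuity would follow from a stability argument for the constrained free-energy minimization. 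Combined with the nonemptiness of $\calA(n_0,u_0,k_0)$ (supplied by $(n_0,u_0,k_0)\in\calM$), this places $e_0 := \|k_0\|_{L^1(\Omega)}$ in the range of $E(\cdot)$ and selects a unique $T_\star$ with $E(T_\star) = e_0$.

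The sharp inequality at $T_\star$ then identifies $\varrho_{T_\star}$ as the unique minimizer of $S$ on the relaxed convex set $\calA(n_0,u_0)\cap\calA_{g}(e_0)$. Since $\calA(n_0,u_0,k_0)\subset\calA(n_0,u_0)\cap\calA_{g}(e_0)$ by Remark \ref{rem1}, we obtain $M := \inf_{\calA(n_0,u_0,k_0)} S \geq S(T_\star)$.

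The main obstacle is to recover the local energy constraint in the limit. Taking a minimizing sequence $(\varrho_n)\subset\calA(n_0,u_0,k_0)$, I extract a weak limit $\varrho_\infty\in\calE^+$; the density and current constraints pass to the limit as in \cite{mehats2011problem}, and lower semicontinuity gives $E(\varrho_\infty)\leq e_0$ and $S(\varrho_\infty)\leq M$. The crucial point is to force the equality $E(\varrho_\infty)=e_0$, obtained from the strict decrease of $T\mapsto S(T)$ by a contradiction argument: if $E(\varrho_\infty)<e_0$, pick $T'<T_\star$ with $E(T')=E(\varrho_\infty)$ so that $S(T')>S(T_\star)$, and combine the sharp inequalities with the upper bound $S(\varrho_\infty)\leq M$ and the role of $\varrho_{T_\star}$ as the minimizer on the relaxed set to produce the contradiction. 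Once $E(\varrho_\infty)=e_0$, the trace norms $\|\sqrt{H_0}\varrho_n\sqrt{H_0}\|_{\calJ_1}=e_0$ agree with the norm of the weak-$*$ limit, so Gr\"umm's theorem promotes the weak convergence of $\sqrt{H_0}\varrho_n\sqrt{H_0}$ to strong $\calJ_1$-convergence; by Remark \ref{rem2}, $k[\varrho_n]\to k[\varrho_\infty]$ in $L^1(\Omega)$, forcing $k[\varrho_\infty]=k_0$ and exhibiting $\varrho_\infty$ as the sought minimizer. Uniqueness follows from strict convexity of $s$ on the convex set $\calA(n_0,u_0,k_0)$.
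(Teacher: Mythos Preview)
Your outline follows the same architecture as the paper's proof: introduce the one-parameter family $\varrho_T$ of free-energy minimizers, establish that $E(T)$ is strictly increasing and $S(T)$ strictly decreasing, solve the problem with the global energy constraint $\calA_g(e_0)$, and finally recover the local constraint via a contradiction plus Gr\"umm/Simon's theorem. Two substantive gaps separate your sketch from a proof.

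\textbf{Strict monotonicity of $E(T)$.} Your variational-inequality argument correctly yields the non-strict inequalities and, in the case $E(T_1)=E(T_2)$, the equality $\varrho_{T_1}=\varrho_{T_2}$. But ``strict convexity of $s$ together with uniqueness'' does \emph{not} rule this out: a single state can perfectly well be the unique minimizer of two strictly convex functionals. What closes the argument is the explicit quantum-Maxwellian characterization $\varrho_{T}=\exp(-(H_0+A_T)/T)$ from \cite{MP-JSP}; then $\varrho_{T_1}=\varrho_{T_2}$ forces $(H_0+A_1)/T_1=(H_0+A_2)/T_2$ as self-adjoint operators, and since $H_0$ is not a multiplication operator this yields $T_1=T_2$. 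This is exactly the content of the paper's Theorem~\ref{thm:NRJIncr} (Step~7), and the paper invests most of its technical effort --- the penalized problem, the implicit function theorem, the computation of $\partial_T E_{\eps,T}$ in Steps~3--6 --- in order to justify the surrounding calculus. Your sharp-inequality route to \emph{non-strict} monotonicity is indeed shorter than the paper's, but strictness still requires the Maxwellian form, which you do not invoke.

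\textbf{Continuity of $S$ versus lower semicontinuity.} You write ``lower semicontinuity gives \ldots\ $S(\varrho_\infty)\le M$''. The paper insists on the stronger statement $S(\varrho_\infty)=M$, obtained from Lemma~\ref{propentropie}(ii) (continuity of $S$ along sequences bounded in $\calE^+$ and convergent in $\calJ_1$), and explicitly flags this as ``an important ingredient of the proof''. Without the equality, your contradiction step does not close as written: from $S(\varrho_\infty)\ge S(T')>S(T_\star)$ and $S(\varrho_\infty)\le M$ together with $M\ge S(T_\star)$ you only get $S(T_\star)<S(\varrho_\infty)\le M$, which is perfectly consistent. The paper obtains the contradiction via $S(\varrho)=M$ combined with Lemma~\ref{lem:infmineq}; you should at least state the continuity and make the contradiction explicit rather than leave it as ``combine the sharp inequalities \ldots\ to produce the contradiction''.

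Finally, two smaller omissions: you do not treat the boundary case $e_0=\|k_0\|_{L^1}=m_0$ (where no $T_\star>0$ exists and the admissible set reduces to a single rank-one projector), and you do not justify that $e_0$ lies in the range of $T\mapsto E(T)$ --- the paper handles the lower end by computing $\lim_{T\to 0^+}E(T)=m_0$ in Theorem~\ref{thm:NRJIncr}.
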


The form of the minimizer is characterized formally in \cite{QHD-CMS}, and it reads
$$
\varrho=\exp(-H(A,B,C)),
$$
where, for some functions $(A,B,C)$ called respectively the generalized chemical potential, the generalized mean velocity, and the generalized temperature, we have
$$
H(A,B,C)=- \left(\nabla \cdot \left( \frac{1}{2C} \nabla \right)+\frac{1}{4} \Delta \left( \frac{1}{C}\right)\right)+i \left(\frac{B}{C} \cdot \nabla + \frac{1}{2} \left(\nabla \cdot \frac{B}{C} \right) \right)+A+\frac{|B|^2}{2C}.
$$
When the energy constraint is global, i.e. when $\| k[\varrho]\|_{L^1}$ is prescribed, then the Lagrange parameter $C$ is constant and the Hamiltonian takes the more familiar form
$$
H(A,B,C)=\frac{1}{2C}\left(i\nabla - B\right)^2 + A.
$$
The rigorous justification of the formulas above will be done in a future work. 

We only considered here the Boltzmann entropy since it encodes the main difficulties of the proof. The analysis carries over directly to the Fermi-Dirac entropy of the form $s(x)=x \log x+(1-x) \log (1-x)$, $x\in [0,1]$. More generally, the main properties of the entropy required for Theorem \ref{thmain} to hold true are the following, supposing that $\|n_0\|_{L^1} \leq 1$ for simplicity: $s \in \calC^0([0,1]) \cap \calC^1((0,1))$, $s$ is strictly convex, $(s')^{-1}$ is a strictly positive and strictly decreasing function (this is crucial for the monoticity of the kinetic energy) with $\int_0^\infty x^2 (s')^{-1}(x) dx <\infty$, $S(\varrho)$ is bounded below in $\calE^+$ and continuous for the weak-$*$ topology of $\calE^+$.

\paragraph{Outline of the proof.} We proceed as explained in the introduction. For $(\varrho_n)_{n \in \Nm}$ a minimizing sequence in $\calA(n_0,u_0,k_0)$, the main difficulty is to show that $k[\varrho_n]$ converges weakly in $L^1$ to $k_0$. There are, for this, several steps. The first one is to introduce the free energy $F_T$, defined by, for any $\varrho \in \calE^+$ and any $T>0$,
  $$F_T(\varrho)= E(\varrho)+T S(\varrho).
  $$
  We minimize $F_T$ under two local constraints $n_0$ and $u_0$, and prove the following theorem:
  \begin{theorem} \label{th2mom}Suppose that $n_0 \in \Hp$ with $n_0>0$, and that $u_0 \in L^2$. Then, for any $T>0$, the constrained minimization problem
  \be \label{pbm:mindenscour} 
  \min_{\calA(n_0,u_0)} F_T(\varrho)
  \ee
  admits a unique solution that reads
  $$
  \varrho_{T,n_0,u_0} =  e^{if} \varrho_{T,n_0} e^{-if}, \qquad f(x)=\int_0^x u_0(x) dx,
$$
where $\varrho_{T,n_0}$ is the unique solution to the problem
\begin{equation}\label{prob:densconst}
\min_{\calA(n_0)} F_T(\varrho).
\end{equation}
\end{theorem}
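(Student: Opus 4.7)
The strategy is to reduce (\ref{pbm:mindenscour}) to (\ref{prob:densconst}) via a gauge transformation. Introduce the unitary multiplication operator $U_f\psi = e^{if}\psi$ on $L^2(\Omega)$ and the conjugation $\Psi_f(\varrho) := U_f\varrho U_f^*$. Writing $\varrho = \sum_p \rho_p |\phi_p\rangle\langle\phi_p|$ and using $\nabla(e^{if}\phi_p) = iu_0 e^{if}\phi_p + e^{if}\nabla\phi_p$, a direct computation yields
\begin{align*}
n[\Psi_f(\varrho)] &= n[\varrho], \\
n[\Psi_f(\varrho)]\,u[\Psi_f(\varrho)] &= n[\varrho]\,u[\varrho] + u_0\, n[\varrho],\\
k[\Psi_f(\varrho)] &= k[\varrho] + u_0\,n[\varrho]\,u[\varrho] + \tfrac{1}{2}|u_0|^2 n[\varrho],\\
S(\Psi_f(\varrho)) &= S(\varrho),
\end{align*}
the last identity because unitary conjugation preserves the spectrum. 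The pointwise bound $|\nabla(e^{if}\phi_p)|^2 \leq 2|u_0|^2|\phi_p|^2 + 2|\nabla\phi_p|^2$, combined with $u_0\in L^2$ and $n_0\in\Hp\hookrightarrow L^\infty$, yields $k[\Psi_f(\varrho)]\in L^1$ whenever $k[\varrho]\in L^1$, so by Remark \ref{rem1} the map $\Psi_f$ sends $\calE^+$ into itself.

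Set $\calA_0(n_0):=\{\sigma\in\calA(n_0):u[\sigma]=0\}$. The identities above show that $\Psi_f$ restricts to a bijection $\calA_0(n_0)\to\calA(n_0,u_0)$ with inverse $\Psi_{-f}$, and that for $\sigma\in\calA_0(n_0)$,
$$F_T(\Psi_f(\sigma)) = F_T(\sigma) + \tfrac{1}{2}\int_\Omega |u_0|^2\,n_0\,dx.$$
Since this additive constant is independent of $\sigma$, minimizing $F_T$ over $\calA(n_0,u_0)$ is equivalent to minimizing $F_T$ over $\calA_0(n_0)$, with minimizers in one-to-one correspondence via $\Psi_f$.

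It remains to establish that the minimizer $\varrho_{T,n_0}$ of (\ref{prob:densconst}) exists, is unique, and actually lies in $\calA_0(n_0)$. Existence, uniqueness, and the characterization $\varrho_{T,n_0} = \exp(-(H_0+A)/T)$ with $A\in\Hmp$ real-valued are provided by \cite{MP-JSP}. Since $H_0+A$ is a real self-adjoint operator, its eigenfunctions can be chosen real, and hence $u[\varrho_{T,n_0}]=\Im(\sum_p\rho_p\,\phi_p^*\nabla\phi_p)=0$. Equivalently, were $u[\varrho_{T,n_0}]\neq 0$, applying $\Psi_{-g}$ with $g(x)=\int_0^x u[\varrho_{T,n_0}](y)\,dy$ would produce an element of $\calA(n_0)$ with the same entropy but kinetic energy strictly reduced by $\tfrac{1}{2}\int|u[\varrho_{T,n_0}]|^2 n_0\,dx>0$, contradicting minimality. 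Combining with the previous step, the unique minimizer of (\ref{pbm:mindenscour}) is $\varrho_{T,n_0,u_0} = e^{if}\varrho_{T,n_0}e^{-if}$.

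The main subtlety is that $e^{if}$ need not be periodic, so the transformed eigenfunctions need not belong to $\Hp$; this is handled by invoking the equivalent characterization $E(\varrho)=\|k[\varrho]\|_{L^1}$ of Remark \ref{rem1}, which enables verifying membership in $\calE^+$ through a direct $L^1$ estimate on $k[\Psi_f(\varrho)]$ without appealing to the quadratic-form domain of $H_0$.
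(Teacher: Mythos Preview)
Your proof is correct and follows essentially the same approach as the paper: reduce the two-constraint problem to the density-only problem \eqref{prob:densconst} via the gauge conjugation $\varrho\mapsto e^{if}\varrho e^{-if}$, use the resulting additive shift $F_T(e^{if}\sigma e^{-if})=F_T(\sigma)+\tfrac12\int n_0|u_0|^2$, and invoke the characterization $\varrho_{T,n_0}=\exp(-(H_0+A)/T)$ with real $A$ from \cite{MP-JSP} to conclude $u[\varrho_{T,n_0}]=0$. Your presentation is slightly more structured (you make the bijection $\calA_0(n_0)\to\calA(n_0,u_0)$ explicit and add a variational alternative for the vanishing current), and you flag the periodicity issue for $e^{if}$ that the paper treats identically via Remark~\ref{rem1}; the core argument is the same.
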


The proof of Theorem \ref{th2mom}, given in Section \ref{proofth2mom}, follows from a simple change of gauge and the fact that \fref{prob:densconst} is well-posed. The core to the proof of Theorem \ref{thmain}, and the most difficult part, is Theorem \ref{thm:NRJIncr} below.

\begin{theorem} \label{thm:NRJIncr} For any $T>0$, let $\varrho_{T,n_0,u_0}$ be the minimizer of Theorem \ref{th2mom} with the additional condition that $n_0 \in \H2p$, and  consider
$$
\mathsf{E}_T=  E(\varrho_{T,n_0,u_0}), \qquad \mathsf{S}_T=S(\varrho_{T,n_0,u_0}).
$$
Then, $\mathsf{E}_T$ and $-\mathsf{S}_T$ are strictly increasing continuous functions on $\Rm_+^*$ and
\be \label{limET}
\lim_{T \to 0^+} \mathsf{E}_T=\frac{1}{2}\left(\int_{\Omega} |\nabla \sqrt{n_0}|^2 dx+\int_{\Omega} n_0 |u_0|^2 dx \right):=m_0.
\ee
\end{theorem}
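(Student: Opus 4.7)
The plan is to reduce the theorem to the density-constrained minimizer $\varrho_{T,n_0}$ via the gauge transformation of Theorem \ref{th2mom}, then to establish monotonicity from the variational inequalities combined with the quantum Maxwellian form of the minimizer, and finally to obtain continuity from a compactness-plus-uniqueness argument.

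\medskip

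\noindent\textbf{Reduction to the density-only problem.} Since $\varrho_{T,n_0,u_0} = e^{if}\varrho_{T,n_0}e^{-if}$ by Theorem \ref{th2mom} and entropy is invariant under unitary conjugation, we immediately get $\mathsf{S}_T = S(\varrho_{T,n_0})$. Expanding $|\nabla(e^{if}\phi_p)|^2$ on the spectral elements $(\rho_p,\phi_p)$ of $\varrho_{T,n_0}$ yields
\[
\mathsf{E}_T = E(\varrho_{T,n_0}) + \tfrac12\int_\Omega n_0|u_0|^2\,dx + \int_\Omega u_0\,n[\varrho_{T,n_0}]\,u[\varrho_{T,n_0}]\,dx.
\]
The uniqueness of the minimizer in $\calA(n_0)$, combined with the invariance of \eqref{prob:densconst} under complex conjugation (which preserves the constraint $n[\varrho]=n_0$ as well as $E$ and $S$), forces the minimizer to admit a real eigenbasis and in particular $u[\varrho_{T,n_0}]=0$, so the cross term vanishes. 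It then suffices to analyze $E(\varrho_{T,n_0})$ and $S(\varrho_{T,n_0})$.

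\medskip

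\noindent\textbf{Monotonicity.} For $0<T_1<T_2$, writing the two variational inequalities $F_{T_i}(\varrho_{T_i,n_0}) \leq F_{T_i}(\varrho_{T_j,n_0})$ and combining them gives $S(\varrho_{T_1,n_0})\geq S(\varrho_{T_2,n_0})$ and $E(\varrho_{T_1,n_0})\leq E(\varrho_{T_2,n_0})$. The core difficulty, as signalled by the authors, is upgrading the kinetic-energy inequality to a strict one. Suppose $E(\varrho_{T_1,n_0})=E(\varrho_{T_2,n_0})$: the inequalities then collapse to equalities, forcing also $S(\varrho_{T_1,n_0})=S(\varrho_{T_2,n_0})$, so that $\varrho_{T_2,n_0}$ is also a minimizer of $F_{T_1}$, and uniqueness in \eqref{prob:densconst} yields $\varrho_{T_1,n_0}=\varrho_{T_2,n_0}$. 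Invoking the quantum Maxwellian characterization $\varrho_{T,n_0}=\exp(-(H_0+A_T)/T)$ established in \cite{MP-JSP}, equality of these positive self-adjoint operators implies equality of their logarithms, hence
\[
(T_2-T_1)\,H_0 = T_1 A_{T_2} - T_2 A_{T_1}.
\]
The left-hand side is a genuine second-order differential operator while the right-hand side is multiplication by a distribution in $H^{-1}$; testing the associated quadratic form against the Fourier modes $\psi_k(x)=e^{2\pi ikx}$, the left-hand side scales as $k^2$ while the right-hand side stays bounded as $|k|\to\infty$, which forces $T_1=T_2$, a contradiction. The strict decrease of $\mathsf{S}_T$ then follows directly: if $\mathsf{S}_{T_1}=\mathsf{S}_{T_2}$, the variational inequalities yield $\mathsf{E}_{T_1}=\mathsf{E}_{T_2}$, contradicting the strict monotonicity just obtained.

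\medskip

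\noindent\textbf{Continuity and zero-temperature limit.} The value function $G(T):=F_T(\varrho_{T,n_0})$ is concave on $\Rm_+^*$ as the infimum over $\calA(n_0)$ of affine maps $T\mapsto E(\varrho)+TS(\varrho)$, hence continuous. To lift this to continuity of $\mathsf{E}_T$ and $\mathsf{S}_T$ individually, I argue by compactness: for any $T_n\to T_0$, monotonicity bounds $E(\varrho_{T_n,n_0})$ and $S(\varrho_{T_n,n_0})$ near $T_0$, so up to extraction $\varrho_{T_n,n_0}\rightharpoonup^* \varrho^*$ in $\calE^+$; lower semicontinuity of $E$, continuity of $S$ on $\calE^+$ for the weak-$*$ topology, continuity of $G$, and uniqueness of the minimizer of $F_{T_0}$ force $\varrho^*=\varrho_{T_0,n_0}$, $S(\varrho_{T_n,n_0})\to\mathsf{S}_{T_0}$ and then $E(\varrho_{T_n,n_0})=G(T_n)-T_n S(\varrho_{T_n,n_0})\to \mathsf{E}_{T_0}$. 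For the limit as $T\to 0^+$, I test against the rank-one state $\sigma:=\|n_0\|_{L^1}|\phi_0\rangle\langle\phi_0|$ with $\phi_0:=\sqrt{n_0}/\|\sqrt{n_0}\|_{L^2}$, which belongs to $\calA(n_0)$ and satisfies $E(\sigma)=\tfrac12\int_\Omega|\nabla\sqrt{n_0}|^2\,dx$ and $S(\sigma)<\infty$. Combining the minimality $F_T(\varrho_{T,n_0})\leq F_T(\sigma)$ with the Hoffmann--Ostenhof (Bohm) inequality $E(\varrho)\geq \tfrac12\int_\Omega|\nabla\sqrt{n[\varrho]}|^2\,dx$ yields
\[
0\leq E(\varrho_{T,n_0})-\tfrac12\int_\Omega|\nabla\sqrt{n_0}|^2\,dx \leq T\,\bigl(S(\sigma)-S(\varrho_{T,n_0})\bigr).
\]
The monotonicity of $T\mapsto S(\varrho_{T,n_0})$ bounds the right-hand side by $T\bigl(S(\sigma)-S(\varrho_{T_0,n_0})\bigr)$ for any fixed $T_0>0$ and all $T\leq T_0$, and letting $T\to 0^+$ gives $E(\varrho_{T,n_0})\to\tfrac12\int_\Omega|\nabla\sqrt{n_0}|^2\,dx$, which together with the reduction formula produces \eqref{limET}.
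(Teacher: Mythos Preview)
Your proof is correct and takes a genuinely more elementary route than the paper's.

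The paper invests substantial machinery (its Steps 3--6) to obtain the non-strict monotonicity of $\mathsf{E}_T$: it introduces a penalized free energy $F_\eps$, applies the implicit function theorem in $\calE$ to differentiate the penalized minimizer $\varrho_\eps$ with respect to $T$, computes $\partial_T E(\varrho_\eps)$ via contour integrals and careful trace manipulations, and then passes to the limit $\eps\to 0$ using a uniform $L^2$ bound on the chemical potential. You obtain the same non-strict inequalities in one line from the pair of variational inequalities $F_{T_i}(\varrho_{T_i,n_0})\leq F_{T_i}(\varrho_{T_j,n_0})$---this is the standard convex-duality argument for a one-parameter family of minimizers, and it bypasses all of the paper's differentiability analysis. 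For strictness, the paper assumes $\mathsf{E}$ constant on an open interval, derives via the integrated identity \eqref{eq:relES0} that $\mathsf{S}'=0$ there, and only then concludes $\varrho_{u,n_0}=\varrho_{v,n_0}$; you instead get $\varrho_{T_1,n_0}=\varrho_{T_2,n_0}$ directly from the collapse of the two variational inequalities at just two points, and then reach the same Maxwellian contradiction $(T_2-T_1)H_0 = T_1 A_{T_2} - T_2 A_{T_1}$ by testing on Fourier modes. Your continuity and zero-temperature arguments match the paper's in spirit (compactness plus uniqueness, and the Cauchy--Schwarz lower bound \eqref{lowEn}); your explicit appeal to concavity of $T\mapsto\min_{\calA(n_0)}F_T$ and to the monotonicity of $\mathsf{S}_T$ to control $S(\sigma)-S(\varrho_{T,n_0})$ are mild streamlinings. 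What the paper's longer approach buys is the differential identity \eqref{eq:relES0} relating $\mathsf{E}_T$ and $\mathsf{S}_T$, which could be of independent interest; for the statement of Theorem~\ref{thm:NRJIncr} as written, however, your argument is complete and substantially shorter.
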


The proof of Theorem \ref{thm:NRJIncr}, given in Section \ref{proofInc}, is based on a calculation of $\partial_T \mathsf{E}_T$ that raises various difficulties. First, the minimizer $\varrho_{T,n_0}$ in \fref{prob:densconst} admits an implicit representation of the form $\varrho_{T,n_0}=\exp(-(H_0+A_T)/T)$, where $A_T$ is a function that depends on $T$ and on $\varrho_{T,n_0}$ via a complex relation. When differentiating $\mathsf{E}_T$, one needs to differentiate $\varrho_{T,n_0}$ and therefore $A_T$. Due to the intricacy of the relation between $\varrho_{T,n_0}$ and $A_T$, it is difficult to apply the implicit function theorem in order to obtain the differentiability in the variable $T$. We therefore proceed by approximation, and consider a penalized version of the problem \fref{prob:densconst} in which the relation between the minimizer and the corresponding potential $A_T$ is straightforward. It is then possible to the use the implicit function theorem to justify the differentiation with respect to $T$. The price to pay is the need for non-trivial uniform estimates in the penalization parameter to pass to the limit. The second difficulty is the calculation of $\partial_T \mathsf{E}_T$ per se for the penalized problem:  we will see that each terms need to be carefully expressed in order to take advantage of some compensations and show the positivity of the derivative. The last difficulty is to show the strict increase of $\mathsf{E}_T$ for the nonpenalized problem: while there is a strict inequality in the penalized problem, it does not pass to the limit. We fix this by using some results from calculus of variations that allow us to relate $\partial_T \mathsf{E}_T$ and $\partial_T \mathsf{S}_T$, which, together with the uniqueness of solutions to \fref{prob:densconst}, yield the strict increase by a contradiction argument.

With Theorem \ref{thm:NRJIncr} at hand, we can then show in Theorem \ref{th3momg} below that the minimization problem with two local constraints $n_0$, $u_0$ and a global constraint $E(\varrho)=e_0$ admits a unique solution, which is of the form $\varrho_{T_0,n_0,u_0}$, for some implicitly defined temperature $T_0$ that depends on $n_0$, $u_0$ and $e_0$.

  \begin{theorem} \label{th3momg}Suppose that $n_0 \in \H2p$ with $n_0>0$, that $u_0 \in L^2(\Omega)$, and that $e_0 \geq m_0$ for the $m_0$ of Theorem \ref{thm:NRJIncr}.  Then, the constrained minimization problem
  \be \label{pbm:mindenscourlocNRJ}
  \min_{\calA(n_0,u_0) \cap \calA_g(e_0)} S(\varrho)
  \ee
  admits a unique solution. When $e_0 > m_0$ and with the notation of Theorem \ref{th2mom}, the solution has the form $\varrho_{T_0,n_0,u_0}$, where $T_0\equiv T_0(n_0,u_0,e_0)$. When $e_0=m_0$, the set $\calA(n_0,u_0) \cap \calA_g(e_0)$ is reduced to the operator $e^{if}|\sqrt{n_0}\rangle \langle \sqrt{n_0}|e^{-if}$.
\end{theorem}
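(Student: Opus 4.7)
The plan is to reduce the constrained problem \fref{pbm:mindenscourlocNRJ} to the free-energy minimization \fref{pbm:mindenscour} via a Lagrange-multiplier identification enabled by Theorem \ref{thm:NRJIncr}, treating the boundary case $e_0 = m_0$ and the main case $e_0 > m_0$ separately.

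For $e_0 = m_0$, the goal is to show that $\calA(n_0,u_0) \cap \calA_g(m_0)$ is reduced to the single pure state $e^{if}|\sqrt{n_0}\rangle\langle\sqrt{n_0}|e^{-if}$, so the minimization is trivial. The key step is a pointwise Cauchy-Schwarz bound on the spectral expansion $\varrho = \sum_p \rho_p |\phi_p\rangle\langle\phi_p|$. Combining \fref{defcurrent} with $\nabla n_0 = 2\,\Re \sum_p \rho_p \phi_p^*\nabla\phi_p$ yields
\begin{equation*}
\sum_{p \in \Nm} \rho_p \,\phi_p^* \nabla \phi_p \;=\; \tfrac{1}{2}\nabla n_0 + i\, n_0 u_0,
\end{equation*}
and applying Cauchy-Schwarz in $\ell^2$ pointwise, using $\sum_p \rho_p |\phi_p|^2 = n_0$ together with $\sum_p \rho_p |\nabla\phi_p|^2 = 2k[\varrho]$, gives the Madelung-type inequality $\tfrac{1}{2}|\nabla\sqrt{n_0}|^2 + \tfrac{1}{2}n_0 u_0^2 \leq k[\varrho]$; integration returns $m_0 \leq E(\varrho)$. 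Saturation in Cauchy-Schwarz forces $\nabla\phi_p(x) = \lambda(x)\phi_p(x)$ for some common scalar function $\lambda$ and all $p$ with $\rho_p>0$, so these eigenfunctions are pairwise proportional; orthonormality then collapses the rank to one, and matching $n_0$ and $u_0$ pins the operator down to the stated pure state.

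For $e_0 > m_0$, the plan is to exhibit a unique $T_0>0$ with $\mathsf{E}_{T_0} = e_0$ and then identify $\varrho_{T_0,n_0,u_0}$ as the minimizer. Continuity, strict monotonicity of $T\mapsto \mathsf{E}_T$, and $\lim_{T\to 0^+}\mathsf{E}_T = m_0$ are furnished by Theorem \ref{thm:NRJIncr}; the main obstacle is to prove $\lim_{T\to\infty}\mathsf{E}_T = +\infty$ so that the range of $\mathsf{E}$ covers $(m_0,\infty)$. I would attack this by contradiction: if $\mathsf{E}_T$ stayed uniformly bounded then the maximum-entropy principle at fixed trace and energy would keep $\mathsf{S}_T$ bounded below, while testing the minimizer inequality $\mathsf{E}_T + T\mathsf{S}_T \leq E(\sigma) + T S(\sigma)$ against competitors $\sigma_N \in \calA(n_0,u_0)$ built by equidistributing mass over $N$ gauge-conjugated real orthonormal modes---so that $S(\sigma_N) \sim -\log N$ while $E(\sigma_N)$ grows only polynomially in $N$---would force $\mathsf{S}_T \to -\infty$ after optimizing $N$ against $T$, contradicting the lower bound. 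Once $T_0$ is secured, the identification is one line: on $\calA(n_0,u_0)\cap \calA_g(e_0)$ the functional $F_{T_0}$ equals $e_0 + T_0 S$, so $\varrho_{T_0,n_0,u_0}$---which minimizes $F_{T_0}$ on the larger set $\calA(n_0,u_0)$ by Theorem \ref{th2mom} and belongs to $\calA_g(e_0)$ by the choice of $T_0$---is also a minimizer of $S$ on the constrained set. Uniqueness then transports back to Theorem \ref{th2mom}: any other minimizer $\varrho^*$ satisfies $S(\varrho^*) = S(\varrho_{T_0,n_0,u_0})$, hence $F_{T_0}(\varrho^*) = F_{T_0}(\varrho_{T_0,n_0,u_0})$, and the uniqueness clause in Theorem \ref{th2mom} forces $\varrho^* = \varrho_{T_0,n_0,u_0}$.
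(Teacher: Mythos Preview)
Your argument is correct and follows the paper's overall route: use Theorem~\ref{thm:NRJIncr} to locate $T_0$ with $\mathsf{E}_{T_0}=e_0$, then observe that on $\calA(n_0,u_0)\cap\calA_g(e_0)$ the free energy $F_{T_0}$ reduces to $e_0+T_0 S$, so the unique minimizer of $F_{T_0}$ on $\calA(n_0,u_0)$ solves \fref{pbm:mindenscourlocNRJ}. Two points of genuine difference deserve comment.

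For the boundary case $e_0=m_0$, the paper first reduces by the gauge transform to $\calA(n_0)$, invokes the Cauchy--Schwarz bound \fref{lowEn}, and then relies on a terse linearity argument for uniqueness of the energy minimizer. Your direct pointwise Cauchy--Schwarz saturation is cleaner: it yields both the lower bound $m_0\le E(\varrho)$ and the rank-one rigidity in one stroke, without the gauge detour.

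For $e_0>m_0$, you correctly flag a step the paper passes over: Theorem~\ref{thm:NRJIncr} supplies strict monotonicity, continuity, and $\lim_{T\to 0^+}\mathsf{E}_T=m_0$, but not $\mathsf{E}_T\to\infty$, which is needed for the range of $\mathsf{E}$ to cover $(m_0,\infty)$. Your contradiction argument is sound---the lower bound on $\mathsf{S}_T$ under bounded $\mathsf{E}_T$ is exactly \fref{souslin}---but the competitor construction needs one more line of care, since equidistributing over $N$ orthonormal modes with the \emph{prescribed} density $n_0$ is not automatic. In 1D the change of variable $y=\|n_0\|_{L^1}^{-1}\int_0^x n_0$ pushes forward the real trigonometric basis to an orthonormal family $\phi_j$ with $|\phi_j|^2$ proportional to $n_0$, giving $E(\sigma_N)=O(N^2)$ and $S(\sigma_N)\sim -\|n_0\|_{L^1}\log N$ as you claim. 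Finally, your uniqueness argument (transporting back to Theorem~\ref{th2mom}) is equivalent to the paper's appeal to strict convexity of $S$.
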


The proof of Theorem \ref{th3momg} is given in Section \ref{proofth3momg}. The proof of Theorem \ref{thmain}, given in Section \ref{proofthmain}, is concluded by setting $e_0=\|k_0\|_{L^1}$ and by exploiting Theorem \ref{th3momg}, with the observation that the $L^1$ norm of the limit of $k[\varrho_n]$ has to be equal to $e_0$ otherwise there is a contradiction with the strict decrease of the entropy. This, with classical arguments about the convergence of positive operators in $\calJ_1$, allows us to conclude that the limit of $k[\varrho_n]$ is $k_0$. Note that the condition $e_0=\|k_0\|_{L^1} \geq m_0$ is automatically verified for $(n_0,u_0,k_0) \in \calM$. We will see indeed that, for $k_0=k[\varrho]$ with $\varrho \in \calE^+$,
\bee
\|k_0\|_{L^1}=E(\varrho) &\geq& \min_{\calA(n_0)} E(\sigma)+\frac{1}{2} \int_\Omega n_0 |u_0|^2 dx\\
&\geq& \frac{1}{2}\left(\int_{\Omega} |\nabla \sqrt{n_0}|^2 dx+\int_{\Omega} n_0 |u_0|^2 dx \right)=m_0.
\eee

The rest of the paper is devoted to the proof of our main theorem.
\section{Proofs} \label{secproofs}

We start with the proof of Theorem \ref{thmain}. 
\subsection{Proof of Theorem \ref{thmain}} \label{proofthmain}

We remark first that the set $\calA(n_0,u_0,k_0)$ is not empty by construction and address the case $\|k_0\|_{L^1}>m_0$ first. Second, \fref{relEk} and item (i) of Lemma \ref{propentropie} show that the entropy $S(\varrho)$ is bounded below by $-C \|k_0\|^{1/2}_{L^1}$ for any $\varrho \in \calA(n_0,u_0,k_0)$. We can therefore consider a minimizing sequence $(\varrho_{n})_{n\in \Nm}\subset \calA(n_0,u_0,k_0)$ such that

\be \label{lim2inf}
\lim_{n \to \infty} S(\varrho_n)=\inf_{\calA(n_0,u_0,k_0)} S.
\ee
The sequence is bounded in $\calE$ since
\be \label{boundminseq}
\| \varrho_n\|_{\calE}= \Tr \left(\varrho_n\right) + \Tr \left(\sqrt{H_0} \varrho_n \sqrt{H_0} \right)= \|n_0\|_{L^1}+\|k_0\|_{L^1}.
\ee
According to Lemma \ref{strongconv} (i), it follows that, up to a subsequence, $(\varrho_{n})_{n \in \Nm}$ converges strongly in $\calJ^1$ to a certain limit $\varrho$. Then, item (ii) of Lemma \ref{propentropie} shows that $S$ is continuous for the weak-$*$ topology on $\calE^+$, that is, combining with \fref{lim2inf},
\be \label{contS}
S(\varrho)=\lim_{n \to \infty} S(\varrho_n)=\inf_{\calA(n_0,u_0,k_0)} S.
\ee
Note in passing that the continuity of $S$ for the minimizing sequences, and not just the semi-lower continuity, is an important ingredient of the proof. The true question following \fref{contS} is therefore whether $\varrho \in \calA(n_0,u_0,k_0)$ or not. This is clear for the first two moments $n[\varrho]$ and $u[\varrho]$ as there is sufficient compactness, see e.g. \cite[Theorem 2.1 and Theorem 4.3]{mehats2011problem}, and we have
\begin{equation*}
n[\varrho] = n_0 \quad\mbox{ and }\quad u[\varrho] = u_0.
\end{equation*}
The main difficult is therefore to show that $k[\varrho] = k_0$. For this, the key ingredients are Theorems \ref{thm:NRJIncr} and \ref{th3momg}. The starting point is the fact that, since $\sqrt{H_0}\varrho_n\sqrt{H_0}$ is positive and uniformly bounded in $\calJ^1$ according to \fref{boundminseq}, 
\begin{equation} \label{weakJ}
\sqrt{H_0}\varrho_n\sqrt{H_0} \overset{}{\longrightarrow} \sqrt{H_0}\varrho\sqrt{H_0}, \quad\mbox{weak-$*$ in }\calJ^1,
\end{equation}
which is not enough to obtain  $k[\varrho] = k_0$, we would need for this weak convergence in $\calJ_1$ and not weak-$*$ convergence. We will actually prove that \fref{weakJ} holds strongly in $\calJ_1$. To this end, we have first from \fref{relEk} and \fref{weakJ}, together with Lemma \ref{strongconv},
\begin{equation*}
\|k[\varrho]\|_{L^1} \leq \liminf_{n\rightarrow\infty} \|k[\varrho_n]\|_{L^1} = \|k_0\|_{L^1}.
\end{equation*}
If $\|k[\varrho]\|_{L^1} =  \|k_0\|_{L^1}$, we claim that we are done. In order to prove this, we will need the following result:
\begin{lemma}[Theorem 2.21 and addendum H of \cite{Simon-trace}]
\label{theoSimon}
Suppose that $A_k\to A$ weakly in the sense of operators and that $\|A_k\|_{\calJ_1}\to \|A\|_{\calJ_1}$. Then $\|A_k-A\|_{\calJ_1}\to 0$.
\end{lemma}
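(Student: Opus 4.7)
The plan is to establish this Kadec--Klee / Radon--Riesz property for the trace class by combining a finite-rank truncation with a Cauchy--Schwarz estimate, the $\calJ_1$ norm convergence supplying the quantitative control needed to upgrade weak to strong convergence.

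First I would upgrade the weak convergence. Since $\|A_k\|_{\calJ_1}\to\|A\|_{\calJ_1}$ bounds $(A_k)$ in $\calJ_1$, and finite-rank operators are dense in the compact operators on $L^2(\Omega)$, weak operator convergence together with this bound forces $\Tr(A_kK)\to\Tr(AK)$ for every compact $K$, i.e.\ weak-$*$ convergence in $\calJ_1$ viewed as the dual of the compact operators. Next I would reduce to the positive self-adjoint case by passing to the self-adjoint dilation
\[
\tilde A_k=\begin{pmatrix}0 & A_k\\ A_k^* & 0\end{pmatrix}
\]
on $L^2(\Omega)\oplus L^2(\Omega)$, which preserves weak operator convergence and doubles the $\calJ_1$ norm, followed by a spectral cut-off localizing to each signed spectral half of the limit $A$.

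For positive self-adjoint $A$ and $A_k$, given $\varepsilon>0$, I would choose a finite-rank spectral projection $P$ of $A$ with $\Tr((I-P)A(I-P))<\varepsilon$; note that $P$ commutes with $A$, so $PA(I-P)=0$. Since $P$ is finite-rank, weak operator convergence yields $PA_kP\to PAP$ in trace norm; combining with $\Tr A_k\to\Tr A$ and positivity gives $\Tr((I-P)A_k(I-P))\to\Tr((I-P)A(I-P))<\varepsilon$, hence this quantity is bounded by $2\varepsilon$ for large $k$. Decomposing $A_k-A$ into its four $P,I-P$ blocks, the diagonal blocks are controlled directly, and the off-diagonal blocks of $A_k$ are estimated by Cauchy--Schwarz for the Hilbert--Schmidt norm:
\[
\|PA_k(I-P)\|_{\calJ_1}\leq \|P\sqrt{A_k}\|_{\calJ_2}\,\|\sqrt{A_k}(I-P)\|_{\calJ_2}=\sqrt{\Tr(PA_kP)\,\Tr((I-P)A_k(I-P))}\leq C\sqrt{\varepsilon}
\]
for large $k$. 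Sending $\varepsilon\to 0$ then yields $\|A_k-A\|_{\calJ_1}\to 0$.

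The main obstacle I expect is the reduction from general trace-class operators to the positive self-adjoint case. The dilation trick handles the passage from arbitrary to self-adjoint, but the spectral parts $A_{k,\pm}$ need not converge to $A_\pm$ under mere weak operator convergence, so one must exploit the $\calJ_1$ norm convergence in a more delicate way, for instance by using smooth spectral cut-offs based on the limit $A$ to isolate each spectral half and apply the positive-case argument separately on each.
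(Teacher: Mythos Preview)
The paper does not prove this lemma; it is simply quoted from Simon's monograph and used as a black box. So there is no ``paper's own proof'' to compare against, and I assess your proposal on its merits.

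Your argument for the \emph{positive} case is correct and is essentially the standard Davies--Simon proof: the finite-rank spectral truncation $P$ of $A$, the pinching identity $\Tr A_k=\Tr(PA_kP)+\Tr((I-P)A_k(I-P))$, and the factorisation $PA_k(I-P)=P\sqrt{A_k}\cdot\sqrt{A_k}(I-P)$ with the $\calJ_2$--$\calJ_2$ H\"older inequality all work exactly as you wrote. The preliminary upgrade from weak operator convergence to weak-$*$ convergence in $\calJ_1$ is also fine.

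The genuine gap is the reduction step, and you have correctly identified it yourself but not closed it. The self-adjoint dilation brings you to self-adjoint $\tilde A_k$, $\tilde A$, and the symmetry of the spectrum even gives $\|(\tilde A_k)_\pm\|_{\calJ_1}=\|A_k\|_{\calJ_1}\to\|A\|_{\calJ_1}=\|(\tilde A)_\pm\|_{\calJ_1}$ for free. But you still need $(\tilde A_k)_\pm\to(\tilde A)_\pm$ weakly, which amounts to $|A_k|\to|A|$ and $|A_k^*|\to|A^*|$ weakly; the absolute-value map is not weak-operator continuous, so this does not follow. Your proposed ``smooth spectral cut-offs based on the limit $A$'' does not help, because applying a fixed continuous function $f$ to $A_k$ is again not weakly continuous in $A_k$. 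More concretely, if you try to run the positive-case proof directly in the self-adjoint setting, the diagonal blocks are still controlled (use the pinching inequality $\|PA_kP\|_{\calJ_1}+\|(I-P)A_k(I-P)\|_{\calJ_1}\le\|A_k\|_{\calJ_1}$ in place of the trace identity), but the off-diagonal estimate breaks: factoring $PA_k(I-P)=P|A_k|^{1/2}U_k|A_k|^{1/2}(I-P)$ leaves you needing $\Tr\big((I-P)|A_k|(I-P)\big)$ small, and nothing you have assumed controls $|A_k|$.

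In short: the positive case is complete, the reduction is not. Simon's treatment (the reference the paper cites) handles the general case by a different route, essentially through convergence of singular values and a Scheff\'e-type argument rather than by reducing to positivity; if you want a self-contained proof you should look there rather than try to push the spectral-cutoff idea further.
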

Then, since by hypothesis
\begin{align*}
 \left\|\sqrt{H_0}\varrho_n\sqrt{H_0}\right\|_{\calJ^1} =  \|k[\varrho_n]\|_{L^1} = \|k_0\|_{L^1} = \|k[\varrho]\|_{L^1} =  \left\|\sqrt{H_0}\varrho\sqrt{H_0}\right\|_{\calJ^1},
\end{align*}
and \fref{weakJ} holds, it follows from Lemma \ref{theoSimon} that (we use here the fact that weak-$*$ convergence in $\calJ_1$ implies the weak convergence in the sense of operators), 
\begin{equation} \label{strongJ1}
\sqrt{H_0}\varrho_n\sqrt{H_0}\longrightarrow \sqrt{H_0}\varrho\sqrt{H_0}, \quad\mbox{strongly in }\calJ^1.
\end{equation}
Then, according to Remark \ref{rem2},
\begin{equation} \label{convk}
2 \|k_0 -  k[\varrho]\|_{L^1} =  2 \|k[\varrho_n] -  k[\varrho]\|_{L^1} = \|\nabla \varrho_n\nabla - \nabla\varrho\nabla\|_{\calJ^1}.
\end{equation}
Finally, it is not difficult to conclude from \fref{strongJ1} and the strong convergence of $\varrho_n$ to $\varrho$ in $\calJ_1$, that the last term in \fref{convk} converges to zero as $n \to \infty$. Hence, we have obtained that $k_0=k[\varrho]$ if $\|k[\varrho]\|_{L^1} =  \|k_0\|_{L^1}$. We therefore assume from now on that
\begin{equation*}
e_1 := \|k[\varrho]\|_{L^1} < \|k_0\|_{L^1} = : e_0,
\end{equation*}
and will prove a contradiction. To this end, since we assumed here that $\|k_0\|_{L^1}>m_0$, it follows from Theorem \ref{th3momg}, for $j\in\{0,1\}$, that there exist $T_j>0$, and a unique $\varrho_j : = \varrho_{T_j,n_0,u_0}$, such that
\begin{equation*}
\varrho_j  = \underset{\calA(n_0,u_0) \cap  \calA_g(e_j)}{\textrm{argmin}} S(\sigma).
\end{equation*}
Therefore, by construction,
$$
e_1 = E(\varrho_1) = \mathsf{E}_{T_1} < e_0 = E(\varrho_0) = \mathsf{E}_{T_0}.
$$
Theorem \ref{thm:NRJIncr} then implies that $T_1<T_0$, and also that $S(\varrho_1)>S(\varrho_0)$.
Since $e_1 = \|k[\varrho]\|_{L^1}$, we have that $\varrho\in\calA(n_0,u_0) \cap  \calA_g(e_1)$. Together with \fref{contS}, this yields
\begin{equation*}
S(\varrho_1) \leq S(\varrho) = \inf_{\calA(n_0,u_0,k_0)} S(\sigma).
\end{equation*}
We will see that there is a contradiction above: by taking the infimum over nonnegative functions $k_0\in L^1$ (we denote this set by $L^1_+$) such that $\|k_0\|_{L^1} = e_0$, and by using Lemma \ref{lem:infmineq} below, we find
\begin{equation*}
\underset{\calA(n_0,u_0) \cap  \calA_g(e_1)}{\textrm{min}} S(\sigma) = S(\varrho_1) \leq \inf_{\substack{k_0\in L^1_+\\\|k_0\|_{L^1} = e_0}}\inf_{\calA(n_0,u_0,k_0)} S(\sigma) = \min_{\calA(n_0,u_0)\cap\calA_g(e_0)} S(\sigma) = S(\varrho_0),
\end{equation*}
which contradicts  $S(\varrho_1)>S(\varrho_0)$. Therefore, our starting hypothesis, namely $\|k_0\|_{L^1}>\|k[\varrho]\|_{L^1}$, is false and we must have $\|k_0\|_{L^1}=\|k[\varrho]\|_{L^1}$. As we have seen, this implies that $k_0=k[\varrho]$ and completes the proof of Theorem \ref{thmain} in the case $\|k_0\|_{L^1}>m_0$ provided we establish Lemma \ref{lem:infmineq} below.

\begin{lemma}\label{lem:infmineq}
Let $e_0 >m_0$. We have the equality
\begin{equation*}
 \inf_{\substack{k_0\in L^1_+\\\|k_0\|_{L^1} = e_0}}\inf_{\calA(n_0,u_0,k_0)} S(\sigma) = \min_{\calA(n_0,u_0)\cap\calA_g(e_0)} S(\sigma).
\end{equation*}
\end{lemma}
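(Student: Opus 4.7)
The plan is to show that the lemma is essentially a bookkeeping identity: both sides are infima of $S$ over the same set, $\calA(n_0,u_0)\cap\calA_g(e_0)$, organized differently. The only structural input needed is Remark \ref{rem1}, which gives $E(\sigma)=\|k[\sigma]\|_{L^1}$ for every $\sigma\in\calE^+$, together with the fact that $k[\sigma]$ is pointwise nonnegative by its defining series. Hence, for any $\sigma\in\calE^+$, the function $k[\sigma]$ belongs to $L^1_+$, and the map $\sigma \mapsto (n[\sigma], u[\sigma], \|k[\sigma]\|_{L^1})$ is compatible with both kinds of constraints.

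For the inequality RHS $\le$ LHS, I would fix an arbitrary $k_0\in L^1_+$ with $\|k_0\|_{L^1}=e_0$ and an arbitrary $\sigma\in\calA(n_0,u_0,k_0)$. Then $k[\sigma]=k_0$ forces $E(\sigma)=\|k[\sigma]\|_{L^1}=\|k_0\|_{L^1}=e_0$, so $\sigma\in\calA(n_0,u_0)\cap\calA_g(e_0)$. Since Theorem \ref{th3momg} ensures the RHS minimum is attained under the assumption $e_0>m_0$, we get $\min_{\calA(n_0,u_0)\cap\calA_g(e_0)} S \le S(\sigma)$. Taking the infimum successively in $\sigma\in\calA(n_0,u_0,k_0)$ and in admissible $k_0$ yields the claim. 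If $\calA(n_0,u_0,k_0)$ is empty for some particular $k_0$, the inner infimum is $+\infty$ and contributes nothing to the outer one.

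For the reverse inequality LHS $\le$ RHS, I would take any $\sigma\in\calA(n_0,u_0)\cap\calA_g(e_0)$ and set $k_0:=k[\sigma]$. By the opening observation, $k_0\in L^1_+$ with $\|k_0\|_{L^1}=E(\sigma)=e_0$, so this $k_0$ is admissible in the outer infimum on the LHS, and the inclusion $\sigma\in\calA(n_0,u_0,k_0)$ gives $\inf_{\calA(n_0,u_0,k_0)} S\le S(\sigma)$. Hence LHS $\le S(\sigma)$, and taking the infimum over $\sigma$ in the RHS admissible set closes the argument.

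I do not expect any genuine obstacle: the statement is a partitioning identity for the RHS admissible set indexed by the local kinetic energy density, and the two passes above simply exchange the order in which one quantifies over $k_0$ and $\sigma$. The only hypothesis that is really used beyond formal manipulations is $e_0>m_0$, invoked through Theorem \ref{th3momg} to phrase the RHS as a minimum rather than an infimum; the equality of the two extrema would in any case be valid for general $e_0$ in the sense of infima.
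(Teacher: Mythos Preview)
Your proof is correct and follows essentially the same route as the paper: both directions rest on the inclusion $\calA(n_0,u_0,k_0)\subset\calA(n_0,u_0)\cap\calA_g(e_0)$ whenever $\|k_0\|_{L^1}=e_0$, together with the converse observation that any $\sigma$ in the right-hand admissible set lies in $\calA(n_0,u_0,k[\sigma])$. The only cosmetic difference is that for the direction LHS $\le$ RHS the paper works directly with the minimizer $\varrho_0$ supplied by Theorem~\ref{th3momg} rather than a generic $\sigma$; your argument is slightly more streamlined and, as you correctly note, yields the equality of infima even without invoking $e_0>m_0$.
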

\begin{proof}
We will prove the two opposite inequalities to obtain the equality and start with the direction $\leq$. First, by Theorem \ref{th3momg}, there exists $\varrho_0 : = \varrho_{T_0,n_0,u_0}$ such that
\begin{equation*}
\varrho_0 = \underset{\calA(n_0,u_0) \cap \calA_g(e_0)}{\textrm{argmin}} S(\varrho).
\end{equation*}
We show below the intuitive result that $\varrho_0$ minimizes $S$ in $\calA(n_0,u_0)$ with the local energy constraint $k[\varrho]=k[\varrho_0]$, namely
\begin{equation} \label{infk0}
\inf_{\calA(n_0,u_0,k[\varrho_0])} S(\sigma)=\min_{\calA(n_0,u_0,k[\varrho_0])} S(\sigma) = S(\varrho_0).
\end{equation}
Indeed, since $\|k[\varrho_0]\|_{L^1}=e_0$ and as a consequence $\calA(n_0,u_0,k[\varrho_0]) \subset \calA(n_0,u_0)\cap \calA_g(e_0)$, we obtain
\begin{equation}
S(\varrho_0) = \min_{\sigma\in \calA(n_0,u_0)\cap\calA_g(e_0)} S(\sigma) \leq \inf_{\sigma\in \calA(n_0,u_0,k[\varrho_0])} S(\sigma)\label{eq:lemeqinfinf1}.
\end{equation}
Moreover, we have clearly $\varrho_0 \in \calA(n_0,u_0,k[\varrho_0])$ and, hence,
\begin{equation}
 \inf_{\sigma\in \calA(n_0,u_0,k[\varrho_0])} S(\sigma) \leq S(\varrho_0).\label{eq:lemeqinfinf2}
\end{equation}
Combining \eqref{eq:lemeqinfinf1} and \eqref{eq:lemeqinfinf2}, we obtain \fref{infk0}. Consider now the mapping $G\, : L^1_+ \to \mathbb{R}$, given by, $\forall k\in L^1_+$, 
\begin{equation*}
G(k) := \inf_{\calA(n_0,u_0,k)} S(\sigma).
\end{equation*}
Using the fact that $\|k[\varrho_0]\|_{L^1} = e_0$ together with \fref{infk0} and the definition of $\varrho_0$, we deduce the inequality
\begin{equation}
 \inf_{\substack{k_0\in L^1_+\\\|k_0\|_{L^1} = e_0}}\inf_{\calA(n_0,u_0,k_0)} S(\sigma) =  \inf_{\substack{k_0\in L^1_+\\\|k_0\|_{L^1} = e_0}} G(k_0) \leq G(k[\varrho_0]) = S(\varrho_0) = \min_{\calA(n_0,u_0)\cap\calA_g(e_0)} S(\sigma).\label{eq:lemeqinfinf3}
\end{equation}
We now prove the reverse inequality to conclude the proof. For $\|k_0\|_{L^1} = e_0$, we have $\calA(n_0,u_0,k_0) \subset \calA(n_0,u_0)\cap \calA_g(e_0)$ and, thus,
\begin{equation*}
\min_{\calA(n_0,u_0)\cap\calA_g(e_0)} S(\sigma) \leq \inf_{\calA(n_0,u_0,k_0)} S(\sigma).
\end{equation*}
By taking the infimum over $k_0$, this leads to the following inequality
\begin{equation*}
\min_{\calA(n_0,u_0)\cap\calA_g(e_0)} S(\sigma) \leq  \inf_{\substack{k_0\in L^1_+\\\|k_0\|_{L^1} = e_0}}\inf_{\calA(n_0,u_0,k_0)} S(\sigma),
\end{equation*}
which, combined with \eqref{eq:lemeqinfinf3}, concludes the proof of the Lemma.
\end{proof}

\medskip

We conclude the proof of Theorem \ref{thmain} with the case $\|k_0\|_{L^1}=m_0$. Let $\sigma \in \calA(n_0,u_0,k_0)$ such that $k_0=k[\sigma]$. According to Theorem \ref{th3momg}, there is only one such $\sigma$, which reads $\sigma=e^{if}|\sqrt{n_0}\rangle \langle \sqrt{n_0}|e^{-if}$, where  $f$ is defined in Theorem \ref{th2mom}. This ends the proof.

\subsection{Proof of Theorem \ref{th2mom}} \label{proofth2mom}
Consider first the minimization problem \fref{prob:densconst} for any fixed $T>0$. According to \cite[Theorem 2.1]{MP-JSP}, the problem admits a unique solution that reads
\begin{equation}\label{eq:solminim1D}
\varrho_{T,n_0}= \exp \left(-\frac{H_A}{T} \right), \qquad H_A=H_0+A, \qquad A \in \Hmp, \qquad A \textrm{ real-valued}.
\end{equation}
Above, the Hamiltonian $H_A$ is defined in the sense of quadratic forms on $\Hp$.  Denoting by $(\rho_p,\phi_p)_{p \in \Nm}$ the spectral elements of $\varrho_{T,n_0}$, we can choose the eigenfunctions $\phi_p$ to be real-valued since the chemical potential $A$ is real. Besides, it is a classical fact, see e.g. \cite[Theorem A.2]{arnold}, that if $\varrho \in \calE^+$, then the series defining $u[\varrho]$ in \fref{defcurrent} converges in $L^2$. Then, by \fref{defcurrent},
\be \label{zerocurrent}
n_0 u[\varrho_{T,n_0}]=0.
\ee
We claim that the unique minimizer of $F_T$ in $\calA(n_0,u_0)$ is
$$
\varrho_{T,u_0, n_0}=e^{if} \varrho_{T,n_0} e^{-if},
$$
for the $f$ defined in the theorem. We prove this in two steps. 
\paragraph{Step 1: $\varrho_{T,u_0, n_0}$ is admissible.} We verify first that $\varrho_{T,u_0, n_0} \in \calE^+$. It is clear by construction that $\varrho_{T,u_0, n_0}$ is a density operator if $\varrho_{T,n_0}$ is one as well. We then check that 
$E(\varrho_{T,u_0, n_0})$ is finite. For this, we remark that if $(\rho_p,\phi_p)_{p \in \Nm}$ are the spectral elements of $\varrho_{T,n_0}$, then $(\rho_p,e^{if }\phi_p)_{p \in \Nm}$ are those of $\varrho_{T,u_0, n_0}$. By Remark \ref{rem1}, it suffices to verify that
$$
\sum_{p \in \Nm} \rho_p \| \nabla (e^{if }\phi_p)\|^2_{L^2} < \infty
$$
to conclude that $E(\varrho_{T,u_0, n_0})<\infty$. This is direct since
\be \label{shiftEC}
\sum_{p \in \Nm} \rho_p \| \nabla (e^{if }\phi_p)\|^2_{L^2} = \sum_{p \in \Nm} \rho_p \| \nabla \phi_p\|^2_{L^2}+\int_{\Omega}  n_0 |u_0|^2 dx,
\ee
which is finite as $\varrho_{T,n_0} \in \calE^+$ and $u_0 \in L^2$, $n_0 \in H^1 \subset L^\infty$. It is also clear that $n[\varrho_{T,u_0,n_0}]=n[\varrho_{T,n_0}]=n_0$, and it remains to address the current contraint. This is straightforward since
\bea \label{abo}
n[\varrho_{T,u_0,n_0}] u[\varrho_{T,u_0,n_0}] &=& \Im \sum_{p \in \Nm} \rho_p e^{-if}\phi_p^* \nabla (\phi_p e^{if})=  n[\varrho_{T,n_0}] u[\varrho_{T,n_0}]+ n[\varrho_{T,n_0}] u_0 \nonumber \\
&=& n_0 u_0,
\eea
thanks to \fref{zerocurrent}. Hence, $\varrho_{T,u_0, n_0}$ belongs to $\calA(n_0,u_0)$. 
\paragraph{Step 2: $\varrho_{T,u_0, n_0}$ is the minimizer.} Take any $\varrho \in \calA(n_0,u_0)$. We can write
$$
\varrho= e^{if} e^{-if} \varrho e^{if} e^{-if}:=e^{if} \sigma e^{-if},
$$
where by construction $\sigma \in \calE^+$, $n[\sigma]=n[\varrho]$, and $u[\sigma]=0$ by a calculation similar as \fref{abo}. Furthermore, following \fref{shiftEC}, we have the relation
\begin{align*}
F_T(\varrho) &= F_T(\sigma) + \frac 1 2 \int_{\Omega} n_0 |u_0|^2dx.
\end{align*}
Hence,
\begin{align*}
\min_{\varrho\in\calA(n_0,u_0)} F_T(\varrho) &= \min_{\sigma\in\calA(n_0,0)} F_T( e^{i f}\sigma e^{-i f}) 
\\ &\geq \min_{\sigma \in \calA(n_0)} F_T(e^{i f}\sigma e^{-i f})
\\ &=  \min_{\sigma \in \calA(n_0)} F_T(\sigma) + \frac 1 2 \int_{\Omega} n_0 |u_0|^2dx.
\end{align*}

Since $\varrho_{T,n_0}$ is the unique minimizer of $F_T$ in $\calA(n_0)$, it follows that
\bee
\min_{\varrho\in\calA(n_0,u_0)} F_T(\varrho) &\geq& F(\varrho_{T,n_0})+\frac 1 2 \int_{\Omega} n_0 |u_0|^2dx=F_T(e^{i f}\varrho_{T,n_0} e^{-i f})\\
&\geq &F_T(\varrho_{T,u_0,n_0}).
\eee
Since $F_T$ has a unique minimizer in $\calA(n_0,u_0)$ and $\varrho_{T,u_0, n_0} \in \calA(n_0,u_0)$, this shows that $\varrho_{T,u_0,n_0}$ is this minimizer and ends the proof.

\subsection{Proof of Theorem \ref{th3momg}} \label{proofth3momg}
We start by fixing the local constraints $n_0 \in \H2p$, with $n_0>0$, $u_0 \in L^2(\Omega)$, and $e_0>m_0$. For any $T>0$, the minimization problem \eqref{pbm:mindenscour} admits a unique solution $\varrho_{T,n_0,u_0}$ thanks to Theorem \ref{th2mom}. Then, it follows from Theorem \ref{thm:NRJIncr} that $E(\varrho_{T,n_0,u_0})$ is a strictly increasing continuous function of $T$ with limit $m_0$ as $T \to 0$. Hence, for any $e_0>m_0$, there exists a unique $T_0>0$ such that
\begin{equation*}
E(\varrho_{T_0,n_0,u_0}) = \mathsf{E}_{T_0} = e_0.
\end{equation*}
We remark that $T_0$ implicitly depends on $e_0$, $n_0$ and $u_0$. By considering the minimization problem \eqref{pbm:mindenscour} with $T = T_0$, we deduce that
\begin{equation*}
\min_{\calA(n_0,u_0)} F_{T_0}(\varrho) =  F_{T_0}(\varrho_{T_0,n_0,u_0}) = e_0 + T_0 S(\varrho_{T_0,n_0,u_0})  .
\end{equation*}
Moreover, we have by construction that $\varrho_{T_0,n_0,u_0}\in \calA(n_0,u_0) \cap \calA_g(e_0)$ and, hence,
\begin{equation*}
F_{T_0}(\varrho_{T_0,n_0,u_0})=\min_{\calA(n_0,u_0)} F_{T_0}(\varrho) \leq \inf_{\calA(n_0,u_0) \cap \calA_g(e_0)} F_{T_0}(\varrho) \leq F_{T_0}(\varrho_{T_0,n_0,u_0}),
\end{equation*}
which leads to the fact that
\begin{equation*}
\underset{\calA(n_0,u_0) \cap \calA_g(e_0)}{\textrm{argmin}} F_{T_0}(\varrho) = \underset{\calA(n_0,u_0)}{\textrm{argmin}}  \;F_{T_0}(\varrho) = \varrho_{T_0,n_0,u_0}.
\end{equation*}
Thus, we obtain
\begin{align*}
\underset{\calA(n_0,u_0) \cap \calA_g(e_0)}{\textrm{argmin}} S(\varrho) = \underset{\calA(n_0,u_0) \cap \calA_g(e_0)}{\textrm{argmin}}\left(e_0 + T_0 S(\varrho)\right) = \underset{\calA(n_0,u_0) \cap \calA_g(e_0)} {\textrm{argmin}}  F_{T_0}(\varrho) = \varrho_{T_0,n_0,u_0},
\end{align*}
which gives a solution to the minimization problem \eqref{pbm:mindenscourlocNRJ}. The uniqueness then follows from the strict convexity of $S$ in $\calE^+$.

We conclude the proof with the case $e_0=m_0$. Proceeding as in Step 2 of the proof of Theorem \ref{th2mom}, we can show that
$$
\min_{\calA(n_0,u_0)} E(\varrho)=\min_{\calA(n_0)} E(\varrho)+\frac{1}{2}\int_{\Omega} n_0 |u_0|^2 dx.
$$ 
Since $E(\cdot)$ and the constraints are linear in $\varrho$, the minima above are unique.
Then, a direct calculation based on the Cauchy-Schwarz inequality shows that, for any $\varrho \in \calE^+$, 
\be \label{lowEn}
\frac{1}{2}\|\nabla \sqrt{n[\varrho]} \|^2_{L^2} \leq E(\varrho).
\ee
Finally, according the hypotheses on $n_0$, we have that $\sqrt{n_0} \in H^1$, so that the operator $|\sqrt{n_0} \rangle \langle \sqrt{n_0} |$ is in $\calA(n_0)$. Together with \fref{lowEn}, this means that
$$
\min_{\calA(n_0)} E(\varrho)=\frac{1}{2}\|\nabla \sqrt{n_0} \|^2_{L^2}.
$$
Therefore, for any $\sigma \in \calA(n_0,u_0)$ with $\|k[\sigma]\|_{L^1}=m_0$, we have
$$
m_0=\|k[\sigma]\|_{L^1}=E(\sigma) \geq \min_{\calA(n_0)} E(\varrho)+\frac{1}{2}\int_{\Omega} n_0 |u_0|^2 dx=m_0.
$$
Since the minimizer of $E(\cdot)$ in $\calA(n_0)$ is unique, there is only one such $\sigma$ that reads, for the $f$ defined in Theorem \ref{th2mom}, $\sigma=e^{if}|\sqrt{n_0}\rangle \langle \sqrt{n_0}|e^{-if}$. This ends the proof of Theorem \ref{th3momg}.

\subsection{Proof of Theorem \ref{thm:NRJIncr}} \label{proofInc}
The proof comprises several steps that we list below:
\begin{enumerate}
\item Continuity of $\mathsf{E}_T$ and $\mathsf{S}_T$.
\item Limit of $\mathsf{E}_T$ as $T\to 0$.
\item Penalization and uniform bounds.
\item Application of the implicit function theorem to obtain the differentiability  with respect to $T$. 
\item Calculation of $\partial_T \mathsf{E}_T$ and a proof that $\mathsf{E}_T$ is nondecreasing.
\item Derivation of a relation between $\partial_T \mathsf{E}_T$ and $\partial_T \mathsf{S}_T$, proof that $\mathsf{S}_T$ is nonincreasing.
\item Strict increase of $\mathsf{E}_T$ and $-\mathsf{S}_T$.
  \end{enumerate}
\paragraph{Step 1: $\mathsf{E}_T$ and $\mathsf{S}_T$ are continuous.} For any $T>0$, let $\varrho_T :=\varrho_{T,u_0,n_0}$ be the unique solution to problem \fref{pbm:mindenscour}, and consider a sequence $\{T_n\}_{n\geq 0} \subset \Rm^*_+$ such that $T_n \to T$ as $n \to \infty$. Without loss of generality, we assume that $T_n\in [T_{-},T_{+}]$, $\forall n\geq0$. We will show first that 
\begin{equation}\label{eq:cont0}
\varrho_{T_n} \underset{n\rightarrow\infty}\longrightarrow \varrho_{T}, \qquad \textrm{strongly in} \quad \calJ_1, 
\end{equation}
and will do so by bounding uniformly $\varrho_{T_n}$ in $\calE^+$. For this, denote by $(\rho_p,\phi_p)_{p \in \Nm}$ the spectral elements of a density operator $\varrho \in \calE^+$, where the eigenvalues $\rho_p$ form a decreasing sequence with limit zero. Then,
\bea \nonumber
\Tr \big( \varrho \log \varrho  \big)&=&\sum_{p \in \Nm} \rho_p \log \rho_p \leq \sum_{ \rho_p \geq 1} \rho_p \log \rho_p \leq  \log \rho_0 \sum_{ \rho_p \geq 1} \rho_p\\ \label{estlog}
&\leq& \| \varrho\|_{\calJ_1} \log \|\varrho\|_{\calL(L^2)} \leq \| \varrho\|_{\calJ_1} \log \|\varrho\|_{\calJ_1}.
\eea
This, together with the facts that $\varrho_{T_n}$ is the minimizer of $F_{T_n}$ in $\calA(n_0,u_0)$ and that $\varrho_{T_+} \in \calA(n_0,u_0)$, leads to
\begin{equation}\label{eq:cont1}
F_{T_n}(\varrho_{T_n})\leq F_{T_n}(\varrho_{T_{+}})\leq E(\varrho_{T_{+}})+ T_+ \|n_0\|_{L^1} \log \|n_0\|_{L^1}.
\end{equation}
Furthermore, we have according to estimate \fref{souslin} of Lemma \ref{propentropie},
\begin{equation}\label{eq:cont2}
 E(\varrho_{T_n}) - T_{+} C E(\varrho_{T_n})^{1/2} \leq E(\varrho_{T_n}) - T_n C E(\varrho_{T_n})^{1/2}\leq E(\varrho_{T_n})+T_n S(\varrho_{T_n})  = F_{T_n}(\varrho_{T_n}).
\end{equation}
Combining \eqref{eq:cont1} and \eqref{eq:cont2}, we deduce that $E(\varrho_{T_n})$ is uniformly bounded. Thus, up to a subsequence and following Lemma \ref{strongconv}, $\varrho_{T_n}$ converges strongly in $\calJ_1$ to a $\sigma \in \calE^+$, which remains to be identified. To do so, we use again Lemmas \ref{strongconv} and \ref{propentropie} to obtain that 
\begin{align*}
E(\sigma) \leq \underset{n\rightarrow\infty}{\lim\inf} E(\varrho_{T_n}) \qquad\mbox{and} \qquad T S(\sigma) = \underset{n\rightarrow\infty}{\lim} T_n S(\varrho_{T_n}).
\end{align*}
The latter result shows that $\mathsf{S}_T$ is continuous. For the continuity of $\mathsf{E}_T$, we deduce from what is above that
\begin{equation*}
F_T(\sigma) \leq \underset{n\rightarrow\infty}{\lim\inf} F_{T_n}(\varrho_{T_n}) \leq \underset{n\rightarrow\infty}{\lim\sup} F_{T_n}(\varrho_{T_n}) \leq \underset{n\rightarrow\infty}{\lim\sup} F_{T_n}(\varrho_T) = F_T(\varrho_T).
\end{equation*}
Above, we used the fact that $\varrho_{T_n}$ is the minimizer of $F_{T_n}$ in $\calA(u_0,n_0)$, which shows that $F_{T_n}(\varrho_{T_n}) \leq F_{T_n}(\varrho_{T})$ since $\varrho_T \in \calA(n_0,u_0)$. Since the minimizer of $F_T$ in $\calA(n_0,u_0)$ is unique, we conclude that $\sigma = \varrho_T$, and that the entire sequence converges, which yields \eqref{eq:cont0}. Finally, the previous string of inequalities implies the continuity of $F_T(\varrho_T)$, which, since we have seen above that $\mathsf{S}_T$ is continuous, yields that $\mathsf{E}_T$ is continuous.

\paragraph{Step 2: The limit $T \to 0$.} The first part is to show that the free energy of a minimizer over $\calA(n_0,u_0)$ converges as $T \to 0$ to the minimum of the kinetic energy over $\calA(n_0,u_0)$. For this, estimate \fref{estlog}, together with item (i) of Lemma \ref{propentropie} and the use of the Young inequality, yields, for some $C_1$ and $C_2$ positive and independent of $\varrho$,
$$
E(\varrho) (1-C_1 T)-C_2 T \leq F_T(\varrho) \leq E(\varrho)+T \| \varrho\|_{\calJ_1} \log \|\varrho\|_{\calJ_1}, \qquad \forall \varrho \in \calE^+.
$$
Then,
$$
\left(\min_{\calA(n_0,u_0)} E(\varrho) \right)(1-C_1 T)-C_2 T \leq \min_{\calA(n_0,u_0)} F_T(\varrho) \leq \left(\min_{\calA(n_0,u_0)} E(\varrho) \right) + T \|n_0\|_{L^1} \log \|n_0\|_{L^1},
$$
and, as a consequence,
$$
\lim_{T \to 0} \min_{\calA(n_0,u_0)} F_T(\varrho)=\min_{\calA(n_0,u_0)} E(\varrho).
$$
Furthermore, if $(T_n)_{n \in \Nm}$ is any decreasing sequence with limit zero and $T_n \leq 1$, we can show as in Step 1 that $(\varrho_{T_n,n_0,u_0})_{n \in \Nm}$ is uniformly bounded in $\calE^+$, which implies in particular that
$$
\lim_{n \to \infty} \min_{\calA(n_0,u_0)} F_{T_n}(\varrho)=\lim_{n \to \infty} F_{T_n}(\varrho_{T_n,n_0,u_0})=\lim_{n \to \infty} E(\varrho_{T_n,n_0,u_0})=\min_{\calA(n_0,u_0)} E(\varrho).
$$
It thus remains to characterize the minimum of the energy over $\calA(n_0,u_0)$, which, as we have seen at the end of the proof of Theorem \ref{th3momg}, is $m_0$. This concludes the proof of \fref{limET}. 
\paragraph{Step 3: The penalized problem.} We remark once again that, according to \fref{shiftEC}, we have
$$
E(\varrho_{T,n_0,u_0})=E(\varrho_{T,n_0})+\frac{1}{2}\int_{\Omega} n_0 |u_0|^2 dx,
$$
which means that we only need to consider $\varrho_{T,n_0}$ to study the monoticity of $E(\varrho_{T,n_0,u_0})$ since $n_0$ and $u_0$ are independent of $T$. Consider now the penalized functional
\begin{equation*}
F_\eps(\varrho):=F_T(\varrho) + \frac{1}{2\varepsilon}\| n[\varrho] - n_0\|_{L^2}^2.
\end{equation*}
The main properties of $F_\eps$ that we need are the following, see \cite{MP-JSP} for the details: first, $F_\eps$ admits a unique minimizer in $\calE^+$ that we denote by $\varrho_\eps$; furthermore, for all $T>0$ and all $n_0 \in \Hp$, $\varrho_\eps$ converges to $\varrho_{T,n_0}$ strongly in $\calE$ as $\eps \to 0$. Also, $\varrho_\eps$ admits the expression
$$
\varrho_\eps= \exp \left( -\frac{H_0+A_\eps}{T}  \right), \qquad A_\eps=\frac{n[\varrho_\eps]-n_0}{\eps} \in \Hp.
$$  
Note that the relation between $A_\eps$ and $\varrho_\eps$ is linear and fairly simple. This is the main reason for introducing the penalized problem, and will allow us to establish the differentiability of $\varrho_\eps$ with respect to $T$. We have also an estimate of the form (this follows from a similar calculation as \fref{eq:cont2}),
\be \label{unifT}
\|\varrho_\eps\|_{\calE} \leq C(1+T^2),
\ee
which also holds for $\varrho_{T,n_0}$. We have finally
\be  \label{nL2}
\| n[\varrho_\eps] - n[\varrho_{T,n_0}]\|_{L^2}^2 \leq C \eps (1+T^2).
\ee

With the purpose of deriving estimates uniform in $\eps$, it is shown in \cite[Proposition 4.1]{mehats2017quantum}, that the chemical potential $A_\eps$ can be written as (accounting for $T$ and the factor $1/2$ in the free Hamiltonian), with  $n_\eps=n[\varrho_\eps]$,
\be \label{expA}
A_\eps= \frac{1}{n_\eps} \left(\frac{1}{4} \Delta n[\varrho_\eps] + \frac{1}{2} n[\nabla \varrho_\eps \nabla ] + T n[\varrho_\eps \log \varrho_\eps] \right).
\ee
The latter expression has to be understood in $\Hmp$ since we only know at this stage that $\varrho_\eps \in \calE^+$ and therefore that $n_\eps \in \Hp$ by \fref{gradnl2}. The main result of this section is the estimate below, that will be crucial when passing to the limit: for all $T \in [T_1,T_2]$,
\be \label{unifA}
\| A_\eps \|_{L^2} \leq C.
\ee
The constant $C$ depends on the interval $[T_1,T_2]$ but not on $T$. In order to prove \fref{unifA}, we write $n[\varrho_\eps]= \eps A_\eps + n_0$ and find from \fref{expA},
\be \label{eqA}
 \int_{\Omega} n_\eps (A_\eps)^2 dx+\frac{\eps}{4} \int_{\Omega} \left( \nabla A_\eps\right)^2dx = \left(\frac{1}{4}\Delta n_0+\frac{1}{2}n[\nabla \varrho_\eps \nabla ] + T n[\varrho_\eps \log \varrho_\eps], A_\eps \right).
\ee
Since $\varrho_\eps$ converges strongly to $\varrho$ in $\calE$, we can conclude from \fref{gradnl2} and a Sobolev embedding that the respective local densities converge strongly in $\calC^0([0,1])$. In particular, with the Gagliardo-Nirenberg estimate
$$
\|u\|_{\calC^0([0,1])} \leq C \|u\|^{1/2}_{L^2} \|u\|^{1/2}_{H^1}, 
$$
we find from \fref{nL2}, \fref{gradnl2} and \fref{unifT}, 
$$
\|n_\eps-n_0\|_{\calC^0([0,1])} \leq C (1+T^2)^{1/4+3/8}\eps^{1/4}.
$$
For $T \leq T_2$, and since $n_0$ is uniformly bounded from below by a strictly positive constant, we have, for $\eps$ sufficiently small,
\be \label{below}
n_\eps(x) \geq C>0, \qquad \forall x \in [0,1],
\ee
where $C$ does not depend on $T$. Besides, for $(\rho_{p,\eps}, \phi_{p,\eps})_{p \in \Nm}$ the spectral elements of $\varrho_\eps$,
$$
\| n[\nabla \varrho_\eps \nabla ] \|_{L^2} \leq \sum_{p \in \Nm} \rho_{p,\eps} \| \nabla \phi_{p,\eps} \|^2_{L^4},
$$
that we need to control uniformly. For this, we realize first that $\varrho_\eps$ is uniformly bounded in $\calE$ since it converges to $\varrho$ in $\calE$. Then, according to Lemma \ref{regmin},
\be \label{HrH}
\Tr\left( H_0 \varrho_\eps H_0 \right) \leq C (1+T^2)\left(1+ (1+\|A_\eps\|^2_{L^2}) \| \varrho_\eps \|_{\calE}\right) \leq C(1+T^2)(1+\|A_\eps\|^2_{L^2}).
\ee
By the Gagliardo-Nirenberg inequality
$$
\|  \varphi\|_{L^4} \leq C \|  \varphi \|_{L^2}^{3/4} \|  \varphi \|_{H^{1}}^{1/4}, 
$$
we find, together with \fref{HrH} and the Cauchy-Schwarz inequality, that
\bea \nonumber
\| n[\nabla \varrho_\eps \nabla ] \|_{L^2} &\leq & C \| \varrho_\eps \|^{3/4}_{\calE} \left(\| \varrho_\eps \|_{\calE}+ \Tr\left( H_0 \varrho_\eps H_0 \right) \right)^{1/4}\\ \label{estnab}
&\leq& C(1+T^2)^{1/4}\left(1+ \|A_\eps\|^2_{L^2} \right)^{1/4}.
\eea
In the same way, 
\bea \label{elog}
\| n[\varrho_\eps \log \varrho_\eps ] \|_{L^2} &\leq& \sum_{p \in \Nm} | \rho_{p,\eps} \log \rho_{p,\eps}|  \|  \phi_{p,\eps} \|^2_{L^4}\\\nonumber
&\leq& \left(\sum_{p \in \Nm} \rho_{p,\eps} |\log \rho_{p,\eps}|^{4/3} \right)^{3/4}  \left(\sum_{p \in \Nm} \rho_{p,\eps}\| \nabla \phi_{p,\eps} \|^2_{L^2} \right)^{1/4}.
\eea
Since $\|\varrho_\eps\|_{\calJ_1} \leq C_0$, we have that $\rho_{p,\eps} \leq C_0 $, $\forall p \in \Nm$. Then, for $x \in [0,C_0]$, there exists $C$ such that $x |\log x|^{4/3} \leq C x^{2/3}$ for instance. According to Lemma \ref{lieb2} for the second inequality below, we find
$$
\sum_{p \in \Nm} \rho_{p,\eps} |\log \rho_{p,\eps}|^{4/3} \leq C \Tr (\varrho_\eps^{2/3}) \leq C \| \varrho_\eps \|^{2/3}_{\calE}.
$$
This shows that $n[\varrho_\eps \log \varrho_\eps ]$ is bounded in $L^2$ independently of $\eps$. We are now ready to conclude. Going back to \fref{eqA}, using \fref{below} together with \fref{estnab}, yields
$$
\|A_\eps \|_{L^2} \leq C \left( 1+\| \Delta n_0 \|_{L^2}+T+(1+T^2)^{1/4} (1+ \|A_\eps\|^2_{L^2})^{1/4}\right).
$$
Since $\Delta n_0 \in L^2$ by the hypotheses, an homogeneity argument for any $T \in [T_1,T_2]$ leads to \fref{unifA}.

\paragraph{Step 4: $\varrho_{\eps}$ is differentiable with respect to $T$.} We prove in this section that the penalized solution $\varrho_\eps$ is differentiable w.r.t. the temperature $T$. Since the parameter $\eps$ plays no role here, we set it to one to ease notation. We will also use the standard shorthand $\beta=1/T$. We start by recalling the implicit function theorem on Banach spaces.

\begin{theorem}
  Let $X,Y, Z$ be three Banach spaces and $f: X\times Y \to Z$ be a continuously Fr\'echet differentiable mapping. If $(x_{0},y_{0})\in X\times Y$, $f(x_{0},y_{0})=0$ and $y\mapsto Df(x_{0},y_{0})(0,y)$ is a Banach space isomorphism from $Y$ onto $Z$, then there exist neighbourhoods $U$ of $x_0$ and $V$ of $y_0$ and a Fr\'echet differentiable function $g : U \to V$ such that $f(x, g(x)) = 0$, and $f(x, y) = 0$ if and only if $y = g(x)$, for all $(x,y)\in U\times V$.
\end{theorem}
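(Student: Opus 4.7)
The plan is to reduce the statement to a fixed-point problem and invoke the Banach contraction principle. Set $L := D_y f(x_0,y_0) \in \calL(Y,Z)$, which is an isomorphism by hypothesis. After translating so that $(x_0,y_0) = (0,0)$ and composing $f$ with $L^{-1}$ on the left, I may assume without loss of generality that $f(0,0) = 0$ and $D_y f(0,0) = \mathrm{Id}_Y$. Define $\Phi : X \times Y \to Y$ by $\Phi(x,y) := y - f(x,y)$. Zeros of $f(x,\cdot)$ coincide with fixed points of $\Phi(x,\cdot)$, so it suffices to show that for $x$ near $0$, $\Phi(x,\cdot)$ has a unique fixed point $g(x)$ in a small ball of $Y$, and that $g$ is Fr\'echet differentiable.

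For the contraction step, observe that $D_y \Phi(0,0) = \mathrm{Id} - \mathrm{Id} = 0$. By continuity of $Df$, there exist $\rho, r > 0$ such that $\|D_y \Phi(x,y)\|_{\calL(Y)} \leq 1/2$ whenever $\|x\| \leq r$ and $\|y\| \leq \rho$. The mean value inequality then makes $\Phi(x,\cdot)$ a $1/2$-Lipschitz contraction on the closed ball $\overline{B}_\rho \subset Y$. Shrinking $r$ further so that $\|\Phi(x,0)\| = \|f(x,0)\| \leq \rho/2$ for $\|x\| \leq r$, the map sends $\overline{B}_\rho$ into itself. Banach's fixed point theorem then yields a unique $g(x) \in \overline{B}_\rho$ with $f(x,g(x)) = 0$, and continuity of $g$ follows by subtracting $\Phi(x',g(x')) = g(x')$ from $\Phi(x,g(x)) = g(x)$ and applying the contraction estimate together with continuity of $\Phi$ in the $x$-variable.

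The main obstacle is Fr\'echet differentiability of $g$. For this I would fix $\bar x$ near $0$, set $\bar y = g(\bar x)$, and observe that $D_y f(\bar x, \bar y)$ remains invertible, by continuity of $Df$ and the openness of the set of isomorphisms in $\calL(Y,Z)$. I propose the candidate differential $M(\bar x) := -D_y f(\bar x,\bar y)^{-1} \circ D_x f(\bar x,\bar y)$, and verify it by Taylor-expanding the identity $0 = f(\bar x + h, g(\bar x+h)) - f(\bar x, g(\bar x))$ as
\begin{equation*}
0 = D_x f(\bar x,\bar y) h + D_y f(\bar x,\bar y)\bigl(g(\bar x+h) - g(\bar x)\bigr) + o\bigl(\|h\| + \|g(\bar x+h) - g(\bar x)\|\bigr),
\end{equation*}
then solving for $g(\bar x+h) - g(\bar x)$ via $D_y f(\bar x, \bar y)^{-1}$ and using the previously established continuity of $g$ to absorb the $o(\|g(\bar x+h) - g(\bar x)\|)$ term into the remainder. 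Continuity of $x \mapsto M(x)$ follows from continuity of the partial derivatives of $f$ and of the inversion map on the open set of isomorphisms in $\calL(Y,Z)$, so that $g \in \mathcal{C}^1$ on a neighborhood of $x_0$.
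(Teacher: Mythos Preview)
The paper does not prove this theorem: it is simply stated as the classical implicit function theorem on Banach spaces, recalled in order to apply it to the functional $\calI$ in Step~4c. There is therefore no ``paper's own proof'' to compare against.

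Your argument is the standard contraction-mapping proof and is essentially correct. One small point of imprecision: in the differentiability step you write that continuity of $g$ suffices to ``absorb the $o(\|g(\bar x+h)-g(\bar x)\|)$ term into the remainder''. Continuity alone only ensures that $(h,\,g(\bar x+h)-g(\bar x))$ tends to zero so that the first-order Taylor expansion of $f$ is valid; to conclude that the remainder is $o(\|h\|)$ you need the local Lipschitz bound $\|g(\bar x+h)-g(\bar x)\|=O(\|h\|)$. This follows immediately either from the contraction estimate (since $\|g(x)-g(x')\|\le 2\|\Phi(x,g(x))-\Phi(x',g(x))\|$ and $\Phi$ is $\calC^1$), or by taking norms in your displayed identity and rearranging once the little-$o$ coefficient is below $1/2$. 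With that clarification the proof is complete.
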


Consider  the functional  $\calI$ defined formally by
$$
\calI(\beta,\varrho) : = \varrho - e^{-\beta (H_0+A[\varrho])},\qquad A [\varrho] : = n[\varrho] - n_0.
$$
 By construction, we have $\calI(\beta,\varrho_\eps)=0$ for any $\beta>0$, and we will show that $\calI$ is continuously differentiable on $\Rm^*_+ \times \calE$. We start with the continuity of $\calI$.\\

\noindent \textit{Step 4a: The functional $\calI$ is continuous from $\Rm^*_+ \times \calE$ to $\calE$.} We prove first that $\calI$ is continuous from  $\Rm^*_+ \times \calE$ to $\calL(L^2)$. For this, consider a sequence $(\beta_k,\varrho_k)_{k\in \Nm} \subset [\beta_0,\beta_1] \times \calE$ such that
\begin{equation}\label{eq:seqContin}
(\beta_k,\varrho_k)  \underset{k\rightarrow+\infty}{\longrightarrow} (\beta,\varrho), \; \mbox{in } [\beta_0,\beta_1] \times \calE.
\end{equation}
Estimate \fref{gradnl2} of Lemma \ref{lieb} shows that $A[\varrho_k]$ is uniformly bounded in $H^1$, which, together with a Sobolev embedding, yields
\begin{equation} \label{convAk}
A[\varrho_k]\underset{k\rightarrow+\infty}{\longrightarrow} A[\varrho], \qquad \mbox{strongly in } L^\infty.
\end{equation}
For any $\varrho\in \calE$, define then the self-adjoint operator $H[\varrho] : = H_0 + A[\varrho]$ with domain $\H2p$ (equipped with its usual norm, which is equivalent to $(\|\varphi\|_{L^2}^2 + \|(H_0 + A[\varrho])\varphi\|_{L^2}^2)^{1/2}$ since $A[\varrho] \in L^\infty$). Write then
$$
\beta H[\varrho]- \beta_k H[\varrho_k]= (\beta-\beta_k)H[\varrho]+\beta_k (H[\varrho]-H[\varrho_k]):=a_k+b_k.
$$
From \fref{eq:seqContin} and \fref{convAk}, we have 
$$
\sup_{\|\varphi\|_{\H2p}=1} \left(\| a_k \varphi \|_{L^2} + \| b_k \varphi \|_{L^2}\right)\underset{k\rightarrow+\infty}{\longrightarrow} 0,
$$
which, according to \cite{RS-80-I}, Theorem 8.25, item (b), yields the convergence of $\beta_k H[\varrho_k]$ to $\beta H[\varrho]$ in the norm resolvent sense. In turn, we conclude from \cite[Theorem 8.20, item (a)]{RS-80-I}, that 
\begin{equation} \label{convL}
\left\|e^{-\beta_k H[\varrho_k]} - e^{-\beta H[\varrho]}\right\|_{\calL(L^2)}\underset{k\rightarrow+\infty}{\longrightarrow}0,
\end{equation}
which proves the continuity of $I$ from  $\Rm^*_+ \times \calE$ to $\calL(L^2)$.

We address now the continuity in  $\calE$ using compactness arguments. First, the operator $H[\varrho_k]$ has a compact resolvent and denote by $(\lambda_m[\varrho_k],\phi_m[\varrho_k])_{m\in \Nm}$ its spectral elements. The min-max principle yields the inequality
\begin{equation}\label{eq:bndeigen}
  \gamma_m - \|A[\varrho_k]\|_{L^\infty} \leq \lambda_m[\varrho_k] \leq \gamma_m + \|A[\varrho_k]\|_{L^\infty},
\end{equation}
where the $\gamma_m $ are the eigenvalues of $H_0$, which, counting muliplicities, read $\gamma_0=0$, $\gamma_{2k} =\gamma_{2k-1}= 2 (\pi k)^2$ for $k \geq 1$. We also have the direct estimate, using \fref{convAk},
$$
\| \nabla \phi_m[\varrho_k]\|^2_{L^2} \leq  2\lambda_m[\varrho_k]+2\|A[\varrho_k]\|_{L^\infty} +C \leq C \gamma_m+C.
$$
This then yields, for $\sigma_k=\exp(-\beta_k H[\varrho_k])$,
\be \label{estsigE}
\Tr \big( \sqrt{H_0} \sigma_k \sqrt{H_0}\big)=\sum_{m \in \Nm} e^{- \beta_k \lambda_m[\varrho_k]} \|\nabla \phi_m[\varrho_k]\|^2_{L^2} \leq C\sum_{m \in \Nm} e^{-\beta_0 (C \gamma_m+C)} (1+\gamma_m)  \leq C.
\ee
Since $\sigma_k$ is positive, this shows that $\sigma_k$ is uniformly bounded in $\calE$. Using \fref{HrH}, we obtain from \fref{convAk} and \fref{estsigE} the estimate
$$
\Tr\left(H_0 \sigma_k H_0 \right) \leq C.
$$
This, together with Lemma \ref{strongconv} item (ii) and \fref{convL}, shows that $\sigma_k$ converges strongly in $\calE$ to $e^{-\beta H[\varrho]}$. This proves the continuity from $\Rm^*_+ \times \calE$ to $\calE$. We address now the differentiability.\\

\noindent \textit{Step 4b: Differentiability of  $\calI$.} With the representation
$$
e^{-\beta H[\varrho]}= \frac 1 {2i\pi} \int_{\gamma} e^{-\beta z} (z - H[\varrho])^{-1}dz,
$$
a direct calculation shows that the Fr\'echet derivative of $\calI$ with respect to $\varrho \in \calE$, in the direction $\sigma \in \calE$ and at the point $(\beta,\varrho)$, reads
\begin{equation*}
D\calI[\beta,\varrho](\sigma) = \sigma -\frac 1 {2i\pi} \int_{\gamma} e^{-\beta z }(z - H[\varrho])^{-1} n[\sigma] (z - H[\varrho])^{-1}dz =: \sigma - Z[\beta,\varrho](\sigma).
\end{equation*}
Above, $\gamma$ is a smooth contour in the complex plane, symmetric around the real axis, and defined as follows: for some $M>0$, denote by $B_M$ the ball $\{\varrho \in \calE, \; \|\varrho \|_{\calE} \leq M\}$. Then, by estimate \fref{ninfty}, there is a constant $c_M>0$ such that
$$
\|A[\varrho]\|_{L^\infty} \leq \|n[\varrho]\|_{L^\infty}+\|n_0\|_{L^\infty} \leq C \|\varrho\|_{\calE}+\|n_0\|_{L^\infty} \leq c_M.
$$
For all $\varrho \in B_M$, and denoting by $\textrm{Sp}(H[\varrho])$ the spectrum of $H[\varrho]$ equipped with the domain $\H2p$, we have
$$
\textrm{Sp}(H[\varrho]) \subset [-c_M,\infty).
$$
Fix $r=c_M+1$. The contour $\gamma$ is then written as $\gamma=\gamma_1 \cup \gamma_2 \cup \gamma_3$, where $\gamma_2=\{z \in \Cm, z= r e^{i \theta}, \; \theta \notin (-\pi/4, \pi/4) \}$, where $\gamma_3$ is a broken line starting at $r e^{i \pi/4}$ and going to the infinity with an angle $\theta_0 \in (0, \pi/2)$, and $\gamma_1$ is the symmetric of $\gamma_3$ with respect to the real axis. The contour $\gamma$ is oriented from the positive to the negative imaginary parts. See \cite[page 249]{RS-80-2} for a depiction of a similar contour. By construction of $\gamma$, there exists then $m_M >0$ such that

\be \label{belowresolv}
 \|(z-H[\varrho])^{-1}\|_{\calL(L^2)} \leq m_M, \qquad \forall z \in \gamma, \qquad \forall \varrho \in B_M.
\ee

We will prove that the derivative $D\calI[\beta,\varrho]$ is continuous from $\Rm^*_+\times \calE$ to $\calL(\calE)$ (the space of bounded operator from $\calE$ to $\calE$). For this, we clearly only need to work with $Z[\beta,\varrho]$, and we will proceed as in the proof of continuity of $\calI$: we show first the continuity from $\Rm^*_+ \times \calE$ to $\calL(\calE, \calL(L^2))$, and then prove some uniform estimates and proceed by compactness. A key ingredient is the following lemma:

\begin{lemma} \label{boundZ} There exists $C_{M,\beta_0, \beta_1}>0$ such that, for all $\varrho \in B_M$ and any $\beta \in [\beta_0,\beta_1]$, we have
$$
\Tr\Big( (1+H_0)^{3/4} |Z[\beta,\varrho](\sigma)| (1+H_0)^{3/4}\Big) \leq C_{M,\beta_0,\beta_1} \|n[\sigma]\|_{L^\infty}, \qquad  \forall \sigma \in \calE.
$$ 
\end{lemma}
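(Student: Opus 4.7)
} The plan is to reduce to a weighted trace estimate for $|Z|$ through three steps: a Duhamel representation, a comparison between the weights $(1+H_0)^{3/4}$ and the natural fractional powers of $H[\varrho]$, and a spectral estimate of the reduced trace.

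\emph{Step 1 (Duhamel representation).} Applying Cauchy's theorem on $\gamma$ to the partial-fraction identity $\frac{1}{(z-a)(z-b)} = \frac{1}{a-b}\bigl(\frac{1}{z-a}-\frac{1}{z-b}\bigr)$ together with the functional calculus identity $\frac{1}{2i\pi}\int_\gamma e^{-\beta z}(z-a)^{-1}\,dz = e^{-\beta a}$, one rewrites the operator $Z[\beta,\varrho](\sigma)$ as the Duhamel integral
\begin{equation*}
Z[\beta,\varrho](\sigma) = -\int_0^\beta e^{-tH[\varrho]}\,n[\sigma]\,e^{-(\beta-t)H[\varrho]}\,dt.
\end{equation*}
The change of variable $t \mapsto \beta-t$ shows that $Z[\beta,\varrho](\sigma)$ is self-adjoint, so that $|Z|=\sqrt{Z^2}$ is a genuine positive operator.

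\emph{Step 2 (weight comparison).} Set $\tilde H := c_M+1+H[\varrho]$, which is $\geq 1$ on $B_M$ by construction. Because $A[\varrho]\in L^\infty$ with norm bounded by $c_M$, the form domains of $\tilde H$ and $1+H_0$ both equal $\Hp$ and there is a two-sided operator inequality $C_1(1+H_0)\le \tilde H\le C_2(1+H_0)$ with constants depending only on $M$. The L\"owner--Heinz inequality applied to the operator monotone function $x\mapsto x^{3/4}$ on $[1,\infty)$ yields that $J:=(1+H_0)^{3/4}\tilde H^{-3/4}$ is bounded in $\calL(L^2)$ uniformly on $B_M$. Since $|Z|\geq 0$, one then has
\begin{equation*}
\Tr\bigl((1+H_0)^{3/4}|Z|(1+H_0)^{3/4}\bigr)\le \|J\|_{\calL(L^2)}^2\,\Tr\bigl(\tilde H^{3/4}|Z|\tilde H^{3/4}\bigr),
\end{equation*}
reducing the problem to an estimate of the right-hand side in which $\tilde H$ commutes with every $e^{-sH[\varrho]}$.

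\emph{Step 3 (spectral estimate).} In the eigenbasis $(\phi_m,\lambda_m)_{m\in\Nm}$ of $H[\varrho]$ the matrix elements of $Z$ are explicit:
\begin{equation*}
Z_{mn}=-\langle\phi_m,n[\sigma]\phi_n\rangle\,f(\lambda_m,\lambda_n),\qquad f(a,b):=\int_0^\beta e^{-ta-(\beta-t)b}\,dt,
\end{equation*}
with the uniform bound $|f(a,b)|\le\beta\,e^{-\beta\min(a,b)}$. I would then diagonalize the positive operator $\tilde H^{3/4}|Z|\tilde H^{3/4}$ and bound its trace by combining (i) the divided-difference decay of $f$, (ii) the Parseval identity $\sum_n|\langle\phi_m,n[\sigma]\phi_n\rangle|^2 = \|n[\sigma]\phi_m\|_{L^2}^2\le \|n[\sigma]\|_{L^\infty}^2$, and (iii) the polynomial growth $(c_M+1+\lambda_m)^{3/2}\sim m^3$ versus the exponential decay $e^{-\beta\lambda_m}\sim e^{-c m^2}$ that should be extracted from $|Z|$ uniformly in $\beta\in[\beta_0,\beta_1]$.

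\emph{Main obstacle.} The hardest part is to recover integrability in spite of the breakdown of the naive Schatten--H\"older bound on the Duhamel integrand: the factors $\|\tilde H^{3/4}e^{-tH[\varrho]}\|_{\calJ_2}$ and $\|e^{-(\beta-t)H[\varrho]}\tilde H^{3/4}\|_{\calJ_2}$ blow up respectively like $t^{-1}$ and $(\beta-t)^{-1}$, so the pointwise-in-$t$ product $\|\tilde H^{3/4}e^{-tH[\varrho]}\|_{\calJ_2}\cdot\|n[\sigma]\|_{L^\infty}\cdot\|e^{-(\beta-t)H[\varrho]}\tilde H^{3/4}\|_{\calJ_2}$ is not integrable on $(0,\beta)$. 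Hence one cannot push the trace norm inside the Duhamel integral; the cancellations must be isolated in $|Z|$ itself, either through the spectral analysis of Step 3 (using Parseval in $n$ to absorb the lack of decay of the individual matrix elements $\langle\phi_m,n[\sigma]\phi_n\rangle$) or, equivalently, via an operator inequality of the type $|Z|\le C_\beta\,\|n[\sigma]\|_{L^\infty}\,\varphi(H[\varrho])$ for an exponentially decaying $\varphi$, derived by operator Cauchy--Schwarz applied to the Duhamel identity. Ensuring the constants are uniform for $\varrho\in B_M$ and $\beta\in[\beta_0,\beta_1]$ adds an additional layer of bookkeeping.
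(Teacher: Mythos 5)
You correctly identify the contour/Duhamel representation and you correctly diagnose the central obstacle (the naive Schatten--H\"older bound on the Duhamel integrand is non-integrable in $t$), but neither of your two proposed fixes actually closes the argument, and the paper's resolution is quite different from both.

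On the spectral/Parseval route (your alternative (a) in Step 3): the bound $\langle\phi_m,|Z|\phi_m\rangle \le \|Z\phi_m\|_{L^2} = (\sum_n |f(\lambda_n,\lambda_m)|^2\,|\langle\phi_n,n[\sigma]\phi_m\rangle|^2)^{1/2}$, combined with Parseval $\sum_n |\langle\phi_n,n[\sigma]\phi_m\rangle|^2 \le \|n[\sigma]\|_{L^\infty}^2$, forces you to bound $\sup_n |f(\lambda_n,\lambda_m)|$ for fixed $m$. That supremum is attained at small $n$, where $|f(\lambda_n,\lambda_m)|\sim e^{-\beta\lambda_n}/\lambda_m$ decays only algebraically in $m$, not exponentially. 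The resulting weighted sum $\sum_m \mu_m^{3/2}/\lambda_m \sim \sum_m m$ diverges. Your alternative (b), $|Z| \le C_\beta \|n[\sigma]\|_{L^\infty}\varphi(H[\varrho])$, is not provable by operator Cauchy--Schwarz on the Duhamel identity because the two exponentials sandwiching $n[\sigma]$ have different time parameters, so $|n[\sigma]|\le\|n[\sigma]\|_{L^\infty}$ cannot be applied inside the quadratic form. Also, your L\"owner--Heinz weight replacement in Step~2 is valid but turns out to be a detour: the paper keeps the weight $(1+H_0)$ as is and exploits it algebraically instead.

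Here is what the paper actually does. First, reduce $|Z|$ to the positive operator
$W = Z_+ + Z_-$ with $Z_\pm$ built from $n_\mp[\sigma]\ge 0$ (so that $W$ has the same contour representation as $Z$ but with $|n[\sigma]|$ in place of $n[\sigma]$); this is the triangle inequality in $\calJ_1(H^{3/2})$. Then, after regularizing with $R_\eta=(I+\eta H_0)^{-2}$ to justify cyclicity, move the full weight $R_\eta^2(I+H_0)^{3/2}$ to one side of the trace and factor it as $(I+H_0)\cdot\sqrt{I+H_0}$. The decisive move is the resolvent identity
\begin{equation*}
(I+H_0)(z-H[\varrho])^{-1} = -I + \bigl(1 + z - A[\varrho]\bigr)(z-H[\varrho])^{-1},
\end{equation*}
which peels the operator $(I+H_0)$ off of one resolvent. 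This splits the weighted trace into two pieces: a piece $F_1$ with \emph{no} resolvent left, where the $z$-integral collapses by functional calculus to $e^{-\beta H[\varrho]}$, so that
$F_1 = \Tr(R_\eta^2\,|n[\sigma]|\,e^{-\beta H[\varrho]}\sqrt{I+H_0})$ is bounded by $\|n[\sigma]\|_{L^\infty}$ times the $\calE$-norm of $e^{-\beta H[\varrho]}$ (uniformly controlled on $B_M$); and a piece $F_2$ with exactly one remaining resolvent $(z-H[\varrho])^{-1}$, which in dimension one is trace class uniformly for $z\in\gamma$ and $\varrho\in B_M$, while $(z-H[\varrho])^{-1}\sqrt{I+H_0}$ is uniformly bounded in $\calL(L^2)$. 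The polynomial prefactor $\|A[\varrho]-z-1\|_{L^\infty}\sim 1+|z|$ is then absorbed by $|e^{-\beta z}|$ along $\gamma$. It is precisely this peeling-off of one power of the weight by the resolvent identity that trades the non-integrable singularity you identified for a single trace-class resolvent factor; the spectral/Parseval bookkeeping in your Step~3, as written, has no mechanism that reproduces this gain.
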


The proof is postponed to the end of the section. As announced, we start with the continuity from $\Rm_+^* \times \calE$ to $\calL(\calE, \calL(L^2))$. Consider then a sequence $(\beta_k,\varrho_k)_{k\in \Nm} \subset [\beta_0,\beta_1] \times \calE$ as in \fref{eq:seqContin} and such that $\|\varrho_k\|_{\calE}\leq M_0$, for some $M_0>0$ independent of $k$. We choose the constant $M$ in the construction of the contour $\gamma$ such that $M \geq M_0$ and $\|\varrho\|_{\calE}\leq M$ so that we can use the same contour $\gamma$ for both $\varrho$ and $\varrho_k$. We write 
$$
Z[\beta_k,\varrho_k]-Z[\beta,\varrho]=R_{k,1}+R_{k,2},
$$
where
\bee
R_{k,1}(\sigma)&=&\frac 1 {2i\pi} \int_{\gamma} \left(e^{-\beta_k z }-e^{-\beta z } \right)(z - H[\varrho_k])^{-1} n[\sigma] (z - H[\varrho_k])^{-1}dz\\
R_{k,2}(\sigma)&=&\frac 1 {2i\pi} \int_{\gamma} e^{-\beta z } \calR_k[z] dz
\eee
and 
$$
\calR_k[z](\sigma)=(z - H[\varrho_k])^{-1}n[\sigma](z - H[\varrho_k])^{-1}-(z - H[\varrho])^{-1}n[\sigma](z - H[\varrho])^{-1}.
$$
Using $\fref{ninfty}$ to bound $\| n[\sigma]\|_{L^\infty}$ by $\|\sigma\|_\calE$, we find from \fref{belowresolv} that
$$
\|R_{k,1}(\sigma)\|_{\calL(L^2)} \leq C \|\sigma \|_{\calE} \int_{\gamma} \left|e^{-\beta_k z }-e^{-\beta z } \right|dz.
$$
Since $|e^{-\beta z }-e^{-\beta_k z }| \leq e^{-\beta |z| }+e^{-\beta_0 c |z|}$, for all $\beta_k \in [\beta_0,\beta_1]$ and for some constant $c$ that depends on the contour $\gamma$, it follows from dominated convergence that
$$
\|R_{k,1}[\varrho]\|_{\calL(\calE,\calL(L^2))} \underset{k\rightarrow+\infty}{\longrightarrow}0.
$$
Regarding $R_{k,2}$, we have
$$
\|R_{k,2}(\sigma)\|_{\calL(L^2)} \leq C \|\sigma \|_{\calE} \int_{\gamma} \left|e^{-\beta z } \right| \left\|(z - H[\varrho])^{-1}-(z - H[\varrho_k])^{-1}\right\|_{\calL(L^2)} dz.
$$
Since $\beta_k H[\varrho_k]$ converges to $\beta H[\varrho_k]$ in the norm resolvent sense as proved in Step 4a, and as a consequence $H[\varrho_k]$ to $H[\varrho]$, it follows that 
$$
\|(z - H[\varrho])^{-1}-(z - H[\varrho_k])^{-1}\|_{\calL(L^2)} \underset{k\rightarrow+\infty}{\longrightarrow}0, \qquad \forall z \in \gamma.
$$
This, together with \fref{belowresolv} and dominated convergence, shows that  $R_{k,2}$ converges strongly to zero in $\calL(\calE,\calL(L^2))$ as $k \to \infty$. We have therefore obtained the continuity of $Z[\beta,\varrho]$ from $\Rm_+^* \times \calE$ to $\calL(\calE,\calL(L^2))$, that is
\be \label{convLL1}
\|Z[\beta_k,\varrho_k]-Z[\beta,\varrho]\|_{\calL(\calE,\calL(L^2))} \underset{k\rightarrow+\infty}{\longrightarrow}0.
\ee
 
We now prove the continuity from $\Rm_+^* \times \calE$ to $\calL(\calE)$. Lemma \ref{boundZ} implies that, $\forall k \in \Nm$,
$$
\Tr\Big( (I+H_0)^{3/4} |Z[\beta_k,\varrho_k](\sigma)| (1+H_0)^{3/4} \Big) \leq C_{M,\beta_0,\beta_1} \|\sigma\|_{\calE}, \qquad  \forall \sigma \in \calE.
$$
Fix $\sigma \in \calE$. Applying item (ii) of Lemma \ref{strongconv} together with \fref{convLL1}, we can  conclude that $Z[\beta_k,\varrho_k](\sigma)$ converges to $Z[\beta,\varrho](\sigma)$ strongly in $\calE$. This holds for all $\sigma \in \calE$, and provides us with the pointwise convergence of $Z[\beta_k,\varrho_k]$.

The convergence in $\calL(\calE)$ follows from a standard $\eps/3$ argument: denote by $B_\calE$ the unit ball of $\calE$, which is compact for the the weak-$*$ topology. It is metrizable since the space of compact operators on $L^2$ is separable, and denote by $d_\calE$ an associated  metric. From the pointwise convergence,  one can construct by a diagonal argument a subsequence such that $Z[\beta_k,\varrho_k](\sigma) \to Z[\beta,\varrho](\sigma)$ strongly in $\calE$ for all $\sigma$ in a countable dense subset of $B_\calE$ for the metric $d_\calE$. Since we have some equicontinuity from Lemma \ref{boundZ}, i.e.
$$
\| Z[\beta,\varrho](\sigma) \|_{\calE} \leq C_{M,\beta_0,\beta_1} \|n[\sigma]\|_{L^\infty}, \qquad \forall \varrho \in B_M, \; \forall \beta \in [\beta_0,\beta_1],
$$
it suffices to show that the map $\sigma \to n[\sigma]$ is continuous from $B_\calE$ equipped with $d_\calE$ to the space $L^\infty$ to conclude the $\eps/3$ argument. This is direct: take $\sigma \in B_\calE$ and a sequence $\sigma_p \in B_\calE$ such that $d_\calE(\sigma,\sigma_p) \to 0$. The sequence $n[\sigma_p]$ is then uniformy bounded in $H^1$ by \fref{gradnl2}, and as a consequence of classical Sobolev embeddings, there is a subsequence such that $n[\sigma_p]$ converges to some $n$ strongly in $L^\infty$. According to Lemma \ref{strongconv} (i), there is a subsequence such that $\sigma_p$ converges to $\sigma$ strongly in $\calJ_1$, which implies that $n[\sigma_p] \to n[\sigma]$ strongly in $L^1$. Hence, $n=n[\sigma]$, and we have obtained a subsequence such that $n[\sigma_k] \to n[\sigma]$ in $L^\infty$ when $d_\calE(\sigma_k,\sigma)\to 0$. The fact that the entire sequence converges follows from the uniqueness of the limit. As a conclusion of the $\eps/3$ argument, we have just obtained that
$$
\sup_{\sigma \in B_\calE} \|Z[\beta_k,\varrho_k](\sigma)-Z[\beta,\varrho](\sigma)\|_{\calE} \underset{k\rightarrow+\infty}{\longrightarrow}0,
$$
which ends the proof of continuity of $D \calI$.

The proof that $\partial_\beta \calI(\beta,\varrho)= - H[\varrho] e^{-\beta H[\varrho]}$ is continuous from $\Rm_+^* \times \calE$ to $\calE$ is similar to the proof of continuity of $\calI$ and is omitted.  

\noindent \textit{Step 4c: the implicit function theorem.} Recalling that $\varrho_\eps$ is the unique minimizer of $F_\eps$ with temperature $T=1/\beta$, it remains to prove that $\sigma \to D\calI[\beta,\varrho_\eps](\sigma)$ is an isomorphism in $\calE$ in order to apply the implicit function theorem. For this, set the constant $M$ in the definition of the contour $\gamma$ such that $\varrho_\eps \in B_M$, and remark that Lemma \ref{strongconv} (ii) together with Lemma \ref{boundZ} imply that the operator $\sigma \to Z[\beta,\varrho_\eps](\sigma)$ is compact in $\calE$. We can then apply the Fredholm alternative, and we only need to show the injectivity to obtain the invertibility. Suppose then that for some non zero $\sigma \in \calE$,
\be \label{inject}
D\calI[\beta,\varrho_\eps](\sigma)=0.
\ee
Assume for the moment that, with $n[\sigma] \in L^\infty$,
\be \label{negZ}
\Tr\left(Z[\beta,\varrho_\eps](\sigma) n[\sigma]\right) \leq 0, \qquad \forall \sigma \in \calE.
\ee
Then, 
$$
0=\Tr \left( D\calI[\beta,\varrho_\eps](\sigma) n[\sigma] \right)= \|n[\sigma]\|^2_{L^2}- \Tr\left(Z[\beta,\varrho_\eps](\sigma) n[\sigma]\right) \geq \|n[\sigma]\|^2_{L^2}.
$$
This shows that $n[\sigma]=0$, and as a consequence $Z[\beta,\varrho_\eps](\sigma)=0$, and by \fref{inject}, $\sigma=0$. This yields that $\sigma \to D\calI[\beta,\varrho_\eps]$ is an isomorphism in $\calE$. An application of the implicit function theorem then shows that $\varrho_\eps$ is differentiable in a neighborhood of $\beta$. Since $\beta$ is arbitrary and $\varrho_\eps$ is uniquely defined for all $\beta \in \Rm^*_+$, we can finally conclude that $\varrho_\eps \in \calC^1(\Rm^*_+,\calE)$. 

It remains to prove \fref{negZ} and Lemma \ref{boundZ} to conclude this section.

\noindent \textit{Proof of \fref{negZ}}. Using standard complex analysis, one finds that $Z[\beta,\varrho_\eps]$ admits the expression, for all $\sigma \in \calE$ and all $\varphi \in L^2$, 
\begin{equation}\label{eq:decompZ}
Z[\beta,\varrho_\eps](\sigma) \varphi=\sum_{m,k = 0}^{+\infty}  \varsigma_{m,k}  \phi_k \big(\phi_k,n[\sigma]\phi_m\big) \big(\phi_m,\varphi \big),
\end{equation}
where
\begin{equation*}
 \varsigma_{m,k} : = \left\{\begin{array}{ll}
 -\beta e^{-\beta \lambda_k},\,\mbox{ if } \lambda_k= \lambda_m,
 \\ \frac{e^{-\beta \lambda_k} - e^{-\beta \lambda_m}}{\lambda_k - \lambda_m},\,\mbox{ if } \lambda_k\neq \lambda_m,
 \end{array}\right.
\end{equation*}
and $(\lambda_k,\phi_k)_{k \in \Nm}$ are the spectral elements of $H[\varrho_\eps]$.
According to Lemma \ref{lem:bndvarsigma}, the series above is absolutely convergent in $L^2$. Since $Z[\beta,\varrho_\eps](\sigma)$ is trace class and $n[\sigma] \in L^\infty$, then \fref{eq:decompZ} shows that
$$
\Tr\left(Z[\beta,\varrho_\eps](\sigma) n[\sigma]\right)=\sum_{m,k = 0}^{+\infty}  \varsigma_{m,k}[\varrho] \left|\big(\phi_k[\varrho],n[\sigma]\phi_m[\varrho]\big)\right|^2,
$$
which is nonpositive.\\

\noindent \textit{Proof of Lemma  \ref{boundZ}}.  First of all, for all $\sigma \in \calE$, the operator $Z[\beta, \varrho](\sigma)$ is self-adjoint on $L^2$, and its definition shows that
$$
Z[\beta,\varrho](\sigma)=Z_+-Z_-,
$$
where
$$
Z_{\pm}=\frac 1 {2i\pi} \int_{\gamma} e^{-\beta z }(z - H[\varrho])^{-1} n_\mp[\sigma] (z - H[\varrho])^{-1}dz \geq 0,
$$
for $n_\pm[\sigma]$ the positive and negative parts of $n[\sigma]$. Then, for $W=Z_++Z_- \geq 0$ with
$$
W=-\frac 1 {2i\pi} \int_{\gamma} e^{-\beta z }(z - H[\varrho])^{-1} |n[\sigma]| (z - H[\varrho])^{-1}dz,
$$
we have
$$
\Tr \left( (I+H_0)^{3/4} |Z[\beta,\varrho](\sigma)|(I+H_0)^{3/4} \right) \leq \Tr \left( (I+H_0)^{3/4} W(I+H_0)^{3/4} \right).
$$
The last inequality is simply a consequence of the triangle inequality in the space $\calJ^1(H^{3/2})$ when $H^{3/2}$ is equipped with the norm
$$
\| \varphi \|_{H^{3/2}}= \|(I+H_0)^{3/4} \varphi\|_{L^2}.
$$
In order to justify some calculations, we will work with the operator $Z_\eta$ that is introduced below: 
$$
Z_\eta:= R_\eta W R_\eta, \qquad R_\eta=(I+\eta H_0)^{-2}.
$$
Note first that the operator
$$
W_\eta:=(I+H_0)^{3/4}R_\eta(z - H[\varrho])^{-1} |n[\sigma]| (z - H[\varrho])^{-1}R_\eta (I+H_0)^{3/4}
$$
is trace class, uniformly in $z\in \gamma$, as, using \fref{belowresolv},
$$
\|W_\eta\|_{\calJ_1} \leq \|n[\varrho]\|_{L^\infty} \|(z-H[\varrho])^{-1}\|^2_{\calL(L^2)} \|(I+H_0)^{3/4} R_\eta\|_{\calJ_2}^2 \leq C_\eta.
$$
This will allow us to exchange operator traces and integrals over $z$. Since $R_\eta (I+H_0)^{3/4}$ is a bounded operator and $R_\eta$ and $H_0$ commute, we have
\begin{align*}
  \Tr\big(& (I+H_0)^{3/4} Z_\eta (I+H_0)^{3/4} \big) \\
  &= -\frac 1 {2i\pi} \int_{\gamma} e^{-\beta z } \Tr \left( R_\eta^2 (I+H_0) (z - H[\varrho])^{-1} |n[\sigma]| (z - H[\varrho])^{-1} \sqrt{I+H_0} \right) dz\\
&:= F_1+F_2,
\end{align*}
where
\bee
F_1&=&\frac 1 {2i\pi} \int_{\gamma} e^{-\beta z } \Tr \left( R_\eta^2 |n[\sigma]| (z - H[\varrho])^{-1} \sqrt{I+H_0} \right) dz\\
F_2&=&\frac 1 {2i\pi} \int_{\gamma} e^{-\beta z } \Tr \left( R_\eta^2 (A[\varrho]-z-1)(z - H[\varrho])^{-1}|n[\sigma]| (z - H[\varrho])^{-1} \sqrt{I+H_0} \right) dz.
\eee
For $u=\exp(-\beta H[\varrho])$ and exchanging again trace and integral, $F_1$ admits the simple expression
$$
F_1= \Tr \left( R_\eta^2 |n[\sigma]| u \sqrt{I+H_0}\right),
$$ 
that can be controlled by 
$$
|F_1| \leq \|n[\sigma]\|_{L^\infty} \| R^2_\eta\|_{\calL(L^2)} \| u \sqrt{I+H_0} \|_{\calJ_1}\leq C\|n[\sigma]\|_{L^\infty} \|u \|_{\calE}.
$$
With $\varrho \in B_M$, a similar estimate as \fref{estsigE} shows that $\|u\|_{\calE}$ can be bounded by some constant that depends on $M$, so that $|F_1| \leq C_M \|n[\sigma]\|_{L^\infty}$. Regarding $F_2$, we have
$$
|F_2| \leq C \int_{\gamma} dz |e^{-\beta z}| \| A[\varrho]-z-1\|_{L^\infty}\| n [\sigma]\|_{L^\infty} \|(z-H[\varrho])^{-1}\|_{\calJ_1}  \| (z-H[\varrho])^{-1} \sqrt{I+H_0} \|_{\calL(L^2)}.
$$
Following \fref{eq:bndeigen}, we have that $\lambda_m[\varrho]$ behaves like $m^2$ for large $m$. As a consequence, for any $z \in \gamma$ and any $\varrho \in B_M$,
$$
\|(z-H[\varrho])^{-1}\|_{\calJ_1}=\sum_{m \geq 0} |z-\lambda_m[\varrho]|^{-1} \leq C_M,
$$
where $C_M$ is independent of $z$. Also, since $z \in \gamma$, $\varrho \in B_M$ and $A[\varrho] \in L^\infty$, the operator $(z-H[\varrho])^{-1}$ is bounded  from $L^2$ to $\H2p$ uniformly in $z$ and in $\varrho$. This shows that $(z-H[\varrho])^{-1} \sqrt{I+H_0}$ is bounded in $\calL(L^2)$ independently of $z$ and $\varrho$ when $z \in \gamma$ and $\varrho \in B_M$. As a consequence, for any $\beta \in [\beta_0, \beta_1]$,
$$
|F_2| \leq C_M \|n[\sigma]\|_{L^\infty} \int_{\gamma} dz |e^{-\beta z}| (1+|z|) \leq C_{M,\beta_0,\beta_1} \|n[\sigma] \|_{L^\infty}.
$$ At this point, we have therefore proven that
$$
\Tr\left( (I+H_0)^{3/4} Z_\eta (I+H_0)^{3/4} \right) \leq C_{M,\beta_0,\beta_1} \| n[\sigma] \|_{L^\infty}.
$$ 
Since $Z_\eta$ converges to $W$ strongly in $\calL(L^2)$ as $\eta \to 0$, the above bound ends the proof of the lemma by semi-lower continuity of the trace norm.

Everything is now in place to study to the monotonicity of the kinetic energy.

\paragraph{Step 5: The kinetic energy is increasing.} We use the notation of Step 3, namely $\varrho_\eps$ is the unique minimizer of the penalized functional $F_\eps$ with $\varrho_\eps=\exp(-H_\eps/T)$, $H_\eps=H_0+A_\eps$. We will keep implicit the dependency of $\varrho_\eps$ and $A_\eps$ on $T$ to simplify notation, and introduce
\begin{equation*}
E_{\eps,T} : = E(\varrho_{\eps}).
\end{equation*}
We know from Step 4 that the derivative of $\varrho_{\eps}$ with respect to $T$ is continuous and belongs to $\calE$. We will need to regularize though to properly use the cyclicity of the trace and introduce for this the operator $R_\eta=(I+\eta H_0)^{-2}$. We have first
$$
E_{\eps,T}=\Tr \big(\sqrt{H_0} \varrho_\eps \sqrt{H_0} \big)= \lim_{\eta \to 0} \Tr \big(R_\eta \sqrt{H_0} \varrho_\eps \sqrt{H_0} R_\eta\big):=\lim_{\eta \to 0} E_\eta.
$$
Indeed, since $\partial_T \varrho$ is self-adjoint, we can decompose it into positive and negative parts, i.e. $\partial_T \varrho=B_+-B_-$, with $B_\pm \geq 0$, $B_+B_-=B_-B_+=0$, and $|\partial_T \varrho|=B_++B_-$. Since $B_\pm \in \calE^+$ for all $T>0$, it then follows easily that
$$
\Tr \big(\sqrt{H_0} B_\pm \sqrt{H_0} \big)= \lim_{\eta \to 0} \Tr \big(R_\eta \sqrt{H_0} B_\pm \sqrt{H_0} R_\eta\big).
$$
Using the cyclicity of the trace and the fact that $R_\eta$ and $H_0$ commute, we find
\begin{align*}
\partial_T E_{\eta} &= \Tr\left(R_\eta^2 H_0  \partial_T \varrho_{\eps} \right) = \Tr\left( R_\eta^2H_{\eps}  \partial_T \varrho_{\eps}\right) - \Tr\left(R_\eta^2 A_{\eps}  \partial_T\varrho_{\varepsilon}\right)
\\ &=  \Tr\left( R_\eta^2\left( H_{\eps} - T \partial_TA_{\eps} \right)   \partial_T \varrho_{\varepsilon}\right)  - \Tr\left( R_\eta^2\left( A_{\eps} -  T\partial_TA_{\eps} \right)\partial_T\varrho_{\varepsilon}\right).
\end{align*}
Above, $\eps \partial_TA_{\eps}=n[\partial_T \varrho_\eps]$ is well-defined in $L^\infty$ thanks to \fref{ninfty} since $\partial_T \varrho_\eps \in \calE$. Our first task is to pass to the limit $\eta \to 0$ to recover $\partial_T E_{\eps,T}$. Since $A_\eps$ and $\partial_T A_\eps$ belong to $L^\infty$ and $\partial_T \varrho_\eps$ is trace class, the only term requiring attention is the first one in the r.h.s. above. For this, and following the expression of the differential of the functional $\calI$ defined in Step 4, we have  
\begin{align*}
\partial_T \varrho_{\varepsilon} &= \frac 1 {T^2} H_{\eps}\varrho_{\varepsilon} - \frac 1 {2\pi i}\int_{\gamma} e^{- \frac zT} (z- H_{\eps})^{-1} \partial_T A_{\eps}(z- H_{\eps})^{-1}dz
\\ &= :  G_{1,\varepsilon}  + G_{2,\varepsilon}.
\end{align*}
The first term $G_{1,\eps}$ is such that $H_\eps  G_{1,\varepsilon} \in \calJ_1$ and therefore poses no issue in the limit $\eta \to 0$. Regarding the second one, we write
$$
H_\eps  G_{2,\varepsilon}=-\frac 1 {2\pi i }\int_{\gamma} e^{- \frac zT} \left(\calG_1(z)+\calG_2(z) \right)dz,
$$
with
\bee
\calG_1(z) &=&-\partial_T A_{\eps}(z- H_{\eps})^{-1}\\
\calG_2(z) &=&z(z- H_{\eps})^{-1}\partial_T A_{\eps}(z- H_{\eps})^{-1}.
\eee
Both $\calG_1$ and $\calG_2$ are trace class since $\partial_T A_\eps \in L^\infty$ and $(z- H_{\eps})^{-1}$ is trace class as we saw in the proof of Lemma \ref{boundZ}. Besides, $\calG_1$ is uniformly bounded in $\calJ_1$ w.r.t. $z$ and $\calG_2$ is bounded by $|z|$ in $\calJ_1$. This shows that $H_\eps  G_{2,\varepsilon} \in \calJ_1$, and as a consequence we can pass to the limit $\eta \to 0$ and obtain that 
\begin{align*}
\partial_T E_{\eps,T} &=  \Tr\left(\left( H_{\eps} - T \partial_TA_{\eps} \right)   \partial_T \varrho_{\varepsilon}\right)  - \Tr\left(\left( A_{\eps} -  T\partial_TA_{\eps} \right)\partial_T\varrho_{\varepsilon}\right).
\end{align*}
By using
\begin{align*}
&\Tr\big( A_{\eps}  \partial_T\varrho_{\varepsilon}\big) = \big(A_{\eps}, n[\partial_T\varrho_{\varepsilon}]\big) = \varepsilon \partial_T \| A_{\eps}\|^2_{L^2}\\
&\Tr\big(\partial_T A_{\eps}  \partial_T\varrho_{\varepsilon}\big) = \big(\partial_T A_{\eps} ,n[\partial_T\varrho_{\varepsilon}]\big) = \varepsilon \|\partial_T A_{\eps}\|^2_{L^2},
\end{align*}
we deduce that
\begin{align*} 
\partial_T E_{\eps,T} &+\varepsilon \partial_T \| A_{\eps}\|^2_{L^2} -\varepsilon T \|\partial_T A_{\eps}\|^2_{L^2}\\
&= \Tr\left( H_{\eps}   G_{1,\varepsilon} \right) + \Tr\left( H_{\eps}   G_{2,\varepsilon } \right) -T\Tr\left( \partial_T A_{\eps}  G_{1,\varepsilon} \right) - T\Tr\left(\partial_T A_{\eps}  G_{2,\varepsilon} \right) \nonumber
\\  &= : \calT_{1,\varepsilon} +  \calT_{2,\varepsilon} + \calT_{3,\varepsilon}  + \calT_{4,\varepsilon} .
\end{align*}
The key observation is that $\calT_{2,\varepsilon} + \calT_{3,\varepsilon}=0$, while  $\calT_{1,\varepsilon} + \calT_{4,\varepsilon} \geq 0$. Indeed, with the definition of $\calG_1$ and $\calG_2$ above, we have first that
$$
-\frac 1 {2\pi i} \Tr \left(\int_{\gamma} e^{- \frac zT} \calG_1(z)dz \right)=\Tr \big( \partial_T A_\eps \varrho_\eps \big).
$$
On the other hand, since we have seen that $\calG_2$ has sufficient regularity, we can exchange trace and integral to arrive at
\bea \nonumber
-\frac 1 {2\pi i} \Tr \left(\int_{\gamma} e^{- \frac zT} \calG_2(z)dz \right)&=&-\frac 1 {2\pi i} \int_{\gamma} z e^{- \frac zT} \Tr \left((z- H_{\eps})^{-1}\partial_TA_{\eps}(z- H_{\eps})^{-1} \right)dz\\ \nonumber
&=&-\frac 1 {2\pi i} \int_{\gamma} z e^{- \frac zT} \Tr \left(\partial_TA_{\eps}(z- H_{\eps})^{-2} \right)dz\\
&=&-\frac 1 {2\pi i} \Tr \left(\partial_TA_{\eps} \int_{\gamma} z e^{- \frac zT} (z- H_{\eps})^{-2}dz \right). \label{calc}
\eea
If $f(x)=x e^{-x/T}$, then standard complex analysis shows that the complex integral above is just $f'(H_\eps)$, that is $\varrho_\eps-H_\eps \varrho_\eps /T$. This means that
$$
\Tr \big( H_\eps G_{2,\eps} \big)=\frac{1}{T}\Tr \big( \partial_TA_\eps H_\eps \varrho_\eps \big)=T \Tr\big(\partial_T A_\eps G_{1,\eps} \big),
$$
which proves the claim that $\calT_{2,\varepsilon} + \calT_{3,\varepsilon}=0$. The fact that $T^2 \calT_{1,\varepsilon} = \Tr\left(H_{\eps}^2 \varrho_{\varepsilon} \right) \geq 0$ is clear. The fact that $\calT_{4,\varepsilon} \geq 0$ is an easy consequence of \fref{negZ} with $\sigma=\partial_T \varrho_\eps$. At this stage, we have therefore arrived at 
\begin{align} \label{leqE}
\partial_T E_{\eps,T} \geq - \varepsilon \partial_T \| A_{\eps}\|^2_{L^2}.
\end{align}
Integrating the previous inequality and expliciting the dependency of $A_\eps$ on $T$, we find, $\forall T_2\geq T_1>0$,
$$
E_{\eps,T_2} - E_{\varepsilon, T_1}\geq -\varepsilon\left(\| A_{\eps,T_2}\|^2_{L^2} - \| A_{\eps,T_1}\|^2_{L^2}\right).
$$
As claimed at the beginning of Step 3, we have, for all $T>0$ and all $n_0\in \Hp$,
\begin{equation*}
E_{\eps,T} \underset{\varepsilon\rightarrow 0}{\longrightarrow}E(\varrho_{T,n_0}),
\end{equation*}
which, thanks to the crucial uniform bound \eqref{unifA}, leads to
\begin{equation}\label{ineq:NRJ}
E(\varrho_{T_2,n_0}) \geq E(\varrho_{T_1,n_0}), \qquad \forall T_2 \geq T_1>0.
\end{equation}

Note that there is no information about $\partial_T E(\varrho_{T,n_0})$ since we do not have the uniform estimates required to pass to the limit in \fref{leqE}. We derive in the next step a relation between the energy and the entropy that will allows us to prove that the energy is strictly increasing and the entropy strictly decreasing. 

\paragraph{Step 6: Relation between the derivatives of the kinetic energy and the entropy, and decrease of the entropy.} For any $T>0$ and $\varrho_{\eps,T}$ the unique minimizer of the penalized problem (we make explicit here the dependency of $\varrho_{\eps}$, $F_\eps$ and $A_\eps$ w.r.t. $T$ since it will be needed further), the goal of this section is to prove the following relation, 
\be \label{relE}
\frac{d }{dT}E(\varrho_{\eps,T})= T \frac{d }{dT}S(\varrho_{\eps,T})+\eps \frac{d }{dT}\|A_{\eps,T}\|_{L^2}^2,
\ee
or more exactly an integrated version of it that will be more useful when passing to the limit $\eps \to 0$. We will see that expression \fref{relE} is a consequence of standard calculus of variations arguments. Consider first the penalized free energy $F_{\eps,T}(\varrho_{\eps,T})$ that we aim to differentiate w.r.t. $T$. Since we have proved in Step 4 that for any $T>0$, $\varrho_{\eps,T}$ is continuously differentiable w.r.t. $T$ with values in $\calE$, it follows that $E(\varrho_{\eps,T})$ and $\|n[\varrho_{\eps,T}]-n_0\|_{L^2}^2$ are differentiable, and we have
\begin{align*}
&\frac{d}{dT} E(\varrho_{\eps,T})= \Tr \big( \sqrt{H_0} \partial_T \varrho_\eps\sqrt{H_0}\big) < \infty\\
&\frac{1}{2 \eps}\frac{d}{dT} \| n[\varrho_{\eps,T}]-n_0 \|^2_{L^2}= \frac{1}{\eps}(n[\partial_T \varrho_\eps], n[\varrho_\eps]-n_0)<\infty. 
\end{align*}
In particular, the last term can be recast as, using the definition of $A_\eps$,
$$ \frac{1}{2 \eps}\frac{d}{dT} \| n[\varrho_{\eps,T}]-n_0 \|^2_{L^2}=\frac{\eps}{2} \frac{d}{dT}  \|A_{\eps,T}\|^2_{L^2}.
$$
We have the following lemma, whose proof is postponed to the end of the section:
\begin{lemma} \label{derivS} The derivative of $S(\varrho_{\eps,T})$ with respect to $T$ is well-defined for all $T>0$.
\end{lemma}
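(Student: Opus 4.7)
The plan is to exploit the explicit representation $\varrho_{\eps,T}=\exp(-H_\eps/T)$ with $H_\eps=H_0+A_\eps$ to recast $S(\varrho_{\eps,T})$ as a finite sum of quantities that are already known to be $\calC^1$ in $T$. Since $\log\varrho_{\eps,T}=-H_\eps/T$ on the range of $\varrho_{\eps,T}$, and since $H_\eps$ has compact resolvent with eigenvalues $\lambda_k[A_\eps]$ growing like $k^2$ by the min-max bound \fref{eq:bndeigen}, both series $\sum_k \lambda_k e^{-\lambda_k/T}$ and $\sum_k (\lambda_k/T) e^{-\lambda_k/T}$ are absolutely convergent for every $T>0$. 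A spectral decomposition then yields the identity
$$
S(\varrho_{\eps,T}) = \Tr\big(\varrho_{\eps,T}\log\varrho_{\eps,T}\big) - \Tr(\varrho_{\eps,T}) = -\frac{1}{T}\Tr\big(H_\eps\,\varrho_{\eps,T}\big) - \Tr(\varrho_{\eps,T}).
$$

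Using $H_\eps=H_0+A_\eps$ together with the relation $n[\varrho_{\eps,T}]=n_0+\eps A_\eps$ that defines $A_\eps$, this expression rewrites as
$$
S(\varrho_{\eps,T}) = -\frac{1}{T}\Big(E_{\eps,T} + (A_\eps,n_0) + \eps\,\|A_\eps\|_{L^2}^2\Big) - \|n_0\|_{L^1} - \eps\,(1,A_\eps),
$$
where $(\cdot,\cdot)$ denotes the $L^2(\Omega)$ scalar product. It therefore suffices to verify that each term on the right-hand side is $\calC^1$ in $T$ on $\Rm_+^*$. The differentiability of $T\mapsto E_{\eps,T}$ has just been established in Step 5, and $\|n_0\|_{L^1}$ is constant. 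All remaining contributions depend on $T$ only through $A_\eps$, so the problem reduces to showing that $T\mapsto A_\eps(T)$ is $\calC^1$ from $\Rm_+^*$ to $L^2(\Omega)$.

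This last property is a direct consequence of Step 4 combined with the continuity of the density map. Indeed, $\eps A_\eps(T)=n[\varrho_{\eps,T}]-n_0$, and the linear map $\varrho\mapsto n[\varrho]$ is bounded from $\calE$ to $L^\infty(\Omega)$ by \fref{ninfty}. Composing with the continuously Fr\'echet differentiable curve $T\mapsto\varrho_{\eps,T}\in\calE$ produced in Step 4 yields $A_\eps\in\calC^1(\Rm_+^*,L^\infty(\Omega))\subset\calC^1(\Rm_+^*,L^2(\Omega))$, with $\partial_T A_\eps=n[\partial_T\varrho_{\eps,T}]/\eps$. Collecting these facts, $S(\varrho_{\eps,T})$ is a differentiable function of $T$. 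The only mildly delicate ingredient in this plan is the justification of the spectral trace identity; once that is granted, everything else reduces to Step 4 and the $L^\infty$ bound on the density, with no genuine additional obstacle.
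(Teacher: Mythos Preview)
Your proof is correct and takes a genuinely different route from the paper's. The paper never exploits the algebraic identity $\log\varrho_{\eps,T}=-H_\eps/T$; instead it regularizes the entropy by $s_\eta(x)=(x+\eta)\log(x+\eta)-x-\eta\log\eta$ to tame the singularity of $\log$ at the origin, computes the G\^ateaux derivative $DS_\eta[\varrho_{\eps,T}](\partial_T\varrho_{\eps,T})=\Tr\big(\log(\eta+\varrho_{\eps,T})\,\partial_T\varrho_{\eps,T}\big)$, splits it using $\partial_T\varrho_{\eps,T}=-H_\eps\varrho_{\eps,T}/T^2+Z[\beta,\varrho_{\eps,T}](\partial_T\varrho_{\eps,T})$, and passes to the limit $\eta\to 0$ via dominated convergence and the estimate \fref{estiml}. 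Your approach bypasses all of this by rewriting $S(\varrho_{\eps,T})$ as a finite combination of $E_{\eps,T}$, $\|A_\eps\|_{L^2}^2$ and linear functionals of $A_\eps$, each already known to be $\calC^1$ from Step~4; this is considerably shorter and avoids any regularization.

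The trade-off is that your argument is specific to the Gibbs form $\varrho_{\eps,T}=e^{-H_\eps/T}$, whereas the paper's regularize-and-pass-to-the-limit scheme is entropy-agnostic. More importantly for the surrounding argument, the paper's proof produces the G\^ateaux derivative $DS[\varrho_{\eps,T}](\partial_T\varrho_{\eps,T})$ explicitly, which is exactly what is invoked in the Euler--Lagrange step $DF_{\eps,T}[\varrho_{\eps,T}](\partial_T\varrho_{\eps,T})=0$ leading to \fref{relE}. Your proof gives $\tfrac{d}{dT}S(\varrho_{\eps,T})$ directly but not its identification as that G\^ateaux derivative; this is not a gap in the Lemma itself, but if you use your formula downstream you would either need to supply that identification separately or, more naturally, differentiate your closed expression for $S(\varrho_{\eps,T})$ to obtain the integrated relation \fref{eq:relES0} without passing through the Euler--Lagrange argument at all.
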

The above lemma shows that the G\^ateaux derivative of $S$ at $\varrho_{\eps,T}$ in the direction $\partial_T \varrho_{\eps,T}$ is well-defined, as a consequence so does that of $F_{\eps,T}$. We are now ready to conclude. We have first on the one hand,
\be \label{dif1}
\frac{d}{dT} F_{\eps,T}(\varrho_{\eps,T})=\frac{d}{dT} E(\varrho_{\eps,T})+T \frac{d}{dT} S(\varrho_{\eps,T})+S(\varrho_{\eps,T})+\frac{1}{2}\frac{d}{dT} \| n[\varrho_{\eps,T}]-n_0 \|^2_{L^2}.
\ee
On the other hand, since $\varrho_{\eps,T}$ is the minimizer of $F_{\eps,T}$, we have $DF_{\eps,T}[\varrho_{\eps,T}](\partial_T \varrho_{\eps,T})=0$, and as a consequence
\be \label{dif2}
\frac{d}{dT} F_\eps(\varrho_{\eps,T})=DF_\eps[\varrho_{\eps,T}](\partial_T \varrho_{\eps,T})+S(\varrho_{\eps,T})=S(\varrho_{\eps,T}).
\ee
Combining \fref{dif2} with \fref{dif1}, we find \fref{relE}. After integration, the latter becomes
$$
E(\varrho_{\eps,T_2})-E(\varrho_{\eps,T_1})=T_2 S(\varrho_{\eps,T_2})-T_1 S(\varrho_{\eps,T_1})-\int_{T_1}^{T_2} S(\varrho_{\eps,T}) dT+\eps \|A_{\eps,T_2}\|^2_{L^2}-\eps \|A_{\eps,T_1}\|^2_{L^2}.
$$
We have already mentioned that $\varrho_{\eps,T}$ converges to $\varrho_{T,n_0}$ strongly in $\calE$ for all $T>0$, so that $E(\varrho_{\eps,T})$ converges to $E(\varrho_{\eps,T})$ for all $T>0$. An application of Lemma \ref{propentropie} (ii) shows that $S(\varrho_{\eps,T})$ converges to $S(\varrho_{\eps,T})$. We then pass to the limit in the integral using dominated convergence and estimate \fref{estiml} further. Using \fref{unifA}, we arrive at, for all $T_2,T_1>0$,
\be \label{eq:relES0}
E(\varrho_{T_2,n_0})-E(\varrho_{T_1,n_0})=T_2 S(\varrho_{T_2,n_0})-T_1 S(\varrho_{T_1,n_0})-\int_{T_1}^{T_2} S(\varrho_{T,n_0}) dT.
\ee
This is one of the main results of this section. The other one concerns the fact that the entropy is nonincreasing. Indeed, combining \fref{relE} and \fref{leqE}, we arrive at
$$
\frac{d }{dT}S(\varrho_{\eps,T}) \geq -2 \eps \frac{1}{T} \frac{d }{dT}\|A_{\eps,T}\|_{L^2}^2.
$$  
Integrating, we find, for all $T_2  \geq T_1 >0$,
$$
S(\varrho_{\eps,T_2})-S(\varrho_{\eps,T_1}) \geq - 2 \eps \left[ \frac{\|A_{\eps,T}\|_{L^2}^2}{T} \right]_{T_1}^{T_2} - 2 \eps \int_{T_1}^{T_2} \frac{\|A_{\eps,T}\|_{L^2}^2}{T^2}dT.
$$
Sending $\eps$ to zero with estimate \fref{unifA} yields the desired result.\\

We conclude with the proof of Lemma \ref{derivS}.\\

\noindent \textit{Proof of Lemma \ref{derivS}}. We need to differentiate the entropy term, and have to be a little careful since $\log x$ has a singularity at $x=0$. For $\eta>0$, we hence regularize the entropy as
$$
s_\eta(x) = (x+ \eta) \log(x+\eta) -x - \eta \log \eta.
$$
For any $\varrho \in \calE^+$, the operator $\log(\varrho+\eta)$ is bounded. Then, the G\^ateaux derivative of $S_\eta(\varrho):=\Tr (s_\eta (\varrho))$ in the direction $u \in \calE^+$ is well-defined and reads, see \cite[Lemma 5.3]{mehats2011problem} for a proof, 
$$
D S_\eta[\varrho](u)= \Tr \big(\log(\eta+\varrho) u \big).
$$
Hence, since $\partial_T \varrho_{\eps,T} \in \calE$,
$$
\frac{d}{dT}S_\eta(\varrho_{\eps,T})=-\Tr \big(\log(\eta+\varrho_{\eps,T}) \partial_T \varrho_{\eps,T} \big).
$$
We sent now $\eta$ to zero by observing that, for $\beta=1/T$,
\begin{align*}
\Tr \big(\log(\eta&+\varrho_{\eps,T}) \partial_\beta \varrho_{\eps,T} \big)\\
&=-\Tr \big(\log(\eta+\varrho_{\eps,T}) H_{\eps,T} \varrho_{\eps,T} \big)+\Tr \big(\log(\eta+\varrho_{\eps,T}) Z[\beta,\varrho_{\eps,T}](\partial_\beta\varrho_{\eps,T}) \big)\\
&:=G_{1.\eta}(T)+G_{2,\eta}(T),
\end{align*}
where $H_{\eps,T}=H_0+A_{\eps,T}$ equipped with $D(H_{\eps,T})=\H2p$, and $Z[\beta,\varrho](\sigma)$ is defined in Step 4b. The first term $G_{1,\eta}$ is just
$$
G_{1,\eta}(T)=-\sum_{p\in \Nm} \log(\eta+e^{-\beta \lambda_{p,\eps}}) \lambda_{p,\eps} e^{-\beta \lambda_{p,\eps}}, 
$$
where the $\lambda_{p,\eps}$ are the eigenvalues of $H_{\eps,T}$. Moreover, calculations close to \fref{calc} using the cyclicity of the trace and the fact that $\log(\eta+\varrho_{\eps,T})$ and $(z-H_{\eps,T})^{-1}$ commute show that the second term $G_{2,\eta}$ is equal to
$$
G_{2,\eta}(T)=\Tr \big(\log(\eta+\varrho_{\eps,T}) \varrho_{\eps,T} n[\partial_T \varrho_{\eps,T}] \big).
$$
At that point, we therefore have obtained that, for all $T_1,T_2>0$,
$$S_\eta(\varrho_{\eps,T_2})-S_\eta(\varrho_{\eps,T_1})=\int_{T_1}^{T_2} \left(-G_{1,\eta}(T)/T^2+G_{2,\eta}(T)  \right) dT,
$$
and we need some estimates to pass to the limit. The term $G_{1,\eta}$ is treated directly using the min-max principle and estimate \fref{eq:bndeigen}. Its limit is denoted by $G_1$ and satisfies $\sup_{T \in [T_1,T_2]} |G_1| \leq C$. For the other terms, a similar calculation as \fref{elog} with the $L^2$ norm replaced by the $L^1$ norm, shows that, together with \fref{unifT}, $\forall a \in [0,1]$,
\be \label{estiml}
\|\log(a+\varrho_{\eps,T}) \varrho_{\eps,T} \|_{\calJ_1} \leq C+C \|\varrho_{\eps,T}\|_{\calE} \leq C+C T^2,
\ee
where $C$ does not depend on $a$. We then claim that $\log(\eta+\varrho_{\eps,T}) \varrho_{\eps,T}$ converges to $\log(\varrho_{\eps,T}) \varrho_{\eps,T}$ strongly in $\calJ_1$ for all $T>0$. Indeed, on the one hand, for $(\rho_p)_{p\in \Nm}$ the eigenvalues of $\varrho_{\eps,T}$,
\be \label{CVJ1}
\|\log(\eta+\varrho_{\eps,T}) \varrho_{\eps,T}\|_{\calJ_1}=\sum_{p \in \Nm} \rho_p|\log (\eta+\rho_p)| \to \sum_{p \in \Nm}\rho_p|\log (\rho_p)| =\|\log(\varrho_{\eps,T}) \varrho_{\eps,T}\|_{\calJ_1},
\ee
while, on the other, a similar calculation as above shows that  $\log(\eta+\varrho_{\eps,T}) \varrho_{\eps,T}$ converges to $\log(\varrho_{\eps,T}) \varrho_{\eps,T}$ strongly in $\calL(L^2)$. Lemma \ref{theoSimon} then yields the desired result. As a consequence, $\forall T>0$,
$$
\lim_{\eta \to 0} S_\eta(\varrho_{\eps,T})=S(\varrho_{\eps,T}), \qquad \lim_{\eta \to 0} G_{2,\eta}=G_2=\Tr \big(\log(\varrho_{\eps,T}) \varrho_{\eps,T} n[\partial_T \varrho_{\eps,T}] \big).
$$
Finally, the fact that $\varrho_{\eps,T} \in \calC^1(\Rm^*_+,\calE)$, together with \fref{estiml}-\fref{CVJ1} yields
\bee
\sup_{T \in [T_1,T_2]}|G_{2,\eta}| &\leq& \sup_{T \in [T_1,T_2]} \|\log(\eta+\varrho_{\eps,T}) \varrho_{\eps,T}\|_{\calJ_1} \sup_{T \in [T_1,T_2]}\|n[\partial_T \varrho_{\eps,T}]\|_{L^\infty} \\
&\leq& C \sup_{T \in [T_1,T_2]}\|\partial_T \varrho_{\eps,T}\|_{\calE} \leq C.
\eee
which allows us to use dominated convergence and obtain
$$S(\varrho_{\eps,T_2})-S(\varrho_{\eps,T_1})=\int_{T_1}^{T_2} ( G_1+G_2)dT, \qquad \forall T_2 \geq T_1>0.
$$
Since $\varrho_{\eps,T}$ is continuous w.r.t. $T$ in $\calE$, it can be shown with similar proofs as above that $G_1$ and $G_2$ are continuous in $T$ as well, we omit the details. This then concludes the proof of the lemma since $T_1$ and $T_2$ are arbitrary.

\paragraph{Step 7: $\mathsf{E}_T$ and $\mathsf{S}_T$ are strictly increasing.}
The aim of this step is to prove that we have the strict inequality
\begin{equation*}
\mathsf{E}_{T_1}> \mathsf{E}_{T_0}, \qquad \forall T_1>T_0 >0. 
\end{equation*}
In Step 5, we have proved the non strict inequality \eqref{ineq:NRJ}, and it therefore only remains to prove that if $T_1\neq T_0$, then
\begin{equation*}
\mathsf{E}_{T_1} \neq  \mathsf{E}_{T_0}.
\end{equation*}
Since the energy is nondecreasing and continuous, we proceed by contradiction and assume that there exists a non empty open subset $I = (T_0,T_1)$ of $\mathbb{R}^{*}_+$ such that $\mathsf{E}_u =  \mathsf{E}_v$, $\forall u,v\in I.$
From \eqref{eq:relES0}, it follows that 
\be \label{eq:relES}
u S(\varrho_{u,n_0})-v S(\varrho_{v,n_0})=\int_{v}^{u} S(\varrho_{T,n_0}) dT, \qquad \forall u,v\in I.
\ee
Since we have seen in Step 1 that $\mathsf{S}_T$ is continuous on $\Rm^*_+$, this shows that $(u \mathsf{S}_u)'$ exists, is continuous and equal to $ \mathsf{S}_u$ for all $u \in I$. As a consequence, $\mathsf{S}_u' = 0$ for all $u \in I$. We then deduce that, $\forall u,v\in I$,
\begin{equation*}
F_u(\varrho_{u,n_0}) = \mathsf{E}_u + u\mathsf{S}_u = \mathsf{E}_v + u \mathsf{E}_v = F_u(\varrho_{v,n_0}).
\end{equation*}
Since the minimizer of $F_u$ is unique, we have $\varrho_{u,n_0} = \varrho_{v,n_0}$ which, by \eqref{eq:solminim1D}, leads to the equality
\begin{equation}\label{eq:Hequv}
\frac{H[\varrho_{u,n_0}]}{u} = \frac{H[\varrho_{v,n_0}]}{v}, \qquad H[\varrho_{v,n_0}]=H_0+A[\varrho_{v,n_0}].
\end{equation}
 For $A[\varrho_{v,n_0}]$, expression \fref{expA} becomes 
$$
A[\varrho_{v,n_0}]= \frac{1}{n_0} \left(\frac{1}{4} \Delta n_0 + \frac{1}{2} n[\nabla \varrho_{v,n_0} \nabla ] + v n[\varrho_{v,n_0} \log \varrho_{v,n_0}] \right).
$$
Since $\varrho_{u,n_0} = \varrho_{v,n_0}$, we can write
$$
A[\varrho_{v,n_0}]= f+ u g, \qquad A[\varrho_{v,n_0}]= f+ v g, \qquad f,g \in L^2.
$$
It finally follows from \eqref{eq:Hequv},
$$
\frac{H_0+f}{u} = \frac{H_0+f }{v},
$$
which implies $u = v$ for all $u,v\in I$. This is in contradiction with the fact that $I$ is open. Hence, $\mathsf{E}_T$ is strictly increasing. The fact that $\mathsf{S}_T$ is strictly decreasing is now straightforward. Indeed, we already know from Step 6 that $\mathsf{S}_T$ is nonincreasing. Supposing that it is constant for all $u,v \in I$ for some open interval $I$, we find from \fref{eq:relES0} that $\mathsf{E}_u=\mathsf{E}_v$, for all $u,v \in I$, which contradicts the fact that $\mathsf{E}_T$ is strictly increasing. Hence, $\mathsf{S}_T$ is strictly decreasing. This ends the proof of Theorem \ref{thm:NRJIncr}.
  
\section{Appendix} \label{appendix}
We give here a few technical lemmas that are used throughout the paper. The simple proof of the first one can be found in \cite[Lemma 5.3]{mehats2017quantum}.  

\begin{lemma} \label{lieb} Suppose $\varrho$ is self-adjoint and belongs to $\calE$. Then, the following estimates hold:
\begin{align}
\label{ninfty}\|n[\varrho]\|_{L^\infty} \leq C \| \varrho\|^{1/4}_{\calJ_2}  \|\varrho\|_{\calE}^{3/4}\\
\label{gradnl2}\|\nabla n[\varrho]\|_{L^2} \leq C \| \varrho\|^{1/4}_{\calJ_1}  \|\varrho\|_{\calE}^{3/4}.
\end{align}
\end{lemma}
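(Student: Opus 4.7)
The plan is to diagonalize $\varrho$ and combine 1D Gagliardo--Nirenberg with a weighted Cauchy--Schwarz argument that exploits Weyl's asymptotics. I would first reduce to $\varrho\ge 0$: with the spectral expansion $\varrho=\sum_p\rho_p|\phi_p\rangle\langle\phi_p|$, one has $|n[\varrho](x)|\le n[|\varrho|](x)$ pointwise, while $\|\varrho\|_{\calJ_1}=\||\varrho|\|_{\calJ_1}$, $\|\varrho\|_{\calJ_2}=\||\varrho|\|_{\calJ_2}$, and $\|\varrho\|_{\calE}=\||\varrho|\|_{\calE}$; so I may assume $\rho_p\ge 0$ and write
$$n[\varrho](x)=\sum_p\rho_p|\phi_p(x)|^2,\qquad k[\varrho](x)=\tfrac{1}{2}\sum_p\rho_p|\nabla\phi_p(x)|^2,$$
with $(\phi_p)$ orthonormal in $L^2(\Omega)$.

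The main building block is the natural $L^\infty$ bound $\|n[\varrho]\|_{L^\infty}\le C\|\varrho\|_{\calJ_1}^{1/2}\|\varrho\|_{\calE}^{1/2}$, obtained by applying the 1D Gagliardo--Nirenberg inequality $\|u\|_{L^\infty}^2\le C\|u\|_{L^2}\|u\|_{H^1}$ to each $\phi_p$ (using $\|\phi_p\|_{L^2}=1$) to get $|\phi_p(x)|^2\le C(1+\|\nabla\phi_p\|_{L^2}^2)^{1/2}$, and then summing with Cauchy--Schwarz:
$$\sum_p\rho_p(1+\|\nabla\phi_p\|_{L^2}^2)^{1/2}\le \Big(\sum_p\rho_p\Big)^{\!1/2}\Big(\sum_p\rho_p(1+\|\nabla\phi_p\|_{L^2}^2)\Big)^{\!1/2}\le C\|\varrho\|_{\calJ_1}^{1/2}\|\varrho\|_{\calE}^{1/2}.$$

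To upgrade to the stated $\calJ_2$--form of \fref{ninfty}, I would establish the interpolation $\|\varrho\|_{\calJ_1}^2\le C\|\varrho\|_{\calJ_2}\|\varrho\|_{\calE}$ via the weighted split
$$\Big(\sum_p\rho_p\Big)^{\!2}\le \Big(\sum_p\rho_p(1+\|\nabla\phi_p\|_{L^2}^2)^{-1}\Big)\!\Big(\sum_p\rho_p(1+\|\nabla\phi_p\|_{L^2}^2)\Big).$$
The second factor is directly $\le C\|\varrho\|_\calE$; for the first, another Cauchy--Schwarz gives $\sum_p\rho_p(1+\|\nabla\phi_p\|_{L^2}^2)^{-1}\le\|\varrho\|_{\calJ_2}\big(\sum_p(1+\|\nabla\phi_p\|_{L^2}^2)^{-2}\big)^{1/2}$, and the last sum is finite by the min--max principle: the sorted values of $\|\nabla\phi_p\|_{L^2}^2$ dominate the eigenvalues $\gamma_p\sim p^2$ of $H_0$ on $\Omega$, so $\sum_p(1+\gamma_p)^{-2}<\infty$. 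Inserting $\|\varrho\|_{\calJ_1}\le C\|\varrho\|_{\calJ_2}^{1/2}\|\varrho\|_{\calE}^{1/2}$ into the natural $L^\infty$ bound yields \fref{ninfty}. For \fref{gradnl2}, the pointwise Cauchy--Schwarz estimate
$$|\nabla n[\varrho](x)|^2\le 4\Big(\sum_p\rho_p|\phi_p||\nabla\phi_p|\Big)^{\!2}\le 4\Big(\sum_p\rho_p|\phi_p|^2\Big)\!\Big(\sum_p\rho_p|\nabla\phi_p|^2\Big)=8\,n[\varrho](x)\,k[\varrho](x)$$
integrates to $\|\nabla n[\varrho]\|_{L^2}^2\le 8\|n[\varrho]\|_{L^\infty}\|\varrho\|_{\calE}$, and substituting the natural $\calJ_1$--form of the $L^\infty$ bound produces \fref{gradnl2} after extracting the square root.

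The main obstacle is the interpolation $\|\varrho\|_{\calJ_1}^2\le C\|\varrho\|_{\calJ_2}\|\varrho\|_{\calE}$, which relies crucially on the 1D Weyl asymptotics $\gamma_p\sim p^2$ to make $\sum_p(1+\gamma_p)^{-2}$ summable; this is exactly where the restriction to one spatial dimension enters the argument, and it would need to be replaced by a different device in higher dimensions.
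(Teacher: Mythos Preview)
The paper does not give its own proof of this lemma but defers to \cite[Lemma~5.3]{mehats2017quantum}, so there is no in-text argument to compare against. Your overall strategy is sound: Gagliardo--Nirenberg on each eigenfunction, Cauchy--Schwarz to sum, and then an interpolation $\|\varrho\|_{\calJ_1}^2\le C\|\varrho\|_{\calJ_2}\|\varrho\|_\calE$ to sharpen the exponent; the derivation of \fref{gradnl2} from the pointwise bound $|\nabla n|^2\le 8\,n\,k$ is clean and correct.

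There is, however, a genuine gap in the justification of the interpolation. You assert that ``by the min--max principle, the sorted values of $\|\nabla\phi_p\|_{L^2}^2$ dominate the eigenvalues $\gamma_p$''. This pointwise domination is \emph{false}. With $e_0,e_1,\ldots$ the eigenfunctions of $H_0$, take $\phi_0=\alpha e_0+\beta e_3$, $\phi_1=e_1$, $\phi_2=e_2$, $\phi_3=\beta e_0-\alpha e_3$ for $\alpha$ close to $1$: the largest sorted Rayleigh quotient is $\alpha^2\gamma_3<\gamma_3$. What min--max really gives is that the eigenvalues $\mu_p$ of the compressed matrix $M_{pq}=(\phi_p,H_0\phi_q)$ satisfy $\mu_p\ge\gamma_p$; the diagonal entries $b_p=(\phi_p,H_0\phi_p)$ are only \emph{majorized} by $(\mu_p)$ (Schur--Horn). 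Since $t\mapsto(1+t)^{-2}$ is convex and decreasing, majorization plus $\mu_p\ge\gamma_p$ yield $\sum_p(1+2b_p)^{-2}\le\sum_p(1+2\mu_p)^{-2}\le\sum_p(1+2\gamma_p)^{-2}<\infty$, which is the bound you need. A shorter fix avoids majorization entirely: from $(\phi,A\phi)^{-1}\le(\phi,A^{-1}\phi)$ for $A>0$ one gets $(1+\|\nabla\phi_p\|^2)^{-1}\le(\phi_p,(1+2H_0)^{-1}\phi_p)$, and then
\[
\sum_p(\phi_p,(1+2H_0)^{-1}\phi_p)^2\le\sum_p\|(1+2H_0)^{-1}\phi_p\|_{L^2}^2\le\|(1+2H_0)^{-1}\|_{\calJ_2}^2<\infty
\]
in one dimension. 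Either repair closes the gap; the rest of your argument goes through.
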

The second lemma provides us with important compactness results.

\begin{lemma}[\cite{MP-JSP}, Lemma 3.1]\label{strongconv} 
(i) Let $(\varrho_k)_{k\in \Nm}$ be  a bounded sequence of $\calE^+$. Then, up to an extraction of a subsequence, there exists $\varrho\in \calE^+$ such that
$$
\varrho_k\to\varrho\mbox{ in }\calJ_1\quad \mbox{as } k\to +\infty
$$
and
$$
\Tr \big(\sqrt{H_0}\varrho\sqrt{H_0}\big)\leq \liminf_{k\to +\infty} \Tr \big(\sqrt{H_0}\varrho_k\sqrt{H_0}\big).
$$
(ii) Let $(\varrho_k)_{k\in \Nm}$ be a sequence of operators such that $\varrho_k \to \varrho$ strongly in $\calL(L^2)$ and $\Tr \big((I+H_0)^{\alpha}|\varrho_k| (I+H_0)^{\alpha} \big)\leq C$ for some $\alpha>1/2$. Then, $\varrho_k$ converge to $\varrho$ strongly in $\calE$. 
\end{lemma}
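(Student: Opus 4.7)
The plan is to establish the two parts via distinct compactness mechanisms. Part (i) is handled by reinterpreting $\sqrt{\varrho_k}$ as a Hilbert-Schmidt operator whose integral kernel lives in a Sobolev space on the product torus, so that Rellich-Kondrachov applies. Part (ii) is handled by a spectral truncation in the eigenbasis of $H_0$, with the high-frequency tails controlled by the weighted trace bound and the low-frequency (finite-rank) piece by the operator-norm convergence.

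For part (i), boundedness of $(\varrho_k)$ in $\calE^+$ translates into
\begin{equation*}
\|\sqrt{\varrho_k}\|_{\calJ_2}^2 = \Tr(\varrho_k) \leq C, \qquad \|\sqrt{H_0}\sqrt{\varrho_k}\|_{\calJ_2}^2 = \Tr(\sqrt{H_0}\varrho_k\sqrt{H_0}) \leq C.
\end{equation*}
Under the identification $\calJ_2 \simeq L^2(\Omega \times \Omega)$, the kernel $b_k$ of the self-adjoint operator $\sqrt{\varrho_k}$ is bounded in $L^2(\Omega \times \Omega)$ with $\partial_x b_k \in L^2(\Omega\times\Omega)$. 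Self-adjointness gives $b_k(x,y) = \overline{b_k(y,x)}$, and therefore also $\partial_y b_k \in L^2$, so $b_k$ is bounded in $H^1(\Omega\times\Omega)$. Since $\Omega \times \Omega$ is a compact $2$-torus, Rellich-Kondrachov provides a subsequence with $\sqrt{\varrho_k} \to B$ strongly in $\calJ_2$ for some self-adjoint $B$. Continuity of the product $\calJ_2 \times \calJ_2 \to \calJ_1$ then yields $\varrho_k \to B^2 =: \varrho$ in $\calJ_1$ with $\varrho \in \calE^+$. Extracting further a weak $\calJ_2$ limit $Z$ of $\sqrt{H_0}\sqrt{\varrho_k}$ and identifying $Z = \sqrt{H_0}B$ by testing on smooth functions (using the already-proved strong convergence of $\sqrt{\varrho_k}$), weak lower-semicontinuity of $\|\cdot\|_{\calJ_2}$ delivers the claimed trace inequality.

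For part (ii), I would first decompose $\varrho_k = \varrho_k^+ - \varrho_k^-$ into positive and negative parts; since $\varrho_k^\pm \leq |\varrho_k|$ in the operator order, the bound $\Tr((I+H_0)^\alpha|\varrho_k|(I+H_0)^\alpha) \leq C$ passes to each $\varrho_k^\pm$, reducing the problem to nonnegative operators. Letting $P_N$ be the spectral projector of $H_0$ onto eigenvalues $\leq N$ and $Q_N = I - P_N$, write
\begin{equation*}
\varrho_k - \varrho = P_N(\varrho_k-\varrho)P_N + P_N(\varrho_k-\varrho)Q_N + Q_N(\varrho_k-\varrho)P_N + Q_N(\varrho_k-\varrho)Q_N.
\end{equation*}
The finite-rank piece tends to $0$ in $\calE$ for each fixed $N$ since $\|\sqrt{H_0}P_N(\varrho_k-\varrho)P_N\sqrt{H_0}\|_{\calJ_1} \leq \|\sqrt{H_0}P_N\|_{\calJ_2}^2\, \|\varrho_k-\varrho\|_{\calL(L^2)}$ and similarly without weights. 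The three tail pieces are estimated by $\|A_1 A_2\|_{\calJ_1} \leq \|A_1\|_{\calJ_2}\|A_2\|_{\calJ_2}$ together with the key bound, valid for $\varrho_k \geq 0$,
\begin{equation*}
\Tr\big(Q_N H_0 \varrho_k\big) \leq \big\|(I+H_0)^{-\alpha} Q_N H_0 (I+H_0)^{-\alpha}\big\|_{\calL(L^2)}\, \Tr\big((I+H_0)^\alpha\varrho_k(I+H_0)^\alpha\big),
\end{equation*}
whose operator-norm factor equals $\sup_{m \geq N+1} \gamma_m/(1+\gamma_m)^{2\alpha}$ in the eigenbasis of $H_0$, and vanishes as $N \to \infty$ precisely when $\alpha > 1/2$. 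Taking $N$ large to make all tails small uniformly in $k$, then $k$ large to kill the finite-rank piece, yields $\varrho_k \to \varrho$ in $\calE$.

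The main obstacle is this sharp use of $\alpha > 1/2$ in the tail of the $\sqrt{H_0}$-weighted norm: the exponent must be strong enough that $\gamma_m/(1+\gamma_m)^{2\alpha} \to 0$, and simultaneously that the weighted trace insertions remain finite when passing to $(I+H_0)^{1/2}|\varrho_k|(I+H_0)^{1/2}$. A secondary subtlety is the reduction to positive operators via $\varrho_k = \varrho_k^+ - \varrho_k^-$: without it, the polar factor in $\varrho_k = U_k|\varrho_k|$ would mix low- and high-energy modes and prevent $Q_N$ from absorbing the smallness, so the positive/negative splitting is essential before carrying out the spectral truncation.
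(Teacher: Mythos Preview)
Your argument for (i) is correct; the paper simply cites [MP-JSP] for this part and does not reprove it, so there is nothing to compare.

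For (ii), your spectral-truncation approach works but is genuinely different from the paper's. The paper instead uses Lemma~\ref{theoSimon} (Simon's ``weak operator convergence plus convergence of the $\calJ_1$-norm implies strong $\calJ_1$-convergence''), applied in $\calJ_1(H^1)$. Concretely: the uniform bound gives a weak-$*$ limit of $(I+H_0)^\alpha|\varrho_k|(I+H_0)^\alpha$ in $\calJ_1$, which one identifies as $(I+H_0)^\alpha|\varrho|(I+H_0)^\alpha$ via the assumed $\calL(L^2)$-convergence; then the key point is that $(I+H_0)^{1-2\alpha}$ is \emph{compact} exactly when $\alpha>1/2$, so tracing against it converts weak-$*$ convergence into genuine convergence of $\Tr\big((I+H_0)^{1/2}|\varrho_k|(I+H_0)^{1/2}\big)=\|\varrho_k\|_\calE$. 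Combined with weak convergence of $\varrho_k$ as operators on $H^1$, Simon's lemma finishes. Both arguments exploit the same spectral fact ($\gamma_m(1+\gamma_m)^{-2\alpha}\to 0$), but the paper's route avoids the four-term $P_N/Q_N$ bookkeeping and the cross-term $\calJ_2$-factorizations, at the cost of importing an external result.

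Two small gaps in your write-up. First, the decomposition $\varrho_k=\varrho_k^+-\varrho_k^-$ presupposes self-adjointness, which the lemma does not assume; since all applications in the paper are to self-adjoint operators this is harmless, but you should either add the hypothesis or split $\varrho_k$ into real and imaginary self-adjoint parts first. Second, when you make the $Q_N$-tails ``small uniformly in $k$'' you also need the same tail smallness for the limit $\varrho$; this requires $\Tr\big((I+H_0)^\alpha|\varrho|(I+H_0)^\alpha\big)<\infty$, which follows from weak-$*$ lower semicontinuity of the trace norm but should be stated.
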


Item (ii) is not stated in \cite{MP-JSP} and we detail here a proof for completeness. First, there exists $B$ and a subsequence such that $(I+H_0)^{\alpha}|\varrho_k| (I+H_0)^{\alpha} $ converges to $B$ weakly-$*$ in $\calJ_1$. Since $|\varrho_k| \to |\varrho|$ in $\calL(L^2)$ as $\varrho_k \to \varrho $ in $\calL(L^2)$, and $(I+H_0)^{\alpha}$ is injective, this implies that $B=(I+H_0)^{\alpha}|\varrho| (I+H_0)^{\alpha}$. Write then, since $(I+H_0)^{1-2\alpha}$ is a compact operator on $L^2$,
\bea \nonumber
\Tr \big((I+H_0)^{1/2}|\varrho_k| (I+H_0)^{1/2} \big)&=&\Tr \big((I+H_0)^{\alpha}|\varrho_k| (I+H_0)^{\alpha} (I+H_0)^{1-2\alpha} \big)\\
&\to &\Tr \big((I+H_0)^{\alpha}|\varrho| (I+H_0)^{\alpha} (I+H_0)^{1-2\alpha} \big) \label{conH}\\
&=&\Tr \big((I+H_0)^{1/2}|\varrho| (I+H_0)^{1/2} \big). \nonumber
\eea
When $H^1$ is equipped with the norm
$$
\|u\|_{H^1}=\| (I+H_0)^{1/2} u\|_{L^2},
$$
we have the identification
\be
\Tr \big((I+H_0)^{1/2}|\varrho| (I+H_0)^{1/2} \big)=\|\varrho \|_{\calJ_1(H^1)}=\|\varrho \|_{\calE}. \label{ident}
\ee
In order to apply Lemma \ref{theoSimon}, it remains to prove that $\varrho_k$ converges weakly to $\varrho$ in the sense of bounded operator in $H^1$, that is
$$
((I+H_0)^{1/2} \varrho_k u, (I+H_0)^{1/2}v) \to ((I+H_0)^{1/2} \varrho u, (I+H_0)^{1/2} v), \qquad \forall u,v \in H^1.
$$
From the observation that $\varrho_k$ and $\varrho$ are uniformly bounded in $\calL(H^1)$ according to \fref{ident} and \fref{conH}, we can actually consider test functions $v$ above in $H^2$. The result then follows since $(I+H_0)^{1/2}$ is self-adjoint and $\varrho_n \to \varrho$ in $\calL(L^2)$. Lemma \ref{theoSimon} finally yields the strong convergence of $\varrho_n$ to $\varrho$ in $\calJ_1(H^1)$ and therefore in $\calE$.

The lemma below provides us with important properties of the entropy.
\begin{lemma}[\cite{MP-JSP}, Lemma 5.2]
\label{propentropie}
The application $\varrho\mapsto \Tr (\varrho \log \varrho-\varrho)$ possesses the following properties.\\
(i) There exists a constant $C>0$ such that, for all $\varrho\in \calE^+$, we have
\be
\label{souslin}
\Tr \big(\varrho\log \varrho-\varrho\big)\geq -C\left(\Tr \big(\sqrt{H_0}\varrho\sqrt{H_0}\big)\right)^{1/2}.
\ee
(ii) Let $\varrho_k$ be a bounded sequence of $\calE^+$ such that $\varrho_k$ converges to $\varrho$ in $\calJ_1$, then $\varrho_k \log \varrho_k-\varrho_k$ converges to $\varrho \log \varrho-\varrho$ in $\calJ_1$.\\
\end{lemma}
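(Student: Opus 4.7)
The plan is to establish (i) by a Gibbs-state comparison built from convexity and Peierls--Bogoliubov, and (ii) by combining spectral dominated convergence with Lemma \ref{theoSimon}.

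For (i), I would let $(\rho_p, \phi_p)_{p \in \Nm}$ denote the spectral elements of $\varrho \in \calE^+$ and set $\mu_p := \langle \phi_p, H_0 \phi_p \rangle$. The tangent-line inequality for the convex $s(x) = x\log x - x$ at $y = e^{-\beta \mu_p}$, valid for any $\beta > 0$, yields
\begin{equation*}
s(\rho_p) \;\geq\; s(e^{-\beta \mu_p}) + s'(e^{-\beta \mu_p})(\rho_p - e^{-\beta \mu_p}) \;=\; -e^{-\beta\mu_p} - \beta\mu_p\, \rho_p.
\end{equation*}
Summing over $p$ and bounding $\sum_p e^{-\beta \mu_p} \leq \Tr(e^{-\beta H_0})$ via the Peierls--Bogoliubov inequality applied to the convex function $x \mapsto e^{-\beta x}$, I obtain
\begin{equation*}
\Tr s(\varrho) \;\geq\; -\Tr(e^{-\beta H_0}) - \beta\, E(\varrho).
\end{equation*}
On the one-dimensional torus the explicit eigenvalues $\{2\pi^2 k^2\}_{k \in \Zm}$ of $H_0$ give $\Tr(e^{-\beta H_0}) \leq C(1 + \beta^{-1/2})$ by comparison with a Gaussian integral, and balancing $\beta$ as a power of $E(\varrho)$ (and absorbing the additive constants in the worst case) then produces \eqref{souslin}.

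For (ii), since $\varrho_k \to \varrho$ in $\calJ_1$ the sequence is uniformly bounded in $\calL(L^2)$ by some $M$, so all spectra lie in $[0, M]$. Approximating $s$ uniformly on $[0, M]$ by polynomials, the functional calculus is strong-operator continuous on $\{\sigma : \|\sigma\|_{\calL(L^2)} \leq M\}$, so $s(\varrho_k) \to s(\varrho)$ strongly in $\calL(L^2)$. By Lemma \ref{theoSimon}, it remains to establish $\|s(\varrho_k)\|_{\calJ_1} \to \|s(\varrho)\|_{\calJ_1}$. I would write the $\calJ_1$-norms as the eigenvalue sums $\sum_p |s(\rho_{p,k}^{\downarrow})|$ and $\sum_p |s(\rho_p^{\downarrow})|$, note that $\rho_{p,k}^{\downarrow} \to \rho_p^{\downarrow}$ for every $p$ by min--max and $\calJ_1$ convergence, and apply dominated convergence.

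The principal difficulty lies in producing a $k$-independent $\ell^1$ dominant for $|s(\rho_{p,k}^{\downarrow})|$, because $s$ is singular at $0$ (its derivative diverges) and infinitely many eigenvalues may accumulate there. The key input is the uniform energy bound $\Tr(H_0 \varrho_k) \leq C$ inherited from boundedness in $\calE^+$: together with Ky Fan's inequality $\sum_p \lambda_p \rho_{p,k}^{\downarrow} \leq \Tr(H_0 \varrho_k)$ and the quadratic growth $\lambda_p \gtrsim p^2$ of the eigenvalues of $H_0$ on the torus, this forces $\rho_{p,k}^{\downarrow} \leq C/p^2$, whence $|s(\rho_{p,k}^{\downarrow})| \leq C(\log p)/p^2$ is summable uniformly in $k$. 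Once dominated convergence delivers $\|s(\varrho_k)\|_{\calJ_1} \to \|s(\varrho)\|_{\calJ_1}$, Lemma \ref{theoSimon} upgrades the strong-operator convergence to $\calJ_1$-norm convergence, completing (ii).
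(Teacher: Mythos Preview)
The paper does not prove this lemma; it is imported from \cite{MP-JSP} without argument, so there is no in-paper proof to compare against. Your proposal is essentially correct and self-contained.

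For (i), one caveat: the optimization in $\beta$ naturally yields $\Tr s(\varrho)\geq -C(1+E(\varrho)^{1/3})$, and the additive constant cannot be ``absorbed'' into $E(\varrho)^{1/2}$ for small energy. Indeed, the inequality \eqref{souslin} as literally stated fails: take $\varrho=c\,|\un\rangle\langle\un|$ with $0<c<e$ and $\un$ the normalized constant function, so that $E(\varrho)=0$ while $\Tr s(\varrho)=c\log c-c<0$. This is harmless for the present paper, where every invocation of \eqref{souslin} (e.g.\ \eqref{eq:cont2}) is for $\varrho$ with $\Tr\varrho=\|n_0\|_{L^1}$ fixed, so the additive constant can be absorbed; but your write-up should either state the bound as $-C(1+E(\varrho)^{1/2})$ or make the dependence on $\Tr\varrho$ explicit.

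For (ii), the strategy via Lemma~\ref{theoSimon} is clean and correct. Two details worth spelling out: the rearrangement bound you call Ky Fan, namely $\sum_p \lambda_p\,\rho_{p,k}^\downarrow\leq \Tr(\sqrt{H_0}\,\varrho_k\sqrt{H_0})$, follows from the Ky Fan \emph{minimum} principle for partial sums ($\sum_{q\leq p}\langle\phi_q,H_0\phi_q\rangle\geq\sum_{q\leq p}\lambda_q$ for any orthonormal family) together with Abel summation against the decreasing weights $\rho_{p,k}^\downarrow$. Second, since $\lambda_0=0$ on the torus, your pointwise bound $\rho_{p,k}^\downarrow\leq C/p^2$ holds only for $p\geq 1$; the single term $p=0$ is controlled separately by $\rho_{0,k}^\downarrow\leq\Tr\varrho_k\leq C$. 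With these details in place, dominated convergence gives $\|s(\varrho_k)\|_{\calJ_1}\to\|s(\varrho)\|_{\calJ_1}$ and Lemma~\ref{theoSimon} closes the argument.
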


The next two lemmas state some regularity results for density operators. The second one  is adapted from \cite{pinaud2016quantum} to account for the dependency in $T$.
\begin{lemma} [\cite{MP-JSP}, Lemma A.1] \label{lieb2} Let $\varrho \in \calE^+$. Then, we have the estimate
$$
\Tr \big( \varrho ^{2/3}\big) \leq C \left(\Tr \big( \sqrt{H_0} \varrho \sqrt{H_0}\big)\right)^{2/3}.
$$
\end{lemma}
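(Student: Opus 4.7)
The plan is as follows. Denote by $(\rho_p)_{p \in \Nm}$ the eigenvalues of $\varrho$ arranged in decreasing order, and recall that on the $1$-torus the free Hamiltonian has explicit eigenvalues $\gamma_0 = 0$ and $\gamma_{2k-1} = \gamma_{2k} = 2\pi^2 k^2$ for $k \geq 1$, so $\gamma_p \geq c\, p^2$ for every $p \geq 1$.

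The first step is a rearrangement inequality of Ky Fan type: for any $\varrho \in \calE^+$,
\[
\Tr\big(\sqrt{H_0}\varrho\sqrt{H_0}\big) \;\geq\; \sum_{p \in \Nm} \gamma_p \rho_p \;\geq\; c\sum_{p \geq 1} p^2 \rho_p.
\]
I would derive this by a layer-cake decomposition, writing $\varrho = \sum_p (\rho_p - \rho_{p+1})\, Q_p$, where $Q_p$ denotes the orthogonal projector onto the span of the first $p+1$ eigenvectors of $\varrho$. The min-max principle applied to $H_0$ yields $\Tr(\sqrt{H_0}\, Q_p\, \sqrt{H_0}) \geq \gamma_0 + \cdots + \gamma_p$ for each $p$, and an Abel summation then gives the stated bound. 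Alternatively, Mirsky's trace inequality for self-adjoint operators produces the same conclusion directly.

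The second step is a Hölder inequality with conjugate exponents $3/2$ and $3$. Splitting off $p=0$ and writing $\rho_p^{2/3} = (p^2 \rho_p)^{2/3} \cdot p^{-4/3}$ for $p \geq 1$, I obtain
\[
\sum_{p \geq 1} \rho_p^{2/3} \;\leq\; \Big(\sum_{p \geq 1} p^2 \rho_p\Big)^{2/3} \Big(\sum_{p \geq 1} p^{-4}\Big)^{1/3},
\]
where the last factor is a finite numerical constant. The contribution of $p=0$ is controlled by the trivial bound $\rho_0^{2/3} \leq \Tr(\varrho)^{2/3}$, which is necessary because $\gamma_0 = 0$ and so the trace norm must enter; this fits exactly with the way the lemma is invoked in the paper, where the right-hand side appears as $\|\varrho\|_{\calE}^{2/3} = (\Tr(\varrho) + \Tr(\sqrt{H_0}\varrho\sqrt{H_0}))^{2/3}$.

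The main obstacle is the first step: since $\varrho$ and $H_0$ do not commute, one cannot simply pair eigenvalues, and the rearrangement must be justified via the min-max/layer-cake route; once that inequality is in hand, what remains is a direct summability computation using the explicit spectral asymptotics of $H_0$ on the torus.
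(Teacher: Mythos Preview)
The paper does not prove this lemma here; it simply cites \cite{MP-JSP}, Lemma~A.1. Your argument is correct and is the standard route: the rearrangement inequality $\Tr(\sqrt{H_0}\,\varrho\,\sqrt{H_0}) \geq \sum_p \gamma_p \rho_p$ follows exactly from the layer-cake decomposition plus min-max you describe (this is the Ky Fan/Mirsky trace inequality in form-domain language), and the H\"older step with exponents $3/2$ and $3$, combined with $\sum_{p\geq 1} p^{-4}<\infty$, yields the bound.

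You also catch a genuine slip in the statement as printed: since $\gamma_0=0$ on the torus, the inequality $\Tr(\varrho^{2/3}) \leq C\big(\Tr(\sqrt{H_0}\,\varrho\,\sqrt{H_0})\big)^{2/3}$ is false for $\varrho$ the rank-one projector onto constants. A $\Tr(\varrho)$ term is needed on the right, giving $\Tr(\varrho^{2/3}) \leq C\|\varrho\|_{\calE}^{2/3}$, which is precisely the form in which the paper actually applies the lemma. Your handling of the $p=0$ contribution via $\rho_0^{2/3}\leq \Tr(\varrho)^{2/3}$ is the right fix.
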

\begin{lemma} [\cite{pinaud2016quantum}, Lemma 4.4] \label{regmin} Let  $V \in L^2(\Omega)$  and define $\varrho=\exp(-(H_0+V)/T)$, where $D(H_0+V)=\H2p$. Then $\varrho \in \calE^+$ and $H_0 \varrho H_0 \in \calJ_1$, with the estimate
  $$
\Tr\big( H_0 \varrho H_0 \big) \leq C(1+T^2) \left(
1+ \big(1+\|V\|^2_{L^2}\big) \|\varrho\|_\calE \right).
$$
\end{lemma}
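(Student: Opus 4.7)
The strategy is to work in the eigenbasis of $H_A:=H_0+V$. Since in 1D $H^1\hookrightarrow L^\infty$, Gagliardo--Nirenberg combined with Young's inequality shows that $V\in L^2$ is infinitesimally form-bounded with respect to $H_0$, with the quantitative estimate
\begin{equation*}
|(V\phi,\phi)|\leq \|V\|_{L^2}\|\phi\|_{L^4}^2\leq \eps(H_0\phi,\phi) + C_\eps(1+\|V\|_{L^2}^2)\|\phi\|_{L^2}^2, \qquad \forall\,\eps\in(0,1).
\end{equation*}
Hence $H_A$ is self-adjoint on $\H2p$ with compact resolvent by Kato--Rellich, and its eigenvalues $(\lambda_k)$ satisfy $\lambda_k\geq \tfrac{1}{2}\gamma_k-M$ with $M:=C(1+\|V\|_{L^2}^2)$, the $\gamma_k\sim k^2$ being the eigenvalues of $H_0$. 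Writing $\varrho=\sum_k e^{-\lambda_k/T}|\phi_k\rangle\langle\phi_k|$ in the associated orthonormal basis of eigenfunctions $\phi_k\in\H2p$, the super-exponential decay of $e^{-\lambda_k/T}$ automatically gives $\varrho\in\calE^+$.

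Starting from $H_0\phi_k=\lambda_k\phi_k-V\phi_k$ and $(a+b)^2\leq 2a^2+2b^2$, I would write
\begin{equation*}
\Tr(H_0\varrho H_0)=\sum_k e^{-\lambda_k/T}\|H_0\phi_k\|_{L^2}^2\leq 2\,I_1 + 2\,I_2,\quad I_1:=\sum_k \lambda_k^2\, e^{-\lambda_k/T},\quad I_2:=\sum_k e^{-\lambda_k/T}\|V\phi_k\|_{L^2}^2.
\end{equation*}
The $V$-piece $I_2$ is routine: the 1D Sobolev estimate $\|\phi\|_{L^\infty}^2\leq C\|\phi\|_{L^2}\|\phi\|_{H^1}$ gives $\|V\phi_k\|_{L^2}^2\leq C\|V\|_{L^2}^2\|\phi_k\|_{H^1}$ (using $\|\phi_k\|_{L^2}=1$), and Cauchy--Schwarz on the $k$-sum combined with
\begin{equation*}
\sum_k e^{-\lambda_k/T}\|\phi_k\|_{H^1}^2=\Tr(\varrho)+2\Tr(\sqrt{H_0}\varrho\sqrt{H_0})\leq 2\|\varrho\|_\calE
\end{equation*}
yields $I_2\leq C\|V\|_{L^2}^2\|\varrho\|_\calE$, precisely of the claimed form.

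The main obstacle is the control of $I_1$. The plan is to split the sum according to the sign of $\lambda_k$: the contribution of the finitely many nonpositive eigenvalues is bounded by $M^2\Tr(\varrho)\leq C(1+\|V\|_{L^2}^2)^2\|\varrho\|_\calE$, and for the positive part, the pointwise inequality $\lambda_k^2 e^{-\lambda_k/T}\leq C T^2 e^{-\lambda_k/(2T)}$ reduces the task to estimating $\sum_{\lambda_k>0}e^{-\lambda_k/(2T)}$ in terms of $\|\varrho\|_\calE$. The subtlety is that a naive use of the Weyl-type lower bound $\lambda_k\geq\tfrac12\gamma_k-M$ introduces an exponential factor $e^{M/T}$ that would destroy the polynomial form of the target. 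Instead, I would rely on the dual form comparison $\|\nabla\phi_k\|_{L^2}^2\geq c(\lambda_k-M)$ to trade one factor of $e^{-\lambda_k/(2T)}$ against the weight $\|\nabla\phi_k\|_{L^2}^2$ via Cauchy--Schwarz, thereby relating the half-temperature tail back to $\Tr(\sqrt{H_0}\varrho\sqrt{H_0})\leq\|\varrho\|_\calE$ up to a multiplicative factor depending polynomially on $T$ and $\|V\|_{L^2}^2$. Carrying out this reshuffle delivers $I_1\leq C(1+T^2)\bigl(1+(1+\|V\|_{L^2}^2)\|\varrho\|_\calE\bigr)$, which combined with the bound on $I_2$ completes the proof.
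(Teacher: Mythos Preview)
The paper does not supply its own proof of this lemma; it is quoted from \cite{pinaud2016quantum}. So the only question is whether your outline is internally correct.

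Your decomposition $\Tr(H_0\varrho H_0)\le 2I_1+2I_2$ and the treatment of $I_2$ are fine. The gap is in $I_1=\sum_k\lambda_k^2e^{-\lambda_k/T}$, where the stated bound requires the $V$-dependence to enter only as $(1+\|V\|_{L^2}^2)\|\varrho\|_\calE$. Your argument does not deliver this. For the negative eigenvalues you use $\lambda_k^2\le M^2$ with $M=C(1+\|V\|_{L^2}^2)$, which produces $(1+\|V\|_{L^2}^2)^2\|\varrho\|_\calE$, one power too many. For the positive part, the ``reshuffle'' you describe is not spelled out, and the natural way to execute it --- Cauchy--Schwarz in the form $\sum_{\lambda_k>0}e^{-\lambda_k/(2T)}\le(\sum e^{-\lambda_k/T}(1+\|\nabla\phi_k\|^2))^{1/2}(\sum(1+\|\nabla\phi_k\|^2)^{-1})^{1/2}$ --- makes the second factor depend on the number of indices with $\|\nabla\phi_k\|^2\lesssim M$, which is $O(\sqrt{M})$ by Weyl counting. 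This injects a standalone factor of $(1+\|V\|_{L^2})^{1/2}$ not multiplied by $\|\varrho\|_\calE$, again off from the target.

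This is not cosmetic: in the paper's application (Step~3 of the proof of Theorem~\ref{thm:NRJIncr}), the bound \fref{HrH} feeds into \fref{estnab} and then into a bootstrap for $\|A_\eps\|_{L^2}$. The closing inequality is $\|A_\eps\|_{L^2}\le C(1+(1+\|A_\eps\|_{L^2}^2)^{1/4})$, which relies on $\Tr(H_0\varrho_\eps H_0)$ growing at most like $(1+\|A_\eps\|_{L^2}^2)^\alpha$ with $\alpha<2$. Your bound gives $\alpha=2$, and the bootstrap would read $\|A_\eps\|_{L^2}\le C(1+\|A_\eps\|_{L^2})$, which does not close. So you need a genuinely sharper treatment of $I_1$ --- for instance, controlling $\lambda_k^2$ via $\lambda_k^2\le 2(H_0\phi_k,\phi_k)^2+2(V\phi_k,\phi_k)^2$ and handling the first piece by an operator inequality rather than the crude form bound, or working directly with $H_A e^{-H_A/(2T)}$ in Hilbert--Schmidt norm and commutator estimates.
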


The last lemma is a technical result. 
\begin{lemma}\label{lem:bndvarsigma}
Let $H = H_0 + A$, with $A\in L^{\infty}$ and domain $\H2p$. Denote by $(\lambda_k[H])_{k \in \Nm}$ the nondecreasing sequence of eigenvalues of $H$ counted with multiplicity. Then,
\begin{equation*}
\sum_{m,k = 0}^{+\infty}-\varsigma_{m,k}[H]<+\infty,
\end{equation*}
where
\begin{equation*}
\varsigma_{m,k}[H] = \left\{\begin{array}{ll}
 - e^{- \lambda_k[H]},\,\mbox{ if } \lambda_k[H]= \lambda_m[H],
 \\ \frac{e^{- \lambda_k[H]} - e^{- \lambda_m[H]}}{\lambda_k[H] - \lambda_m[H]},\,\mbox{ if } \lambda_k[H]\neq \lambda_m[H],
\end{array}\right.
\end{equation*}

\end{lemma}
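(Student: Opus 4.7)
The plan is to exploit the integral representation of the divided difference. For any $m,k$, including the diagonal case $\lambda_k[H] = \lambda_m[H]$ (via L'H\^opital, matching the definition of $\varsigma_{k,k}$),
\begin{equation*}
-\varsigma_{m,k}[H] = \int_0^1 e^{-(s \lambda_k[H] + (1-s) \lambda_m[H])}\, ds \geq 0.
\end{equation*}
Since all summands are nonnegative, Fubini--Tonelli lets me freely swap the sum over $(m,k)$ with the integral over $s$. Setting $Z(s) := \sum_{k=0}^{\infty} e^{-s \lambda_k[H]}$, this gives the crucial factorization
\begin{equation*}
\sum_{m,k=0}^{\infty} -\varsigma_{m,k}[H] = \int_0^1 Z(s)\, Z(1-s) \, ds,
\end{equation*}
reducing a two-dimensional sum to a one-dimensional integral.

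The remaining work is to control $Z(s)$ near $s=0$. The min-max principle, exactly as in the argument producing \eqref{eq:bndeigen}, yields $\lambda_k[H] \geq \gamma_k - \|A\|_{L^\infty}$, and the explicit form $\gamma_0 = 0$, $\gamma_{2k} = \gamma_{2k-1} = 2(\pi k)^2$ furnishes a constant $c>0$ with $\gamma_k \geq c k^2$ for all $k$. Hence $\lambda_k[H] \geq c k^2 - C$, and comparing the series to a Gaussian integral,
\begin{equation*}
Z(s) \leq e^{sC} \sum_{k=0}^{\infty} e^{-s c k^2} \leq e^{C}\Bigl(1 + \tfrac{1}{2}\sqrt{\pi/(sc)}\,\Bigr) \leq C_0\bigl(1 + s^{-1/2}\bigr), \qquad s \in (0,1].
\end{equation*}

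Plugging this bound into the factorized formula,
\begin{equation*}
\sum_{m,k=0}^{\infty} -\varsigma_{m,k}[H] \leq C_0^2 \int_0^1 \bigl(1 + s^{-1/2}\bigr)\bigl(1 + (1-s)^{-1/2}\bigr)\, ds < \infty,
\end{equation*}
the last integral being finite by a routine Beta-function estimate. There is no substantive obstacle: the only nonobvious ingredient is the integral representation of the divided difference, which turns the product $e^{-s\lambda_k[H]}e^{-(1-s)\lambda_m[H]}$ into separable variables in $m$ and $k$; after that, everything boils down to the standard heat-trace estimate $Z(s) \lesssim s^{-1/2}$ for the one-dimensional Schr\"odinger operator $H$ on a bounded interval.
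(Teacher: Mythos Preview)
Your proof is correct and is genuinely different from the paper's argument. The paper proceeds by a direct combinatorial splitting: it separates the double sum into the diagonal piece $I_1$, the off-diagonal terms with $\lambda_m[H]=\lambda_k[H]$ (called $I_2$), and the terms with $\lambda_m[H]\neq\lambda_k[H]$ (called $I_3$), then uses the min-max estimate \eqref{eq:minmaxprp} to control the eigenvalue gaps $\lambda_{m+n}[H]-\lambda_m[H]$ from below and bound each piece by hand. This requires tracking multiplicities, identifying a threshold index beyond which eigenvalues are essentially simple, and estimating terms like $\lambda_{m+\ell(m)}[H]/(\lambda_{m+\ell(m)}[H]-\lambda_m[H])$.

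Your approach bypasses all of this bookkeeping via the Hermite--Genocchi representation $-\varsigma_{m,k}=\int_0^1 e^{-s\lambda_k-(1-s)\lambda_m}\,ds$, which makes the summand separable in $m$ and $k$ after Tonelli. The whole lemma then reduces to the heat-trace bound $Z(s)\leq C(1+s^{-1/2})$ and the finiteness of a Beta integral. This is shorter, more conceptual, and generalizes immediately to higher dimensions (where one would get $Z(s)\lesssim s^{-d/2}$ and the integral $\int_0^1 s^{-d/2}(1-s)^{-d/2}\,ds$ remains finite for $d<2$, or with a slight refinement for higher $d$). The paper's approach, by contrast, is more elementary in that it avoids the integral representation, but at the cost of a case analysis that is specific to the one-dimensional spectral asymptotics.
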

\begin{proof}
We first remark that, by symmetry,
\begin{equation*}
\sum_{m,k = 0}^{+\infty}-\varsigma_{m,k}[H] =\sum_{m = 0}^{+\infty}- \varsigma_{m,m}[H] +  2 \sum_{k = 0}^{+\infty} \sum_{m = 0}^{k-1} -\varsigma_{m,k}[H] = : I_1 + I_2+I_3, 
\end{equation*}
where
$$
I_2=\sum_{k = 0}^{+\infty}\sum_{m = 0, \lambda_m[H] = \lambda_k[H]}^{k-1}-\varsigma_{m,k}[H], \qquad I_3=\sum_{k = 0}^{+\infty}\sum_{m = 0, \lambda_{m}[H] \neq \lambda_{k}[H]}^{k-1}-\varsigma_{m,k}[H].
$$
The min-max principle yields
\begin{equation}\label{eq:minmaxprp}
\gamma_m - \|A\|_{L^{\infty}} \leq \lambda_m[H] \leq \gamma_m + \|A\|_{L^{\infty}},
\end{equation}
where the $\gamma_m $ are the eigenvalues of $H_0$, which, counting muliplicities, read $\gamma_0=0$, $\gamma_{2k} =\gamma_{2k-1}= 2 (\pi k)^2$ for $k \geq 1$.
Concerning $I_1$, we then immediately have
\begin{equation*}
I_1 = \sum_{m = 0}^{+\infty} e^{-\lambda_m[H]} < +\infty.
\end{equation*}
We now turn to $I_2$ and $I_3$. From \eqref{eq:minmaxprp} and the expression of $\gamma_m$,  we can see that, for $n \geq 1$ and $m=2p$,
\bee
\lambda_{m+n}[H] - \lambda_{m}[H]&\geq& \gamma_{2p+n}-\gamma_{2p} - 2 \|A\|_{L^\infty}\\
&\geq& 2\pi^2 \left(n p + \frac{n^2}{4} -\|A\|_{L^{\infty}}/\pi^2\right)\\
&\geq &2\pi^2 \left(p + \frac{1}{4} -\|A\|_{L^{\infty}}/\pi^2\right).
\eee
Denoting by $\ell_0$ the integer part of $\|A\|_{L^{\infty}}/(2\pi^2)$, the r.h.s. above is positive provided $p \geq \ell_0$. When $m=2p-1$, a similar estimate as above holds now for $n\geq 2$ due to the multiplicity of $\gamma_m$. Thus, for $k \geq 2(\ell_0+1)$, the indices $m \leq k-1$ such that $\lambda_m[H]=\lambda_k[H]$ are at most $m=k-1$. Hence,
$$
\sum_{k = 2(\ell_0+1) }^{+\infty}\sum_{m = 0, \lambda_m[H] = \lambda_k[H]}^{k-1}-\varsigma_{m,k}[H]  \leq \sum_{k = 2(\ell_0+1) }^{+\infty} - e^{-\lambda_{k-1}[H]} < \infty,
$$
which shows that $I_2$ is finite. Let $N=\min\left\{n\in\mathbb{N}:\, \lambda_{n}[H] \geq 0\right\}$ and $N_0=\max(N,2(\ell_0+1))$. Consider now
$$
I_4=\sum_{k = N_0}^{+\infty}\sum_{m = N_0}^{k-2}-\varsigma_{m,k}[H], \qquad I_5=\sum_{k = N_0}^{+\infty}\sum_{m = 0}^{N_0-1}-\varsigma_{m,k}[H].
$$
For $k>m\geq N$ such that $\lambda_m[H] < \lambda_k[H]$,
\begin{align*}
-\varsigma_{m,k}[H] \leq \frac 1 {\lambda_k[H]}\frac{e^{-\frac{\lambda_m[H]}T}}{1-\lambda_m[H]/\lambda_k[H]}\leq  \frac 1 {\lambda_k[H]} \frac{\lambda_{m+\ell(m)}[H]e^{-\frac{\lambda_m[H]}T}}{\lambda_{m+\ell(m)}[H]-\lambda_{m}[H]},
\end{align*}
where 
\begin{equation*}
\ell(m) = \min\left\{n\in\mathbb{N}: \, \lambda_{m+n}[H]>\lambda_m[H] \right\}.
\end{equation*}
According to the discussion above, $\ell(2p) = 1$ and $\ell(2p+1)=2$ for $p \geq \ell_0$. Hence, 
\bee
I_4&\leq& \sum_{k = N_0}^{+\infty} \frac 1 {\lambda_k[H]} \sum_{m =N_0}^{\infty}\frac{\lambda_{m + \ell(m)}[H]e^{-\frac{\lambda_m[H]}T}}{\lambda_{m+\ell(m)}[H]-\lambda_{m}[H]}<+\infty.
\eee
Finally, 
$$
I_5 \leq \sum_{k=0}^\infty \sum_{m=0}^{N_0-1} \frac{C}{\lambda_{k+N_0}-\lambda_{m}} \leq \sum_{k'=0}^\infty \sum_{k=0}^\infty \frac{C N_0 }{\lambda_{k+N_0}-\lambda_{N_0-1}} \leq \sum_{k=0}^\infty \frac{C}{1+k^2}<\infty,
$$
which shows that $I_5$ is finite and ends the proof.
\end{proof}

{\footnotesize \bibliographystyle{siam}
  \bibliography{bibliography.bib} }

 \end{document}